\begin{document}
\title[Theory and Practice of Logic Programming]
        {$\alpha$Check: a mechanized metatheory model-checker}

\author[James Cheney and Alberto Momigliano]{JAMES
  CHENEY\thanks{Supported by EPSRC grant GR/S63205/01 and a Royal
    Society University Research Fellowship while performing this
    research.
  }\\
  {University of Edinburgh}\\
  \and ALBERTO MOMIGLIANO% \thanks{Supported by the European Project
    % Mobius within the frame of IST 6th Framework.  This article
    % reflects only the authors' views and the Community is not liable
    % for any use that
    % may be made of the information contained therein. }
  \\
  {DI, University of Milan}}

\pagerange{\pageref{firstpage}--\pageref{lastpage}}
\volume{??}
\jdate{????}
\setcounter{page}{1}
\pubyear{??}

\maketitle

\label{firstpage}

\begin{abstract}
  The problem of mechanically formalizing and proving metatheoretic
  properties of programming language calculi, type systems,
  operational semantics, and related formal systems has received
  considerable attention recently.  However, the dual problem of
  searching for errors in such formalizations has attracted
  comparatively little attention.  In this article, we present
  $\alpha$Check, a bounded model-checker for metatheoretic
  properties of formal systems specified using nominal logic.  In
  contrast to the current state of the art for metatheory
  verification, our approach is fully automatic, does not require
  expertise in theorem proving on the part of the user, and produces
  counterexamples in the case that a flaw is detected.  We present two
  implementations of this technique, one based on
  \emph{negation-as-failure} and one based on \emph{negation
    elimination}, along with experimental results showing that these
  techniques are fast enough to be used interactively to debug systems
  as they are developed.\\
\emph{Under consideration for publication in Theory and Practice of Logic Programming (TPLP)}
\end{abstract}

\begin{keywords}
{nominal logic, model checking, counterexample search,
  negation elimination}  
\end{keywords}

\date{\today}
\newcommand{\Comp}[1]{\lbrack\lbrack #1 \rbrack\rbrack}
\newcommand{\apff}[4]{#1;#2 \stackrel{#3}{\too} #4}
\newcommand{\vds}{\vdash}
\newcommand{\tbl}{\caption}
\newcommand{\secl}[1]{\mathit{sec} \ #1}
\newcommand{\siml}[3]{#1 \approx_{#2 l} #3}
\newcommand{\avx}{\forall \vec{X}{:}\vec{\tau}.~}
\newcommand{\vd}{\null\mathrel{\vdash}} %Frank's
\newcommand{\nprv}{\null\mathrel{\not\vdash}} %Frank's
\newcommand{\nvd}{\null\mathrel{\not\vdash}} %Frank's
\newcommand{\mnot}{ not}
\newcommand{\Not}{{Not}}
\newcommand{\vx}{\vec{x}}
\newcommand{\vy}{\vec{y}}
\newcommand{\mdots}{, \dots ,}
\renewcommand{\step}{\leadsto}
\newcommand{\steps}{\leadsto^*}
\newcommand{\vsk}{\\[2ex]}
\newcommand{\fo}{first-order\xspace }
\newcommand{\ho}{higher-order\xspace}
\newcommand{\lp}{logic programming\xspace}
\newcommand{\cN}{{\cal N}}
\newcommand{\cU}{{\cal U}}
\newcommand{\GGa}{\Gamma}
\newcommand{\SLD}{\emph{SLD}\xspace}
\newcommand{\NF}{\emph{NAF}\xspace}
\newcommand{\And}{\wedge}
\newcommand{\fst} {\mathit fst\ }
\newcommand{\snd} {\mathit snd\ }
\newcommand{\ANd}{\bigwedge}
\newcommand{\BAnd}{\bigwedge}
\newcommand{\Imp}{\rightarrow}
\newcommand{\If}{\leftarrow}
\newcommand{\Bimp}{\leftrightarrow}
\newcommand{\blip}{\Rightarrow}
\newcommand{\defp}{\mbox{\textrm{def}}}
\newcommand{\PIXA}{\Pi X\!:\!A.\,}
\newcommand{\eqd}{\stackrel{\cdot}{=}}
\renewcommand{\Pi}{\forall}  %%% all for pi
\newcommand{\eg}{\textit{e.g}.~}
\newcommand{\ie}{\textit{i.e}.~}
\newcommand{\viz}{\textit{viz}.~}
\newcommand{\wrt}{\textit{w.r.t}.~}
\newcommand{\via}{\textit{via}~}
\newcommand{\etc}{\textit{etc}.~}
\newcommand{\cf}{\textit{cf}.~}
\newcommand{\etal}{\textit{et~al}.~}
\newcommand{\aprolog}{{\ensuremath \alpha}{Pro\-log}\xspace}
\newcommand{\acheck}{{\ensuremath \alpha}{Check}\xspace}
\newcommand{\hastype}{\mathrel{:}}
\newcommand{\velim}[3]{#1 \orr #2 ~\backslash~ #3}
\newcommand{\til}{\symbol{126}}
\newcommand{\gen}{{gen}}
\newcommand{\NEMinus}{\emph{NE}$^{-}$\xspace}
\newcommand{\NE}{\emph{NE}\xspace}
\newcommand{\NEs}{\emph{NEs}\xspace}
\newcommand{\unitTy}{\mathbf{1}}
\newcommand{\pair}[1]{\langle #1 \rangle}
\newcommand{\pairTy}[2]{ #1 * #2}
\newcommand{\comp}{\circ}

%\newcommand{\cnstr}{{}}
%% it just bothers me to use nabla, sorry -am
\newcommand{\constr}{{\mathcal K}}
%% changing : into ; --am Jan 2016
\newcommand{\sat}[3]{#1;#2 \models #3}
\newcommand{\satgn}[1]{\Gamma;\constr \models #1}
\newcommand{\apf}[5]{#1;#2;#3 \stackrel{#4}{\too} #5}
\newcommand{\upf}[4]{#1;#2;#3 \Rightarrow #4}
\newcommand{\apfgdn}[2]{\apf{\Gamma}{\Delta}{\constr}{#1}{#2}}
\newcommand{\upfgdn}[1]{\Gamma;\Delta;\constr \Rightarrow #1}
\newcommand{\upfgdnn}[1]{\Gamma;\Delta^{-};\constr \Rightarrow #1}

\newbox\tempa
\newbox\tempb
\newdimen\tempc
\newbox\tempd
\newcommand{\mud}[1]{\hfil $\displaystyle{#1}$\hfil}
\newcommand{\rig}[1]{\hfil $\displaystyle{#1}$}
\newcommand{\ianc}[3]{{\lineskip 4pt\lowerhalf{\inruleanhelp{#1}{#2}{#3}%
                   \box\tempb\hskip\wd\tempd}}}
\newcommand{\lowerhalf}[1]{\hbox{\raise -0.8\baselineskip\hbox{#1}}}
\newcommand{\inruleanhelp}[3]{\setbox\tempa=\hbox{$\displaystyle{\mathstrut #2}$}%
                        \setbox\tempd=\hbox{$\; #3$}%
                        \setbox\tempb=\vbox{\halign{##\cr
        \mud{#1}\cr
        \noalign{\vskip\the\lineskip}%
        \noalign{\hrule height 0pt}%
        \rig{\vbox to 0pt{\vss\hbox to 0pt{\copy\tempd \hss}\vss}}\cr
        \noalign{\hrule}%
        \noalign{\vskip\the\lineskip}%
        \mud{\copy\tempa}\cr}}%
                      \tempc=\wd\tempb
                      \advance\tempc by \wd\tempa
                      \divide\tempc by 2 }

\newcommand{\neqt}{{neq}}
\newcommand{\nfr}{{nfr}}
\newcommand{\mnotd}{\ensuremath{\mathit{not^D}}}
\newcommand{\mnotg}{\ensuremath{\mathit{not^G}}}
\newcommand{\mnotdi}{\ensuremath{\mathit{not^D_i}}}

\newcommand{\abs}[2]{{\ab{#1}{#2}}}
\newcommand{\conc}{\mathop{@}}
\newcommand{\unit}{{\ab{}}}
\newcommand{\ab}[1]{\langle #1 \rangle}

%% Logical connectives
\newcommand{\Andd}{\bigwedge}
\newcommand{\Orr}{\bigvee}
\newcommand{\andd}{\wedge}
\newcommand{\orr}{\vee}
\newcommand{\impp}{\supset}
\newcommand{\nott}{\neg}
\newcommand{\true}{\top}
\newcommand{\false}{\bot}
\newcommand{\eq}{\approx}
\newcommand{\eqt}[1]{\approx_{#1} \!}

%% Miscellaneous 
\newcommand{\ent}{\mathrel{{:}{-}}}
\newcommand{\SB}[1]{[\![#1]\!]}
\newcommand{\idd}{\mathsf{id}}

%% Logic: Rule names
\newcommand{\trueR}{{\top}R}
\newcommand{\falseL}{{\bot}L}
\newcommand{\notL}{{\nott}L}
\newcommand{\notR}{{\nott}R}
\newcommand{\impL}{{\impp}L}
\newcommand{\impR}{{\impp}R}
\newcommand{\andL}{{\andd}L}
\newcommand{\andR}{{\andd}R}
\newcommand{\orL}{{\orr}L}
\newcommand{\orR}{{\orr}R}
\newcommand{\allL}{{\forall}L}
\newcommand{\allR}{{\forall}R}
\newcommand{\exL}{{\exists}L}
\newcommand{\exR}{{\exists}R}
\newcommand{\defL}{{def}L}
\newcommand{\defR}{{def}R}
\newcommand{\hyp}[2][]{\infer[#1]{#2}{}} 
\newcommand{\hypd}[2][]{\infer*[#1]{#2}{}}
\newcommand{\newL}{{\new}L}
\newcommand{\newR}{{\new}R}

\newcommand{\new}[1][]{\reflectbox{\sf{#1{}N}}}
\newcommand{\smallnew}{\new[\scriptsize]}
\newcommand{\fresh}{\mathrel{\#}}
\newcommand{\fresht}[1]{\mathrel{\#}_{#1}}
\newcommand{\Aa}{\name{a}}
\newcommand{\Ab}{\name{b}}
\newcommand{\Ac}{\name{c}}
\newcommand{\Ad}{\name{d}}
\newcommand{\Ax}{\name{x}}
\newcommand{\Ay}{\name{y}}
\newcommand{\Aw}{\name{w}}
\newcommand{\Az}{\name{z}}
\newcommand{\BA}{\mathbb{A}}
\newcommand{\name}[1]{\mathsf{#1}}
\newcommand{\too}{\longrightarrow}
\newcommand{\act}{\cdot}
\newcommand{\tran}[2]{(#1~#2)}
\newcommand{\lprolog}{{\ensuremath \lam}{Pro\-log}\xspace}
\newcommand{\lam}{\lambda}
\newcommand{\swap}[3]{(#1~#2)\act#3}

\newcommand{\labelFig}[1]{\label{fig:#1}}
\newcommand{\refFig}[1]{Figure~\ref{fig:#1}}
\newcommand{\labelSec}[1]{\label{sec:#1}}
\newcommand{\refSec}[1]{Section~\ref{sec:#1}}

\newcommand{\PI}[2]{\forall {#1}\!:\!{#2}.\,}

\newcommand{\Or}{\vee}
\newcommand{\IP}{{\cal F}^+}
\newcommand{\SP}{{\cal S}_{+}}
\newcommand{\IM}{{\cal F}^-}
\newcommand{\SM}{{\cal S}_{-}}
\newcommand{\itc}{\item[\textbf{Case:}]}
\newcommand{\nitc}{\item[] \textbf{No case for }}
\newcommand{\subcase}{\mbox{\emph{Subcase\/:}}}
\newcommand{\btab}{\begin{tabbing}}
\newcommand{\etab}{\end{tabbing}}
\newcommand{\bv}{\` By inversion\\}
\newcommand{\bsd}{\` By sub-derivation\\}
\newcommand{\bi}{\` By IH }
\newcommand{\bh}{\` By hypothesis\\}
\newcommand{\ba}{\` By assumption\\}
\newcommand{\bsu}{\` By subst.}
\newcommand{\br}{\` By rule }
\newcommand{\bd}{\` By definition }
\newcommand{\bl}[1]{\` By Lemma~\ref{#1} \\}
\newcommand{\bru}[1]{\` By Lemma $#1$ \\}
\newcommand{\vN}{\vec{N}}
\newcommand{\vM}{\vec{M}}
\newcommand{\bit}{\begin{itemize}}
\newcommand{\enit}{\end{itemize}}
\newcommand{\nn}[2]{(\BAnd_{x\in\dom(\GG)} \mnotr(q\ #1)\And (\BAnd_{1\leq
  i\leq n,xR^{i}q}\mnotxi(q\ #1))) \And \Pi( \neg q\ #1)\If #2}
\newcommand{\ga}{}
\newcommand{\gna}{ }
\newcommand{\q}{}
\newcommand{\s}{[\sigma]}
\newcommand{\PP}{\cP}
\newcommand{\cD}{{\cal D}}
\newcommand{\atm}{\mbox{\textrm{At}}}
\newcommand{\PM}{\cP}
\newcommand{\ex}{e_x}
\newcommand{\vnx}{ \vec{N_{\ex}^i}}
\newcommand{\vS}{\vec{S_n}}%
\newcommand{\sn}{\vd_{\cP}}
\newcommand{\spp}{\vd_{\cP}}
\newcommand{\snp}{\vd_{\cP}\mnotg}
\newcommand{\exa}[1]{\exists X{:}\tau.\ #1}
\newcommand{\cP}{{\cal P}}

%%%%%%%%%%%%%%%%%%%%%%%%%%%%%%%%%%%%%%%%%%%%%%%%%%%%%%%%%%%%%%%%%%%%%%
% Typesetting Deductions
%%%%%%%%%%%%%%%%%%%%%%%%%%%%%%%%%%%%%%%%%%%%%%%%%%%%%%%%%%%%%%%%%%%%%%

\newbox\tempa
\newbox\tempb
\newdimen\tempc
\newbox\tempd

\def\mud#1{\hfil $\displaystyle{#1}$\hfil}
\def\rig#1{\hfil $\displaystyle{#1}$}

\def\inruleanhelp#1#2#3{\setbox\tempa=\hbox{$\displaystyle{\mathstrut #2}$}%
                        \setbox\tempd=\hbox{$\; #3$}%
                        \setbox\tempb=\vbox{\halign{##\cr
        \mud{#1}\cr
        \noalign{\vskip\the\lineskip}%
        \noalign{\hrule height 0pt}%
        \rig{\vbox to 0pt{\vss\hbox to 0pt{\copy\tempd \hss}\vss}}\cr
        \noalign{\hrule}%
        \noalign{\vskip\the\lineskip}%
        \mud{\copy\tempa}\cr}}%
                      \tempc=\wd\tempb
                      \advance\tempc by \wd\tempa
                      \divide\tempc by 2 }

\def\inrulean#1#2#3{{\inruleanhelp{#1}{#2}{#3}%
                     \hbox to \wd\tempa{\hss \box\tempb \hss}}}
\def\inrulebn#1#2#3#4{\inrulean{#1\quad\qquad #2}{#3}{#4}}

\def\ian#1#2#3{{\lineskip 4pt\inrulean{#1}{#2}{#3}}}
\def\ibn#1#2#3#4{{\lineskip 4pt\inrulebn{#1}{#2}{#3}{#4}}}

\def\lowerhalf#1{\hbox{\raise -0.8\baselineskip\hbox{#1}}}

\def\ianc#1#2#3{{\lineskip 4pt\lowerhalf{\inruleanhelp{#1}{#2}{#3}%
                   \box\tempb\hskip\wd\tempd}}}
\def\ibnc#1#2#3#4{{\lineskip 4pt\ianc{#1\quad\qquad #2}{#3}{#4}}}

\def\rulespacing{\renewcommand{\arraystretch}{3} \arraycolsep 5em}
\def\rulestretch{\renewcommand{\arraystretch}{3}}

\def\above#1#2{\begin{array}[b]{c}\relax #1\\ \relax #2\end{array}}
\def\abovec#1#2{\begin{array}{c}\relax #1\\ \relax #2\end{array}}

\def\cian#1#2#3{\ctr{\ianc{#1}{#2}{#3}}}
\def\cibn#1#2#3#4{\ctr{\ibnc{#1}{#2}{#3}{#4}}}

\def\hypo#1#2{\begin{array}[b]{c}\relax #1\\ \vdots \\ \relax #2\end{array}}
\def\hypoc#1#2{\begin{array}{c}\relax #1\\ \vdots \\ \relax #2\end{array}}
\def\hypol#1#2#3{\begin{array}[b]{c}\relax #1\\ #2 \\ \relax #3\end{array}}
\def\hypolc#1#2#3{\begin{array}{c}\relax #1\\ #2 \\ \relax #3\end{array}}

\def\ctr#1{\begin{array}{c} #1\end{array}}

%%%%%%%%%%%%%%%%%%%%%%%%%%%%%%%%%%%%%%%%%%%%%%%%%%%%%%%%%%%%%%%%%%%%%%
% Editorials
%%%%%%%%%%%%%%%%%%%%%%%%%%%%%%%%%%%%%%%%%%%%%%%%%%%%%%%%%%%%%%%%%%%%%%

\long\def\ednote#1{\footnote{[{\it #1\/}]}\message{ednote!}}
%\long\def\ednote#1{\begin{quote}[{\it #1\/}]\end{quote}\message{note!}}
\newenvironment{metanote}{\begin{quote}\message{note!}[\begingroup\it}%
                         {\endgroup]\end{quote}}
\long\def\ignore#1{}

\newcommand{\todo}[1]{\begin{metanote}TODO: #1\end{metanote}}

\newcommand{\Term}[3]{\mathsf{T}^{#1}_{#2}\SB{#3}}
\newcommand{\Goal}[2]{\mathsf{G}^{#1}_{#2}}
\newcommand{\Def}[2]{\mathsf{D}^{#1}_{#2}}

\def\bnfas{\mathrel{::=}} \def\bnfalt{\mid}

%\input{nominal}
%\bibliographystyle{abbrv}

%%% Local Variables: 
%%% mode: latex
%%% TeX-master: "paper"
%%% End: 

\section{Introduction}

Much of modern programming languages research is founded on proving
properties of interest by syntactic methods, such as cut elimination,
strong normalization, or type soundness theorems~\cite{pierce02types}.
Convincing syntactic proofs are challenging to perform on paper for
several reasons, including the presence of variable binding,
substitution, and associated equational theories (such as
$\alpha$-equivalence in the $\lambda$-calculus and structural
congruences in process calculi), the need to perform reasoning by
simultaneous or well-founded induction on multiple terms or
derivations, and the often large number of cases that must be
considered.  Paper proofs are believed to be unreliable due in part to
the fact that they usually sketch only the essential part of the
argument, while leaving out verification of the many subsidiary lemmas
and side-conditions needed to ensure that all of the proof steps are
correct and that all cases have been considered.

A great deal of attention, reinvigorated by the POPLMark
Challenge~\cite{poplmark05tphol}, has been focused on the problem of
\emph{metatheory mechanization}, that is, formally verifying such
properties using computational tools.  Formal, machine-checkable proof
is widely agreed to provide the highest possible standard of evidence
for believing such a system is correct.  However, all %extant
theorem proving/proof assistant systems that have been employed in metatheory
verification (to name a few Twelf, Coq,
Isabelle/HOL, HOL, Abella, Beluga) have steep learning curves; using them to verify
the properties of a nontrivial system requires a significant effort
even after the learning curve has been surmounted, because inductive
theorem-proving is currently a brain-bound, not CPU-bound, process.
Moreover, verification attempts provide little assistance in the case
of an incorrect system, even though this is the common case during the
development of such a system.  Verification attempts can flounder due to
either flaws in the system, mistakes on the user's part, or the need
for new representations or proof techniques compatible with mechanized
metatheory tools.  Determining which of these is the case (and how
best to proceed) is part of the arduous process of becoming a power
user of a theorem-proving system.

These observations about formal verification are not new.  They have
long been used to motivate \emph{model-checking}~\cite{clarke00}.  In
model-checking, the user specifies the system and describes properties
which it should satisfy; it is the computer's job to search for
counterexamples or to determine that none exist.  Although it was
practical only for small finite-state systems when first proposed more
than 30 years ago, improved techniques for searching the state space
efficiently (such as \emph{symbolic model checking}) % using Boolean
% decision diagrams
have now made it feasible to
verify industrial hardware designs.  As a result, model checking has
gained widespread acceptance in industry.

We argue that mechanically verified proof is neither the only nor
always the most appropriate way of gaining confidence in the
correctness of a formal system; moreover, it is almost never the most
appropriate way to \emph{debug} such a system, especially in early
stages of development.  This is certainly the case in the area of
hardware verification, where model-checking has surpassed
theorem-proving in industrial acceptance and applicability.  For
finite systems such as hardware designs, model checking is, in
principle, able to either guarantee that the design is correct, or
produce a concrete counterexample.  Model-checking tools that are
fully automatic can often leverage hardware advances more readily than
interactive theorem provers that require human guidance.
Model-checkers do not generally require as much expertise as theorem
provers; once the model specification and formula languages have been
learned, an engineer can formalize a design, specify desired
properties, and let the system do the work.  Researchers can (and
have) focused on the orthogonal issue of representing and exploring
the state space efficiently so that the answer is produced as quickly
as possible.  This separation of concerns has catalyzed great progress
towards adoption of model-checking for real-world verification
problems.

We advocate \emph{mechanized metatheory model-checking} as a useful
complement to established theorem-proving techniques for analyzing
programming languages and related systems.  Of course, such systems
are usually infinite-state, so they cannot necessarily be verified
through brute-force search techniques, but we can at least automate
the search for counterexamples over bounded, but arbitrarily large,
subsets of the search space.  Such bounded model checking (failure to
find a simple counterexample) provides a degree of confidence that a
design is correct, albeit not as much confidence as full verification.
Nevertheless, this approach shares other advantages of model-checking:
it is CPU-bound, not brain-bound; it separates high-level
specification concerns from low-level implementation issues; and it
provides explicit counterexamples.  Thus, bounded model checking is
likely to be more helpful than verification typically is during the
development of a system.

%\begin{changebar}
  In this article we describe $\alpha$Check, a tool for checking
  desired properties of formal systems implemented in \aprolog, a
  nominal logic programming language.  Nominal logic programming
  combines the \emph{nominal terms} and associated unification
  algorithm introduced by \citeN{urban04tcs} with \emph{nominal logic}
  as explored by \citeN{pitts03ic}, \citeN{gabbay04lics} and
  \citeN{cheney16jlc}.  In \aprolog, many object languages can be
  specified using Horn clauses over abstract syntax trees with
  ``concrete'' names and binding modulo
  $\alpha$-equivalence~\cite{cheney08toplas}.
% \end{changebar}

Roughly, the idea is to test properties/specifications of the form
$H_1 \andd \cdots \andd H_n \impp A$ by searching \emph{exhaustively}
(up to a bound) for a substitution $\theta$ such that
$\theta(H_1),\ldots,\theta(H_n)$ all hold but the conclusion
$\theta(A)$ does not. Since we live in a \lp world, the choice of what
we mean by ``not holding'' is crucial, as we must choose an
appropriate notion of \emph{negation}. We explore two approaches,
starting with the standard \emph{negation-as-failure} rule, known as
\NF (Section~\ref{sec:implementation}). This choice inherits many of
the positive characteristics of \NF, e.g.\ its implementation being
simple and quite effective. However, it does not escape the
traditional problems associated with an operational notion of
negation,
% rather than an honest-to-goodness negation operator
such as the need for full instantiation of all free variables before
solving the negated conclusion and the presence of several competing
semantics (three-valued completion, stable semantics
etc.~\cite{APT19949}).  The latter concern is significant because the
semantics of negation as failure has not yet been investigated for
nominal logic programming. As a radical solution to this impasse, we
therefore adopt the technique of \emph{negation elimination}, abridged
as \NE~\cite{Bar90,Momigliano00}, a source-to-source transformation
replacing negated subgoals with calls to equivalent positively defined
predicates (Section~\ref{sec:neg-el}). In this way the resulting
program is a \emph{negation-free} \aprolog program, possibly with a new form
of universal quantification, which we call \emph{extensional}.  The net
results brought by  the disappearance of the issue of negation  are the
avoidance of  the expensive
term generation step needed to ground free variables, the recovery of
a clean proof-theoretic semantics and the possibility of
optimization of properties by goal reordering.

%\todo{More about negation as failure / negation elimination -- DONE}

%  We explore techniques based on both
% negation-as-failure and negation elimination, along with encouraging,
% though preliminary, experimental results.  

%Similar
%approaches may also be possible in other settings such as
%\lprolog~\cite{nadathur98higher}, Twelf~\cite{pfenning99system},
%Delphin~\cite{Schurmann04}, FreshML~\cite{shinwell03icfp},
%$\alpha$Caml~\cite{pottier05ml}, using
%QuickCheck~\cite{claessen00icfp} and FreshLib~\cite{cheney05icfp} in
%Haskell, using nominal rewriting~\cite{fernandez04ppdp,fernandez05ppdp} in a
%rewriting-based system such as Maude~\cite{Maude}, or perhaps by
%translating bounded nominal logic model checking problems to forms
%suitable for general-purpose model checking tools such as
%Alloy~\cite{Alloy}.  However, 

%\todo{Summarize contributions and results -- DONE, I think -am}
We maintain that our tool
%% is the first\ednote{to be reprhased} to 
helps to find bugs in high-level specifications of programming
languages and other calculi \emph{automatically} and
\emph{effectively} (Section~\ref{ssec:tut}). The beauty of metatheory
model checking is that, compared to other general forms of system
validation, the properties that should hold are already given to the
user/tester by means of the theorems that the calculus under study is
supposed to satisfy; of course, those need to be fine tuned for
testing to be effective, but we are mostly free of the thorny issue of
specification/invariant generation.
 
Our experience (Section~\ref{sec:experiments}) has been that while
brute-force testing cannot yet find ``deep'' problems (such as the
well-known unsoundness in old versions of ML involving polymorphism
and references) by itself, it is extremely useful for eliminating
``shallow'' bugs such as typographical errors that are otherwise
time-consuming and tedious to eliminate. This applies in particular to
\emph{regression} testing of specifications.

% adding contributions as requested

To sum up, the contributions of this paper are:
\begin{itemize}
\item the presentation of the idea of {metatheory model-checking}, as
  a complementary approach to the formal verification of properties of
  formal systems;
\item the adaptation of negation elimination to a fragment of nominal
  logic programming, endowing \aprolog with a sound and
  declarative notion of negation;
\item the description of the $\acheck$ tool;
\item an extensive set of experiments that show that the tool has
  encouraging performance and is immediately useful in the validation
  of the encoding of formal systems.
\end{itemize}

This paper is a major extension of our
previous work~\cite{CheneyM07}, where
we give full details about the correctness of the approach, we
significantly enlarge the set of experiments % , including some taken from
% \cite{Klein12}, 
and we give an extensive review of related work, which has notably
expanded since the initial conference publication. In fact, the idea
of using testing and counter-model generation alongside formal
metatheory verification has, in the past few years, gone mainstream;
this happened mainly by importing the idea of \emph{property-based
  testing} pioneered by the QuickCheck system~\cite{claessen00icfp}
into environments for the specification of programming languages,
e.g., \emph{PLT-Redex}~\cite{PLTbook}, or outright proof assistants
such as Isabelle/HOL~\cite{BlanchetteBN11} and Coq~\cite{QChick}.  Our
approach helped inspire some of these techniques, and remains
complementary to most of them; we refer to Section~\ref{sec:related}
for a detailed comparison.

% \begin{itemize}
% \item we\ednote{I guess we won't do that after all -am} apply negation elimination without the
% restriction used in the previous paper, namely one where
% $\new$-quantification and names may only be used in goal formulas;
% \item we give full details about the correctness of the approach;
% \item we enlarge the set of experiments, including some taken from
%   \cite{Klein12}.
% \end{itemize}

\smallskip
The structure of the remainder of the article is as follows.
Following a brief introduction to \aprolog, Section~\ref{sec:tutorial}
presents $\alpha$Check at an informal, tutorial level.
Section~\ref{sec:formal} introduces the syntax and semantics of a core
language for \aprolog, which we shall use in the rest of the article.
Section~\ref{sec:implementation} discusses a simple implementation of
metatheory model-checking in \aprolog based on {negation-as-failure}.
Section~\ref{sec:neg-el} defines a negation elimination procedure for
\aprolog, including % a proof-theory for
{extensional} universal
quantification. %including a discussion of the modifications
% needed to adapt existing negation elimination algorithms to \aprolog
 Section~\ref{sec:experiments} presents experimental results that
 show the feasibility and usefulness of metatheory model checking. %  and indicate that
 % negation elimination may offer performance improvements over the
 % other approaches we considered
 Sections~\ref{sec:related}
 and~\ref{sec:concl} discuss related and future work and
 conclude. Detailed proofs can be found in~\ref{app:exclu},
 whereas~\ref{app:code} contains the debugged code of the example in
 Section~\ref{ssec:tut}.

\section{Tutorial example}\label{sec:tutorial}

\subsection{\aprolog background}

We will specify the formal systems whose properties we wish to check,
as well as the properties themselves, as Horn clause logic programs in
\aprolog~\cite{cheney08toplas}.  
%\begin{changebar}
\aprolog is a logic programming
language based on \emph{nominal logic} and using \emph{nominal terms}
and their associated unification algorithm for resolution, just as
Prolog is based on first-order logic and uses first-order terms and
unification for resolution.
% \end{changebar}
Unlike
ordinary Prolog, \aprolog is typed; all constants, function symbols,
and predicate symbols must be declared explicitly.  We provide a brief
review in this section and a more detailed discussion of a monomorphic
core language for \aprolog in Section~\ref{sec:formal}; many more
details, including examples illustrating how to map conventional
notation for inference rules to \aprolog and a detailed semantics, can be found
in~\citeN{cheney08toplas}.  
%\begin{changebar}
  We provide further discussion of related work on nominal techniques
  in Section~\ref{sec:related}.
% \end{changebar}

In \aprolog, there are several built-in types, functions and relations
with special behavior.  There are distinguished \emph{name types}
that are populated with infinitely many \emph{name constants}.
In program text, a name constant is generally a lower-case symbol that
has not been declared as something else (such as a predicate or
function symbol).  Names can be used in \emph{abstractions}, written
\verb|a\M| in programs.  Abstractions are considered equal up to
$\alpha$-renaming of the bound name for the purposes of unification in
\aprolog.  Thus, where one writes $\lambda x. M$, $\nu x. M$, etc.\ in a
paper exposition, in \aprolog one writes \verb|lam(x\M)|,
\verb|nu(x\M)|, etc.  In addition, the \emph{freshness} relation
\verb$a # t$ holds between a name \verb$a$ and a term \verb$t$ that
does not contain a free occurrence of \verb$a$.  Thus, where one would
write $x \not\in FV(t)$ in a paper exposition, in \aprolog one writes
\verb|x # t|.  

Horn clause logic programs over these operations suffice to define a
wide variety of core languages, type systems, and operational
semantics in a convenient way.  Moreover, Horn clauses can also be
used as specifications of desired program properties, including basic
lemmas concerning substitution as well as main theorems such as
preservation, progress, and type soundness.  We therefore consider the
problem of checking \emph{specifications}
\begin{verbatim}
#check "spec" n : H1, ..., Hn => A.
\end{verbatim}
where \verb|spec| is a label naming the property, \verb|n| is a
parameter that bounds the search space, and \verb$H1$ through
\verb$Hn$ and \verb$A$ are atomic formulas describing the
preconditions and conclusion of the property.  As with program
clauses, the specification formula is implicitly universally
quantified.  As a simple, running example, we consider the
lambda-calculus with pairs,  together with
appropriate specifications of properties that one usually wishes to
verify.  
%\begin{changebar}
The abstract syntax, substitution, static and dynamic semantics for
this language are shown in \refFig{static}, and the \aprolog encoding
of the syntax of this language is shown in the first part of
\refFig{tm-subst}.
% \end{changebar}

\begin{figure}[ht]\centering
\[
\begin{array}{llcl}
  \mbox{Types}& A,B & \bnfas & \unitTy \mid \pairTy A B \mid A \rightarrow B  \\
  \mbox{Terms} & M & \bnfas &  x \mid \unit \mid \lambda {x}.\ {M} \mid
                              {M_1}\ {M_2} \mid \pair {M_1 , M_2} \mid \fst M\mid \snd M
  \\
  \mbox{Values} & V & \bnfas & \unit  \mid \lambda {x}.\ {M} \mid \pair {V_1 \ V_2}\\
 \mbox{Contexts} & \Gamma & \bnfas & \cdot \mid \Gamma,x\hastype A\\
\end{array}
\]
\[
\begin{array}{rcll}
\unit\{M/x\} &=& \unit\\
x\{M/x\} &=& M\\
y\{M/x\} &=& y & (x \neq y)\\
(M_1~M_2)\{M/x\} &=& M_1\{M/x\}~M_2\{M/x\}\\
\pair{M_1,M_2}\{M/x\} &=& \pair{M_1\{M/x\},M_2\{M/x\}}\\
(\fst M')\{M/x\} &=& \fst (M'\{M/x\})\\
(\snd M')\{M/x\} &=& \snd (M'\{M/x\})\\
(\lambda y.M')\{M/x\} &=& \lambda y.M'\{M/x\} & (y \not\in FV(x,M))\\
\end{array}
\]
  \[
  \begin{array}{c}
    \infer[\texttt{T-1}]
    {\vds \unit \hastype \unitTy}
    {}
    \qquad
    \infer[\texttt{T-VAR}]
    {\Gamma \vds x \hastype A}
    {x \hastype A \in \Gamma}
    \qquad
    \infer[\texttt{T-ABS}]
    {\Gamma \vds \lambda x\hastype{A}.~M \hastype A \rightarrow B}
    {x \not\in \Gamma & \Gamma,x\hastype A \vds M \hastype B}
                        \medskip\\

    \infer[\texttt{T-PAIR}]
    {\Gamma \vds \pair{M_1,M_2} \hastype \pairTy {A_1} {A_2}}
    {\Gamma \vds M_1 \hastype A_1  & \Gamma \vds M_2 \hastype A_2}
                                     \qquad 

                                     \infer[\texttt{T-APP}]
                                     {\Gamma \vds M_1~M_2 \hastype B}
                                     {\Gamma \vds M_1 \hastype A \rightarrow B & \Gamma \vds M_2 \hastype A}\medskip\\
    \infer[\texttt{T-FST}]
    {\Gamma \vds \fst M \hastype A_1}
    {\Gamma \vds M \hastype \pairTy {A_1} {A_2}}
                                     \qquad 
    \infer[\texttt{T-SND}]
    {\Gamma \vds \snd M \hastype A_2}
    {\Gamma \vds M \hastype \pairTy {A_1} {A_2}}
\bigskip\\

  \infer[\texttt{E-ABS}]{\lambda x\hastype{A}.~M~V \step  M\{V/x\}}{}
  \medskip\\
  \infer[\texttt{E-APP1}]{M_1~M_2 \step M_1'~M_2}{M_1 \step M_1'}
  \qquad 
  \infer[\texttt{E-APP2}]{V~M \step V~M'}{M \step M'}
  \medskip\\
  \infer[\texttt{E-PAIR1}]{\pair{M_1,M_2} \step \pair{M_1',M_2}}{M_1 \step M_1'}
  \qquad 
  \infer[\texttt{E-PAIR2}]{\pair{V,M} \step {V,M'}}{M \step M'}
  \medskip\\
  \infer[\texttt{E-FST}]{\fst{M} \step \fst{M'}}{M \step M'}
  \qquad 
  \infer[\texttt{E-SND}]{\snd{M} \step \snd{M'}}{M \step M'}
  \medskip\\
  \infer[\texttt{E-FP}]{\fst\pair{V_1,V_2} \step {V_1}}{}
  \qquad 
  \infer[\texttt{E-SP}]{\snd\pair{V_1,V_2} \step {V_2}}{}
  \end{array}
  \]
  
  \caption{Static and dynamic semantics of the $\lambda$-calculus with pairs}
  \label{fig:static}
\end{figure}

\paragraph*{Terms and substitution}
In contrast to other techniques such as higher-order abstract syntax,
there is no built-in substitution operation in \aprolog, so we must
define it explicitly.  Nevertheless, substitution can be defined
declaratively, see \refFig{tm-subst}.  For convenience, \aprolog
provides a function-definition syntax, but this is simply syntactic
sugar for its relational implementation.
%\emph{flattening} 
% --- more sophisticated functional logic programming
% techniques, such as \emph{narrowing}~\cite{hanus94jlp} could be used,
% but this would require the development of novel nominal equational
% unification algorithms
Most cases are
straightforward; the cases for variables and lambda-abstraction both
use freshness subgoals to check that variables are distinct or do not
appear fresh in other expressions.  Despite these side-conditions,
substitution is a total function on terms quotiented by
$\alpha$-equivalence; see~\citeN{gabbay11bsl} and~\citeN{pitts13}
for more details.

\begin{figure}[tb]
\small
\begin{verbatim}
id : name_type.    
tm : type.    
ty : type.

var  : id -> tm.           
unit : tm.
app  : (tm,tm) -> tm.      
lam  : id\tm -> tm.
pair : (tm,tm) -> tm.
fst  : tm -> tm.           
snd  : tm -> tm.

func sub(tm,id,tm)    = tm.
sub(var(X),X,N)       = N.
sub(var(X),Y,N)       = var(Y) :- X # Y.
sub(app(M1,M2),Y,N)   = app(sub(M1,Y,N),sub(M2,Y,N)).
sub(lam(x\M),Y,N)     = lam(x\sub(M,Y,N)) :- x # (Y,N).
sub(unit,Y,N)         = unit.
sub(pair(M1,M2),Y,N)  = pair(sub(M1,Y,N),sub(M1,Y,N)).
sub(fst(M),Y,N)       = fst(sub(M,Y,M)).
sub(fst(M),Y,N)       = snd(sub(M,Y,N)).

#check "sub_fun"   5 :  sub(M,x,N) = M1, sub(M,x,N) = M2 => M1 = M2.
#check "sub_id"    5 :  sub(M,x,var(x)) = M.
#check "sub_fresh" 5 :  x # M => sub(M,x,N) = M.
#check "sub_sub"   5 :  x # N' 
                     => sub(sub(M,x,N),y,N') = sub(sub(M,y,N'),x,sub(N,y,N')).
\end{verbatim}
\caption{\aprolog specification of the $\lambda$-calculus: Terms and substitution}\labelFig{tm-subst}
\end{figure}

After the definition of the \verb|sub| function, we have added some
directives that state desired properties of substitution that we wish
to check.  First, the \verb"sub_fun" property states that the result
of substitution is uniquely defined.  Since \verb$sub$ is internally
translated to a relation in the current implementation, this is
not immediate, so it should be checked.  Second, \verb"sub_id" checks that
substituting a variable with itself has no effect.  The
\verb"sub_fresh" property is the familiar lemma that substituting has
no effect if the variable is not present in $M$; the last property
\verb"sub_sub" is a standard substitution commutation lemma.

\begin{figure}[tb]
\small
\begin{verbatim}
unitTy : ty.                    
==>    : ty -> ty -> ty.         infixr ==> 5.
**     : ty -> ty -> ty.         infixl ** 6.

type ctx = [(id,ty)].

pred wf_ctx(ctx).
wf_ctx([]).
wf_ctx([(X,T)|G]) :- X # G, wf_ctx(G).

pred tc(ctx,tm,ty).
tc([(V,T)|G],var(V), T).
tc(G,lam(x\E),T1 ==> T2) :- x # G, tc ([(x,T1)|G], E, T2).
tc(G,app(M,N),T)         :- tc(G,M,T ==> T0), 
                            tc(G,N,T0).
tc(G,pair(M,N),T1 ** T2) :- tc(G,M,T1), tc(G,N,T2).
tc(G,fst(M),T1)          :- tc(G,M,T1 ** T2).
tc(G,snd(M),T1)          :- tc(G,M,T1 ** T2).
tc(G,unit,unitTy).

#check "tc_weak" 5 :  x # G, tc(G,E,T), wf_ctx(G) => tc([(x,T')|G],E,T).
#check "tc_sub"  5 :  x # G, tc(G,E,T), tc([(x,T)|G],E',T'), wf_ctx(G) 
                   => tc(G,sub(E',x,E),T').
\end{verbatim}
\caption{\aprolog specification of the $\lambda$-calculus: Types, contexts, and well-formedness}\labelFig{ty-wf}
\end{figure}

\paragraph*{Types and typechecking}
Next we turn to types and typechecking, shown in \refFig{ty-wf}.  We introduce
constructors for simple types, namely unit, pairing, and function
types.  The typechecking judgment is standard.  In addition, we check
some standard properties of typechecking, including weakening
(\verb|tc_weak|) and the substitution lemma (\verb|tc_sub|).  Note
that since we are merely specifying, not proving, the substitution
lemma, we do not have to state its general form.  However, since
contexts are encoded as lists of pairs of variables and types, to
avoid false positives, we do
have to explicitly define what it means for a context to be
well-formed: contexts must \emph{not} contain multiple bindings for the same
variable.  This is specified using the \verb|wf_ctx| predicate.

\begin{figure}[tb]
\small
\begin{verbatim}
pred value(tm).
value(lam(_)).
value(unit).
value(pair(V,W)) :- value(V),value(W).

pred step(tm,tm).
step(app(lam(x\M),N),sub(N,x,M))  :- value(N).
step(app(M,N),app(M',N))          :- step(M,M').
step(app(V,N),app(V,N'))          :- value(V), step(N,N').
step(pair(M,N),pair(M',N))        :- step(M,M').
step(pair(V,N),pair(V,N'))        :- value(V), step(N,N').
step(fst(M),fst(M'))              :- step(M,M').
step(fst(pair(V1,V2)),V1)         :- value(V1), value(V2).
step(snd(M),snd(M'))              :- step(M,M').
step(snd(pair(V1,V2)),V2)         :- value(V1), value(V2).

pred progress(tm).
progress(V) :- value(V).
progress(M) :- step(M,_).

pred steps(exp,exp).
steps(M,M).
steps(M,P) :- step(M,N), steps(N,P).

#check "tc_pres" 5  :  tc([],M,T), step(M,M') => tc([],M',T).
#check "tc_prog" 5  :  tc([],E,T) => progress(E).
#check "tc_sound" 5 :  tc([],E,T), steps(E,E')  => tc([],E',T).
\end{verbatim}
  \caption{\aprolog specification of the $\lambda$-calculus: Reduction, type preservation, progress, and
    soundness}\labelFig{red-pres}
\end{figure}

\paragraph*{Evaluation and soundness}
Now we arrive at the main point of this example, namely defining the
operational semantics and checking that the type system is sound with
respect to it, shown in \refFig{red-pres}.  We first define values, introduce
one-step and multi-step call-by-value reduction relations, define the
\verb"progress" relation indicating that a term is not stuck, and
specify type preservation (\verb|tc_pres|), progress (\verb|tc_prog|),
and soundness (\verb|tc_sound|) properties.

\subsection{Specification checking}
\label{ssec:tut}
The alert reader may have noticed several errors in the programs in
\refFig{tm-subst} to \refFig{red-pres}.  In fact, \emph{every}
specification we have ascribed to it is violated.  Some of the bugs
were introduced deliberately, others were discovered while debugging
the specification using an early version of the  tool.  Before
proceeding, the reader may wish to try to find all of these
errors. The collected debugged code can be found in~\ref{app:code}.

We now describe the results of a run of $\alpha$Check on the above
program, using the negation-as-failure back end.\footnote{Negation
  elimination finds somewhat different counter-examples, as we discuss
  in
  Section~\ref{sec:experiments}.} % , along the way giving intuition concerning the counterexample
Complete source code for \aprolog and running instructions for these
examples can be found at
\url{http://github.com/aprolog-lang/}.
% search strategy.  
% The checker is invoked at the command line as follows:
% \begin{verbatim}
% aprolog -check-nf lam.apl lam-bug.apl lam-check.apl 
% \end{verbatim}
% where \verb|lam.apl|, for convenience only, contains the typing
% declaration from Fig.~\ref{fig:tm-subst} and~\ref{fig:ty-wf},
% \verb|lam-bug.apl| the potentially buggy specifications for
% substitution, typing and evaluation and \verb|lam-check.apl| all the
% \verb|#check| directives, which are assumed to come last.

First, consider the substitution specifications.  $\alpha$Check
produces the following (slightly sanitized) output for the first one:

\begin{small}
\begin{verbatim}
Checking for counterexamples to
sub_fun: sub(M,x,N) = M1, sub(M,x,N) = M2 => M1 = M2
Checking depth 1 2 
Counterexample found:
M =  fst(var(x))
M1 = fst(var(x))
M2 = snd(var(V))
N =  var(V)
\end{verbatim}
\end{small}
\noindent
The first error is due to the following bug:
\begin{verbatim}
sub(fst(M),Y,N) = snd(sub(M,Y,N))
\end{verbatim}
should be 
\begin{verbatim}
sub(snd(M),Y,N) = snd(sub(M,Y,N))
\end{verbatim}
% This kind of problem could be caught statically by mode or coverage
% checking analysis. 
% could also be detected by a mode  
% analysis. \ednote{Do we mean coverage checking? -am}
% (which has, however, not yet been developed for \aprolog).

The second specification also reports an error:
\begin{verbatim}
Checking for counterexamples to
sub_id: sub(M,x,var(x)) = M
Checking depth 1 
Counterexample found:
M = var(V1)
x # V1
\end{verbatim}
which appears to be due to the typo in the clause
\begin{verbatim}
sub(var(X),Y,N) = var(Y) :- X # Y.
\end{verbatim}
which should be 
\begin{verbatim}
sub(var(X),Y,N) = var(X) :- X # Y.
\end{verbatim}

\noindent
After fixing these errors, no more counterexamples are found for
\verb"sub_fun", but we have

\begin{small}
\begin{verbatim}
Checking for counterexamples to
sub_id: sub(M,x,var(x)) = M
Checking depth 1 2 3 
Counterexample found:
M = pair(var(x),unit)
\end{verbatim}
\end{small}

\noindent
Looking at the relevant clauses, we notice that
\begin{verbatim}
sub(pair(M1,M2),Y,N) = pair(sub(M1,Y,N),sub(M1,Y,N)).
\end{verbatim}
should be
\begin{verbatim}
sub(pair(M1,M2),Y,N) = pair(sub(M1,Y,N),sub(M2,Y,N)).
\end{verbatim}

After this fix, the only remaining counterexample involving
substitution is

\begin{small}
\begin{verbatim}
Checking for counterexamples to
sub_id: sub(M,x,var(x)) = M
Checking depth 1 2 3 
Counterexample found:
M = fst(lam(y\var(y)))
\end{verbatim}
\end{small}

\noindent
The culprit is this clause
\begin{verbatim}
sub(fst(M),Y,N) = fst(sub(M,Y,M)).
\end{verbatim}
which should be 
\begin{verbatim}
sub(fst(M),Y,N) = fst(sub(M,Y,N)).
\end{verbatim}

Once these bugs have been fixed, the \verb"tc_sub" property
checks out, but \verb"tc_weak" and \verb"tc_pres" are still violated:

\begin{small}
\begin{verbatim}
Checking for counterexamples to
tc_weak: x # G, tc(G,E,T), wf_ctx(G) => tc([(x,T')|G],E,T)
Checking depth 1 2 3
Counterexample found:
E =  var(V)
G =  [(V,unitTy)]
T =  unitTy
T' = unitTy ** unitTy
--------
Checking for counterexamples to
tc_pres: tc([],M,T), step(M,M') => tc([],M',T)
Checking depth 1 2 3 4
Counterexample found:
M =  app(lam(x\var(x)),unit)
M' = var(V)
T =  unitTy
\end{verbatim}
\end{small}

\noindent
For \verb"tc_weak", of course we add to the too-specific clause
\begin{verbatim}
tc([(V,T)|G],var(V), T).
\end{verbatim}
the clause
\begin{verbatim}
tc([_| G],var(V),T) :- tc(G,var(V),T).
\end{verbatim}

For \verb"tc_pres", \verb"M"  should never have type-checked at type \verb"T", and the
culprit is the application rule:
\begin{verbatim}
tc(G,app(M,N),T)        :- tc(G,M,T ==> T0), 
                           tc(G,N,T0).
\end{verbatim}
Here, the types in the first subgoal are backwards, and should be 
\begin{verbatim}
tc(G,app(M,N),T)        :- tc(G,M,T0 ==> T), 
                           tc(G,N,T0).
\end{verbatim}

Some bugs remain after these corrections, but they are all detected
by  $\alpha$Check.  In particular, the clauses
\begin{verbatim}
tc(G,snd(M),T1)  :- tc(G,M,T1 ** T2).
step(app(lam(x\M),N),sub(N,x,M)) :- value(N).
\end{verbatim}
should be changed to
\begin{verbatim}
tc(G,snd(M),T2)  :- tc(G,M,T1 ** T2).
step(app(lam(x\M),N),sub(M,x,N)) :- value(N).
\end{verbatim}
After making these corrections, none of the specifications produce
counterexamples up to the depth bounds shown.

\section{Core language}
\label{sec:formal}

The implementation of \aprolog features a number of high-level
conveniences including parameterized types such as lists,
polymorphism, function definition
notation, % (as used in $subst$ above),
and non-logical features such as negation-as-failure and the ``cut''
proof-search pruning operator.  For the purposes of metatheory
model-checking we consider only input programs within a smaller,
better-behaved fragment for which the semantics (and accompanying
implementation techniques) are well-understood~\cite{cheney08toplas}.
In particular, to simplify the presentation we consider only
monomorphic, non-parametric types; for convenience, our implementation
handles lists as a special case.

A \emph{signature} $\Sigma = (\Sigma_D,\Sigma_N,\Sigma_P,\Sigma_F)$
consists of sets $\Sigma_D$ and $\Sigma_N$ of base data types $\delta$,
including a distinguished type $o$ of \emph{propositions}, and name
types $\nu$, respectively, along with a collection $\Sigma_P$ of
\emph{predicate symbols} $p : \tau \to o $ together with one
$\Sigma_F$ of \emph{function symbol} declarations $f : \tau \to
\delta$.  Here, types $\tau$ are formed according to the following
grammar:
\begin{eqnarray*}
  \tau &::=& \unitTy \mid \delta \mid \tau \times \tau' \mid \nu \mid
  \abs{\nu}\tau 
\end{eqnarray*}
where $\abs{\nu}\tau $ classifies name-abstractions,
$\delta \in \Sigma_D$ and $\nu \in \Sigma_N$.  We consider constants
of type $\delta$ to be function symbols of arity $\unitTy \to \delta$.

Given a signature $\Sigma$, the language of \emph{terms} over sets $V$
of (logical) variables $X,Y,Z,\ldots$ and $A$ of
names $\Aa,\Ab,\ldots$ is defined by the following grammar:
\begin{eqnarray*}
  t,u &::=& \Aa \mid \pi \act X \mid \unit \mid \pair{t,u} \mid
  \abs{\Aa}{t}\mid f(t) \\
  \pi &::=& \idd \mid\tran{\Aa}{\Ab}\comp \pi
\end{eqnarray*}
$ \pi$ denotes a permutation over names, and $\pi \act X$ its
\emph{suspended} action on a logic variable $X$. Suspended identity
permutations are often omitted; that is, we write $X$ for
$\idd \act X$.  The abstract syntax $\abs{\Aa}{t}$ corresponds to the
concrete syntax \verb$a\t$ for name-abstraction. We say that a term is
\emph{ground} if it has no variables (but possibly does contain
names), otherwise it is \emph{non-ground} or \emph{open}.  
%\begin{changebar}
  These terms are precisely those used in the \emph{nominal
    unification} algorithm of~\citeN{urban04tcs}, and we will reuse a
  number of definitions from that paper and from
  \citeN{cheney08toplas}; the reader is encouraged to consult those
  papers for further explanation and examples.
% \end{changebar}

% \begin{metanote}
%   AM: It seems to me that permutations are underspecified wrt TOPLAS, (1) we not
%   impose that $\pi$ is a function, although we say `` $\pi(\Aa)$
%   denotes the result of applying the permutation $\pi$ (considered as
%   a function)'', but is this enough to yield \verb|(a b) o b = a|? (2)
%   I guess they're also partial

% JRC: Clarified that I meant we interpret $\pi(-)$ as a total bijective function...
% \end{metanote}

We define the action of a permutation $\pi$  on a name as follows:
\[\begin{array}{rcl}
\idd (\Aa) &=& \Aa\\
(\tran{\Aa}{\Ab} \comp \pi)(\Ac) &=& \left\{\begin{array}{ll}
\Ab & \pi(\Ac) = \Aa\\
\Aa & \pi(\Ac) = \Ab\\
\Ac & \Ac \notin \{\Aa,\Ab\}
\end{array}\right.
\end{array}\]
Note that these permutations have \emph{finite support}, that is, the
set of names $\Aa$ such that $\pi(\Aa) \neq \Aa$ is finite, so
$\pi(-)$ is the identity function on all but finitely many names.
This fact plays an important role in the semantics of nominal logic
and \aprolog programs.

The swapping operation is extended to act on \emph{ground} terms as follows:
\[\begin{array}{rclcrclc}
  \pi \act \unit &= &\unit&\qquad&
  \pi \act f(t) &=& f(\pi \act t)\\
  \pi \act \pair{t,u} &=& \pair{\pi \act t, \pi \act u}& \qquad &
  \pi \act \Aa &= &\pi(\Aa)\\
  \pi \act \abs{\Aa}{t} &= &\abs{\pi \act \Aa}{\pi \act t}
\end{array}\]
%
% where $\pi(\Aa)$ denotes the result of applying the permutation $\pi$
% (considered as a function) to $\Aa$
%.

Nominal logic includes two atomic formulas, 
\emph{equality} ($t\eqt{\tau} u$) and \emph{freshness} ($s
\fresht{\tau} u$).  In nominal logic programming, both are treated as
constraints, and unification involves freshness constraint solving.
The meaning of ground freshness constraints $\Aa \fresht{\tau} u$, where  $\Aa$
is a name and $u$ is a ground term of type
$\tau$, is defined %on ground terms 
using the following
inference rules, where $f : \tau \to
\delta\in\Sigma_F$:
\[\begin{array}{c}
  \infer{\Aa \fresht\nu \Ab}{\Aa \neq \Ab}\quad
  \infer{\Aa \fresht{\unitTy} \unit}{}\quad
  \infer{\Aa \fresht\delta f(t)}{\Aa \fresht\tau t}\quad
  \infer{\Aa \fresht{\tau_1 \times \tau_2} \pair{t_1,t_2}}
{\Aa \fresht{\tau_1} t_1 & \Aa \fresht{\tau_2} t_2}
\vsk
  \infer{\Aa \fresht{\abs{\nu'}{\tau}} \abs{\Ab}{t}}
{\Aa \fresht{\nu'} \Ab & \Aa \fresht{\tau} t}\quad
  \infer{\Aa \fresht{\abs{\nu'}{\tau}} \abs{\Aa}{t}}{}
\end{array}\]
We define similarly the equality relation, which identifies abstractions up to
``safe'' renaming: %, is defined on ground terms as follows:
\[\begin{array}{c}
  \hyp{\Aa \eqt{\nu} \Aa}\quad
  \hyp{\unit \eqt{\unitTy}  \unit}\quad
  \infer{\pair{t_1,t_2} \eqt{\tau_1 \times \tau_2} \pair{u_1,u_2}}
{t_1 \eqt{\tau_1} u_1 & t_2 \eqt{\tau_2} u_2}\quad
  \infer{f(t) \eqt{\delta} f(u)}{t \eqt{\tau} u}
\vsk
  \infer{\abs{\Aa}{t} \eqt{\abs{\nu}{\tau}} \abs{\Ab}{u}}
{ \Aa \eqt\nu \Ab & t \eqt\tau u }\quad
  \infer{\abs{\Aa}{t} \eqt{\abs{\nu}{\tau}} \abs{\Ab}{u}}
{\Aa \fresht\nu (\Ab,u) & t \eqt\tau \swap{\Aa}{\Ab}{u}}
  % \hyp{\Aa \eq \Aa}\quad
  % \hyp{\unit \eq  \unit}\quad
  % \infer{\pair{t_1,t_2} \eq \pair{u_1,u_2}}{t_1 \eq u_2 & t_2 \eq u_2}\quad
  % \infer{f(t) \eq f(u)}{t \eq u}\vsk
  % \infer{\abs{\Aa}{t} \eq \abs{\Ab}{u}}{ \Aa \eq \Ab & t \eq u }\quad
  % \infer{\abs{\Aa}{t} \eq \abs{\Ab}{u}}{\Aa \fresh (\Ab,u) & t \eq \swap{\Aa}{\Ab}{u}}
\end{array}\]
We adopt the convention to leave out the type subscript when it is
clear from the context.

% The novel features of nominal logic programs consist of the freshness
% and equality constraints ($t \fresh u, t \eq u$) described above and
% the Gabbay-Pitts \emph{fresh-name} quantifier $\new$.  The latter,
The Gabbay-Pitts \emph{fresh-name} quantifier $\new$, which,
intuitively, quantifies over names not appearing in the formula (or in
the values of its variables) can be defined in terms of freshness;
that is, provided the free variables and name of $\phi$ are
$\{\Aa,\vec{X}\}$, the formula $\new \Aa{:}\nu.~\phi(\Aa)$ is logically
equivalent to $\exists A{:}\nu.~A \fresh \vec{X} \andd \phi(A)$ (or,
dually, $\forall A{:}\nu. A \fresh \vec{X} \impp \phi(A)$).  However, as
explained by~\citeN{cheney08toplas}, we use $\new$-quantified names
directly instead of variables because they fit better with the nominal
terms and unification algorithm of~\citeN{urban04tcs}. In \aprolog
programs, the $\new$-quantifier is written \texttt{new}.

% \begin{metanote}
%   The following page is in needs of improvements, see also email of 08/04/14. This BNF is the
%   same of Sect $5.3$ in TOPLAS, which already enforces $\new$-goal nf,
%   while we say this only later ``Note that this transformation yields a proper definite clause in which
% $\new$ is used only in the subgoal''. Moreover, remark $3.1$ seems
% repetitious. The story seems to be:
% \begin{itemize}
% \item We start with  $\new$-goal nf.
% \item We elaborate to clauses.
% \item We replace functions for relations
% \item We left linearize and remove names from clause using $\new$ and concretions
% \end{itemize}
% Here we apply \NE. Every step is well-understood, it's the sequence
% that needs spelling out. My plan is to move this bit into the NE section.
% \end{metanote}

%\begin{metanote}
%Maybe sum-up all the source-to-source transformations we do? 
 % Because NE is a program transformation, we need to more 
  % explicit about the transformations we use, depending on which source
  % language we end up using % NE is applied to the core
  % language with concretions, so as to avoid pointless existentials
%  -am
%\end{metanote}
Given a signature,
%% which includes a distinguished base type $o$ of
%%\emph{propositions} along with \emph{predicate symbols} $p : \tau \to o$, 
we consider \emph{goal} and \emph{(definite) program clause} formulas
$G$ and $D$, respectively, defined by the following grammar:
\begin{eqnarray*}
  E &::=& t \eq u \mid t \fresh u\\
  G &::=& \false \mid \true \mid E \mid p(t) \mid G \andd G' \mid G \orr G' 
  \mid \exists X{:}\tau.~G 
%\mid \forall X{:}\tau.~G 
\mid \new \Aa{:}\nu.~G\\
  D &::=& \true \mid p(t) \mid G \impp D \mid D \andd D' \mid \forall X{:}\tau.~D
\end{eqnarray*}

This fragment of  nominal logic  known as $\new$-goal clauses,
% which basically 
which disallows the $\new$ quantifier in the head of clauses,  has been
introduced in previous work~\cite{cheney08toplas} and resolution based
on nominal unification has been shown sound and complete for proof
search for this fragment. This is in contrast to the general case
where the more complicated (and NP-hard) \emph{equivariant unification} problem must
be solved~\cite{cheney10jar}.  
 For example, the clause
\begin{verbatim}
tc(G,lam(x\M),T ==> U) :- x # G, tc([(x,T)|G],M,U).
\end{verbatim}
can be equivalently expressed as the following $\new$-goal clause:
\begin{verbatim}
tc(G,lam(M),T ==> U) :- new x. \exists N. N = x\M, tc([(x,T)|G],N,U).
\end{verbatim}

Although we permit programs to be defined using arbitrary (sets of)
definite clauses $\Delta$ in $\new$-goal form, we take advantage of
the fact that such programs can always be \emph{elaborated} (see
discussion in Section 5.2 of~\citeN{cheney08toplas}) to sets of clauses
of the form $\forall \vec{X}.~G \impp p(t)$.  It is also useful to
single out in an elaborated program $\Delta$ all the clauses that
belong to the definition of a predicate, %; thus, we define
$\defp(p,\Delta)= \{ D\mid D \in \Delta, D = \forall \vec{X}.~G \impp
p(t)\}$.

%%%%% Decided to gloss over this transformation -am
%
% As mentioned earlier, the implementation of \aprolog includes a
% notationally convenient syntax for defining functions.  Functions $f$
% of $n$ arguments can be defined using clauses of the form $f(\vec{t})
% = u \ent G$; these are translated to $n+1$-ary predicate clauses
% $p_f(\vec{t},u) \ent G$.  Uses of $f$ within goals such as
% $G[f(\vec{t})]$ are translated to subgoals of the form $\exists X.
% ~p_f(\vec{t},X) \andd G[X]$.

% \subsection{Proof-theoretic semantics}

%%%% types in fig 4, hyp rule and reference to def of constraint
%%%% satisfaction --am

We define \emph{contexts} $\Gamma$ to be sequences of bindings of
names or of variables:
\[
\Gamma ::= \cdot \mid \Gamma,X{:}\tau \mid \Gamma\#\Aa{:}\nu
\]
Note that names in closed formulas are always introduced using the
$\new$-quantifier; as such, names in a context are always intended to
be fresh with respect to the values of variables and other names
already in scope when introduced.  For this reason, we write
name-bindings as $\Gamma\#\Aa{:}\nu$, where the $\#$ symbol is a
syntactic reminder that $\Aa$ must be fresh for other names and
variables in $\Gamma$.

Terms are typed according to the following rules:
\[\begin{array}{c}
\infer{\Gamma \vdash \unit:\unitTy}{}\quad
\infer{\Gamma \vdash \Aa:\nu}{\Aa:\nu \in \Gamma}\quad
\infer{\Gamma \vdash \pi \cdot X:\tau}{
X:\tau \in \Gamma & 
\Gamma\vdash \pi : \mathsf{perm}}\quad
\infer{\Gamma \vdash \pair{t_1,t_2}: \tau_1\times \tau_2}{
\Gamma \vdash t_1 : \tau_1 & \Gamma \vdash t_2 : \tau_2}\smallskip\\
\infer{\Gamma \vdash \abs{\Aa}{t} : \abs{\nu}{\tau}}{
\Gamma \vdash \Aa : \nu & \Gamma \vdash t : \tau}\quad
\infer{\Gamma \vdash f(t) : \delta}{f : \tau \to \delta \in \Sigma & \Gamma \vdash t : \tau}
\end{array}
\]
The judgment $\Gamma \vdash \pi : \mathsf{perm}$ simply checks that all
swappings in $\pi$ involve names of the same type.
The typing rules for goals and definite clauses are straightforward.
We write $\Term{\Sigma}{\Gamma}{\tau}$ for the set of all well-formed
terms of type $\tau$ in signature $\Sigma$ with variables assigned
types as in $\Gamma$ and likewise we write $\Goal{\Sigma}{\Gamma}$ and
$\Def{\Sigma}{\Gamma}$ for
the sets of goals and respectively definite clauses formed with
constants from $\Sigma$ and variables from $\Gamma$.

%\todo{Add type subscripts to $\eq$ and $\fresh$ -done AM}

We define \emph{constraints} to be $G$-formulas of the following form:
\[
C ::= \top \mid t \eq u \mid t \fresh u \mid C \andd C' \mid \exists X{:}\tau.~C
\mid \new \Aa{:}\nu.~C
\]
We write $\constr$ for a set of constraints.  Constraint-solving is
modeled by the satisfiability judgment \mbox{$\satgn{C}$}.  Let
$\theta$ be a valuation, \ie a function from variables to ground
terms. We say that $\theta$ matches $\Gamma$ (notation $\theta :
\Gamma$) if $\theta(X) : \Gamma(X)$ for each $X$, and all of the
freshness constraints implicit in $\Gamma$ are satisfied, that is, if
$\Gamma = \Gamma_1,X{:}\tau,\Gamma_2\#\Aa{:}\nu,\Gamma_3$ then $\Aa \fresh
\theta(X)$, as formalized by the following three rules:
\[\infer{\theta : \cdot}{}
\quad
\infer{\theta : \Gamma,X{:}\tau}{\theta : \Gamma & \cdot
  \vdash \theta(X): \tau}\quad
\infer{\theta : \Gamma\#\Aa{:}\nu}{\theta:\Gamma & \forall X \in \Gamma.\Aa \fresh \theta(X)}
\]
Define satisfiability for  valuations as follows:
\begin{eqnarray*}
\theta \models \true \\
\theta \models t \eq u &\iff& \theta(t) \eq \theta(u)\\
\theta \models t \fresh u &\iff& \theta(t) \fresh \theta(u)\\
\theta \models C \andd C' &\iff& \theta \models C \text{ and
}\theta\models C'\\
\theta \models \exists X{:}\tau.~C &\iff& \text{for some $t:\tau$,
  $\theta[X:=t] \models C$}\\
\theta \models \new \Aa{:}\nu.~C &\iff& \text{for some $\Ab \fresh
  (\theta,C)$, $\theta \models C[\Ab/\Aa]$}
\end{eqnarray*}
Then we say that $\satgn{C}$ holds if for all $\theta : \Gamma$ such
that $\theta \models \constr$, we have $\theta \models C$.

Efficient algorithms for constraint solving and unification for
nominal terms of the above form and for freshness constraints of the
form $\Aa \fresh t$ were studied by~\citeN{urban04tcs}.  Note,
however, that we also consider freshness constraints of the form
$\pi \act X \fresh \pi' \act Y$.  These constraints are needed to
express the $\alpha$-inequality predicate $\mathit{neq}$ (see Figure~\ref{fig:gen}
in Section~\ref{ssec:cc}). Constraint solving and satisfiability
become NP-hard in the presence of these
constraints~\cite{cheney10jar}.
% \todo{Move this to later implementation section; or maybe we don't
%   need to discuss it?}
In the current implementation of \aprolog, such constraints are delayed until the end of proof
search, and any remaining ones of the form $\pi \act X \fresh \pi'
\act X$ are checked for consistency by brute force, as  these are
essentially finite domain constraints.  Any remaining constraint $\pi
\act X \fresh \pi' \act Y$, where $X$ and $Y$ are distinct variables,
is always satisfiable.

\begin{figure}[tb]
  % \begin{boxedminipage}{\textwidth}
  \[%\small
  \begin{array}{c}
   \infer[\trueR]{\upfgdn{\top}}{}
    \quad
     \infer[\mathit{con}]{\upfgdn{E}}
    {\satgn{E}}
    \quad
    \infer[\andR]{\upfgdn{G_1 \andd G_2}}
    {\upfgdn{G_1} & \upfgdn{G_2}}
    \medskip\\
    \infer[\orR_i]{\upfgdn{G_1 \orr G_2}}
    {\upfgdn{G_i}}
    \quad 
%\medskip\\
    \infer[\exR]{\upfgdn{\exists X{:}\tau.~G}}
    { \sat{\Gamma}{\constr}{\exists X{:}\tau.~C} &
      \upf{\Gamma,X{:}\tau}{\Delta}{\constr,C}{G} } 
%%% OLD
%    \quad
%    \infer[\allR]{\upfgdn{\forall X{:}\tau.~G}}
 %   {\sat{\Gamma}{\constr}{\forall X{:}\tau.~C} &
  %    \upf{\Gamma,X{:}\tau}{\Delta}{\constr,C}{G} } 
    \medskip\\
    \infer[\newR]{\upfgdn{\new\Aa{:}\nu.~G}}
    {\sat{\Gamma}{\constr}{\new \Aa{:}\nu.~C} &
      \upf{\Gamma\#\Aa{:}\nu}{\Delta}{\constr,C}{G} } 
    % \quad
    % \infer[sel]{\upfgdn{Q}}{\apfgdn{D}{Q} & D \in \Delta}
    \medskip\\
\infer[\mathit{back}]{\upfgdn{p(u)}}
    {\sat{\Gamma}{\constr}{\exists \vec{X}{:}\vec{\tau}.~\vec C \wedge t\eq
        u } & \upf{\Gamma,\vec{X}{:}\vec{\tau}}{\Delta}{\constr,
        \vec C}{G} & (\forall \vec{X}{:}\vec{\tau}.~G \impp p(t))\in\Delta}

  \end{array}
  \]
  % \end{boxedminipage}
  \caption{Proof search semantics of \aprolog
    programs with backchaining}\labelFig{uapf-aprolog} \labelFig{backchaining}

  \centering
\[
  \infer[\mathit{back}]{\upf{\cdot}{\Delta}{\cdot}{tc([],lam(\abs{\Ax}{var(\Ax)}),
      A\Rightarrow A)}}{ 
    J_1 &
    \infer[\newR]{\upf{\Gamma_1}{\Delta}{C_1}{\new \Ay. \exists N. M \eq
        \abs{\Ay}{N} \andd tc((y,T)::G,N,U)} }{ 
      J_2 & 
      \infer[\exR]{
        \upf{\Gamma_2}{\Delta}{C_2}{\exists N. M \eq \abs{\Ay}{N}\andd
          tc((y,T)::G,N,U)} 
      }{
        J_3 & 
        \infer[\andR]{
          \upf{\Gamma_3}{\Delta}{C_3}{M \eq \abs{\Ay}{N} \andd
            tc((y,T)::G,N,U)} }{ 
          \infer[\mathit{con}]{ 
            \upf{\Gamma_3}{\Delta}{C_3}{M\eq \abs{\Ay}{N}} 
          }{ 
            \sat{\Gamma_3}{C_3}{M \eq\abs{\Ay}{N}} 
          } & 
          \infer[\mathit{back}]{
            \upf{\Gamma_3}{\Delta}{C_3}{tc((y,T)::G,N,U)} }{ J_4 &
            \infer[\trueR]{\upf{\Gamma_4}{\Delta}{C_4}{\true}}{}
          } } } } }
  \]
  where:
\begin{eqnarray*}
  J_1 &=& \sat{\cdot}{\cdot}{\exists G,M,T,U.C_1 \andd E_1} \\
  \Gamma_1 &=& G:\texttt{ctx},M:\texttt{tm},T:\texttt{ty},U:\texttt{ty}\\
  E_1 &=& tc(G,lam(M),T\Rightarrow U)
          \eq tc([],lam(\abs{\Ax}{var(\Ax)}),A\Rightarrow A)\\
  C_1 &=& G = [] \andd M = \abs{\Ax}{var(\Ax)} \andd T = A \andd U = A\\
  J_2 &=& \sat{\Gamma_1}{C_1}{\new \Ay.\true}\\
  \Gamma_2 &=& \Gamma_1\#\Ay\\
  C_2 &=& C_1,\true\\
  J_3 &=&  \sat{\Gamma_1}{C_1}{\exists N.~N = var(\Ay)}\\
  \Gamma_3 &=& \Gamma_2,N:\texttt{tm}\\
  C_3 &=& C_2,N = var(\Ay)\\
  J_4 &=& \sat{\Gamma_3}{C_1,C_2}{\exists G',X,T'.C_3 \andd E_2}\\
  \Gamma_4 &=& \Gamma_3,G':\texttt{ctx},X:\texttt{id},T':\texttt{tm}\\
  C_4 &=& X = \Ay \andd T' = U\\
  E_4 &=& tc((X,T')::G',var(X),T') \eq tc((\Ay,T)::G,N,U)
\end{eqnarray*}

\caption{Partial derivation of goal
  $tc([],lam(\abs{\Ax}{var(\Ax)}),A\Rightarrow A)$}
\label{fig:example-deriv}
\end{figure}

We adapt here the ``amalgamated'' proof-theoretic semantics of
\aprolog programs, introduced in~\cite{cheney08toplas}, based on
previous techniques stemming from CLP~\cite{leach01tplp} --- see
\refFig{uapf-aprolog}.  This semantics allows us to focus on the
high-level proof search issues, without requiring us to introduce or
manage low-level operational details concerning constraint solving.
% The only novelty is the presence of
% rules for universally quantified goals, which are handled here as in
% \lprolog~\cite{lambdaprolog}.  
Differently from the cited paper, we use a single backchaining-based
judgment $\upfgdn{G}$, where $\Delta$ is our (fixed and elaborated)
program and $\constr$ a set of constraints, rather than the partitioning of
goal-directed or \emph{uniform} proof search, and program
clause-directed or \emph{focused} proof
search~\cite{Miller91apal}. % \ednote{$\Delta$ is fixed,
  % so we could keep it implicit, but w.r.t.\ negation elimination we need
  % also $\Delta^{-}$. Changed atomic proposition from $A$ to $Q$ -am}
This  style of judgment conforms better to the proof techniques required
 to proving the correctness of
the negation elimination transformation (see \refSec{neg-el}).

\refFig{example-deriv} shows the derivation of the goal
$tc([],lam(\abs{\Ax}{var(\Ax)}),A\Rightarrow A)$, illustrating how the
rules in \refFig{uapf-aprolog} work.  These rules are highly
nondeterministic, requiring choices of constraints in the $\exR$,
$\newR$ and backchaining rules.  The choice of constraint in the
backchaining rule typically corresponds to the unifier, while
constraints introduced in the $\exR$ and $\newR$ rules correspond
to witnessing substitutions or freshness assumptions.  These choices are
operationalized in \aprolog using nominal unification and resolution
in the operational semantics given by~\citeN{cheney08toplas}, to which
we refer for more explanation.

% \todo{
% Explain a bit more of the story behind the proof-system, using the
% ICLP 06 and TOPLAS 2008 papers.
% }

% Figure~\ref{fig:backchaining}
% shows a backchaining rule specialized to elaborated clauses, which
% is interderivable with the focusing-based rules:

% \begin{lemma}
%   For normalized programs, derivation using the selection and focusing
%   rules in~\refFig{uapf-aprolog} is equivalent to derivation
%   using the backchaining rule.  
% \end{lemma}
% \begin{proof}
%   \todo{Prove. \\
%     Do we really want to do it?  -am}
% \end{proof}
% % Figure~\ref{fig:backchaining}
% % shows a backchaining rule specialized to elaborated clauses, which
% % is derivable from the focusing-based rules:

% \begin{lemma}
% The \emph{back} rule in~\refFig{backchaining} is derivable from the 
%   rules in~\refFig{uapf-aprolog}.  
% \end{lemma}
% \begin{proof}
%   \todo{Prove.}
% \end{proof}

\section{Specification checking \via negation-as-failure}
\label{sec:implementation}

The \verb|#check| specifications correspond to
specification formulas of the form
\begin{equation}\label{eq:typical-spec}
 \new \vec{\Aa}. \forall \vec{X}.~G \impp A
\end{equation}
where $G$ is a goal and $A$ an atomic formula (including equality and
freshness constraints).  Since the $\new$-quantifier is self-dual, the
negation of a formula (\ref{eq:typical-spec}) is of the form $\new
\vec{\Aa}. \exists{\vec{X}}. G \andd \neg A$.  A \emph{(finite)
  counterexample} is a closed substitution $\theta$ providing values
for $\vec{X}$ that satisfy this formula using negation-as-failure: that
is, such that $\theta(G)$ is derivable, but the conclusion $\theta(A)$
finitely fails.

We define the \emph{bounded model checking} problem for such programs
and properties as follows: given a resource bound (\eg a bound on the
sizes of counterexamples or number of inference steps needed), decide
whether a counterexample can be derived using the given resources, and
if so, compute such a counterexample.

\begin{figure}[tb]
\begin{eqnarray*}
\gen\SB{\tau} &:& \Term{\Sigma}{\Gamma}{\tau} \to \Goal{\Sigma}{\Gamma}\\
\gen\SB{\unitTy}(t) &=& t \eq \unit\\
%\gen\SB{\tau_1 \times \tau_2}(t) &=& \gen\SB{\tau_1}(\pi_1(t)) \andd \gen\SB{\tau_2}(\pi_2(t))\\
\gen\SB{\tau_1 \times \tau_2}(t) &=& \exists
X_1{:}\tau_1,X_2{:}\tau_2.~t \eq \pair{X_1,X_2}\andd \gen\SB{\tau_1}(X_1) \andd \gen\SB{\tau_2}(X_2)\\
\gen\SB{\delta}(t) &=& \gen_\delta(t)\\
%\gen\SB{\abs{\nu}{\tau}}(t) &=& \new \Aa{:}\nu.~\gen\SB{\tau}(t \conc
%\Aa)\\
\gen\SB{\abs{\nu}{\tau}}(t) &=& \new \Aa{:}\nu.\exists
X{:}\tau.~t\eq\abs{\Aa}{X} \andd \gen\SB{\tau}(X)\\
\gen\SB{\nu}(t) &=& \true\\
\gen_\delta(t) &\ent& \bigvee\{\exists X{:}\tau.~t \eq f(X) \andd \gen\SB{\tau}(X)\mid  f:\tau \to \delta \in \Sigma\}
\end{eqnarray*}
\caption{Term-generator predicates}\labelFig{termgen}
\end{figure}

To begin with, we consider two approaches to solving this problem
using \emph{negation-as-failure} (\NF).  First, we could simply
enumerate all possible \emph{valuations} and test them using
\NF\@.  More precisely, given predicates % \ednote{Don't
  % know if relevant, but in Curry free vars can be used as type-driven
  % generators, see \cite{AntoyH06}}
$\gen\SB{\tau} : \tau \to o$ for
each type $\tau$ (see \refFig{termgen}), which generate all possible
values of type $\tau$, we may translate a specification of the form
(\ref{eq:typical-spec}) to a goal
\begin{equation}
 \new \vec{\Aa}. \exists \vec{X}{:}\vec{\tau}.~\gen\SB{\tau_1}(X_1) \andd
 \cdots \andd \gen\SB{\tau_m}(X_m)\andd G \andd  not(A)
\end{equation}
where $not(A)$ is the ordinary negation-as-failure familiar from
Prolog.  In fact, we only need to generate ground values for the free
variables of $A$, to ensure that negation-as-failure is well-behaved,
since we can push the existential quantifiers of any variables
mentioned only in $G$ into $G$.  Such a goal can simply be executed in
the \aprolog interpreter, using the number of resolution steps
permitted to solve each subgoal as a bound on the search space.  This
method, combined with a complete search strategy such as iterative
deepening, will find a counterexample, if one exists.  However, this
is clearly wasteful, as it involves premature commitment to ground
instantiations.  For example, if we have
\[
\gen\SB{\tau}(X), \gen\SB{\tau}(Y), bar(Y), foo(X), not(baz(X,Y))
\]
and we happen to generate an $X$ that just does not satisfy
$foo(X)$, we will still search all of the possible instantiations of
$Y$ and derivations of $bar(Y)$ up to the depth bound before trying a
different instantiation of $X$.   Instead, it is more efficient to
use the \emph{definitions} of $foo$ and $bar$ to guide search towards
suitable instantiations of $X$ and $Y$.
Therefore we consider an approach that first enumerates \emph{derivations} of
the hypotheses and then tests whether the negated conclusion is
satisfiable under the resulting answer constraint.  Compared with the ground
substitution enumeration technique above, this \emph{derivation-first}
approach simply delays the $\gen$ predicates until after the
hypotheses:
\begin{equation}
 \new \vec{\Aa}. \exists \vec{X}{:}\vec{\tau}.~G \andd
 \gen\SB{\tau}(X_1) \andd \cdots \andd \gen\SB{\tau}(X_n)\andd not(A)
\end{equation}
%
% In fact, we just need to generate instantiations for the variables
% $X_i$ actually appearing in $A$, since only those need to be ground to
% ensure that negation-as-failure is well-behaved.
Of course, if $G$ is
a complex goal, the order in which we solve its subgoals can also
affect search speed, but we leave this choice in the hands of the user
in the current implementation.

\if 0
Since the above is a legal logic programming query, we can simply
execute it.  Of course, this has several problems.  If any of the
predicates involved are recursive, proof search typically diverges in
a way that does not ensure that the search space is covered.
Moreover, even if we successfully solve the hypotheses, in \aprolog
(as in Prolog) negation-as-failure is does not provide the behavior we
want when the negated formula contains free variables.  For example,
the goal
\begin{verbatim}
?- not(X = Y)
\end{verbatim}
fails, because $X = Y$ succeeds by unifying $X$ and $Y$.  Technically,
this goal is equivalent to $\nott (\exists X,Y.~ X = Y)$, so this
behavior is correct; however, this is not what we want in solving the
goal (2) since there, the quantifiers are introduced outside the
negated subformula.  To get correct behavior, we need to ensure that
none of the free variables $\vec{X}$ are instantiated while solving
\verb$not(A)$. 

To address the above problems, we first search symbolically for
solutions that make the hypotheses true, then instantiate all free
variables in the conclusion, and finally test the conclusion using
negation-as-failure.  In order to ensure termination, we bound the
number of resolution steps that can be used in each derivation.  This
also ensures that the search space is explored in a fair way.

\fi

\if 0

In more detail, the derivation-first search approach proceeds in three stages.
\begin{enumerate}
\item Search for \emph{finished} derivations of the hypotheses up to a
  given bound $b$ on the number of resolutions performed in deriving
  each hypothesis.  That is, search for solutions to the \aprolog
  query $\new \vec{\Aa}.\exists \vec{X}.~ H_1 \andd \cdots \andd H_n$.
  This produces an answer constraint $\theta$ on the original
  names and variables $\vec{\Aa},\vec{X}$, possibly in terms of some
  new names and variables $\vec{\Ab},\vec{Y}$.

\item Consider the conclusion $\theta(A)$, which may mention some of
  the variables $\vec{Z} \subseteq \vec{X}$.  Instantiate any free
  variables in $\vec{Y}$ reachable from $\vec{Z}$ with ground terms
  using a substitution $\sigma$.  At the moment, our implementation
  only instantiates free variables with term constructors, not names;
  this avoids an expensive brute-force search over name-valuations,
  but may damage correctness.

\item Search for a ``finished'' derivation of $A$ up to some bound
  $b'$ on the number of resolutions.  If such a derivation exists,
  then the valuation $\sigma \circ \theta$ is not a counterexample.
  If every derivation attempt fails in fewer than $b'$ resolutions,
  then $A$ is a counterexample.  Otherwise, we can't be sure one way
  or another.
\end{enumerate}
\fi

In essence, this {derivation-first} approach generates all
``finished'' derivations of the hypothesis $G$ up to a given
depth, considers all sufficiently ground instantiations of variables
in each up to the depth bound, and finally tests whether the
conclusion finitely fails for the resulting substitution.  A finished
derivation is the result of performing a finite number of resolution
steps on a goal formula in order to obtain a goal containing only
equations and freshness constraints.  For example, the proof search
tree in \refFig{tree} shows all of the finished derivations of
$tc(G,E,T)$ using at most 3 resolution steps.  Here, the conjunction
of constraint formulas along a path through the tree describes the
solution corresponding to the path.
\begin{figure}[tb]
\centering
\includegraphics[scale=0.3]{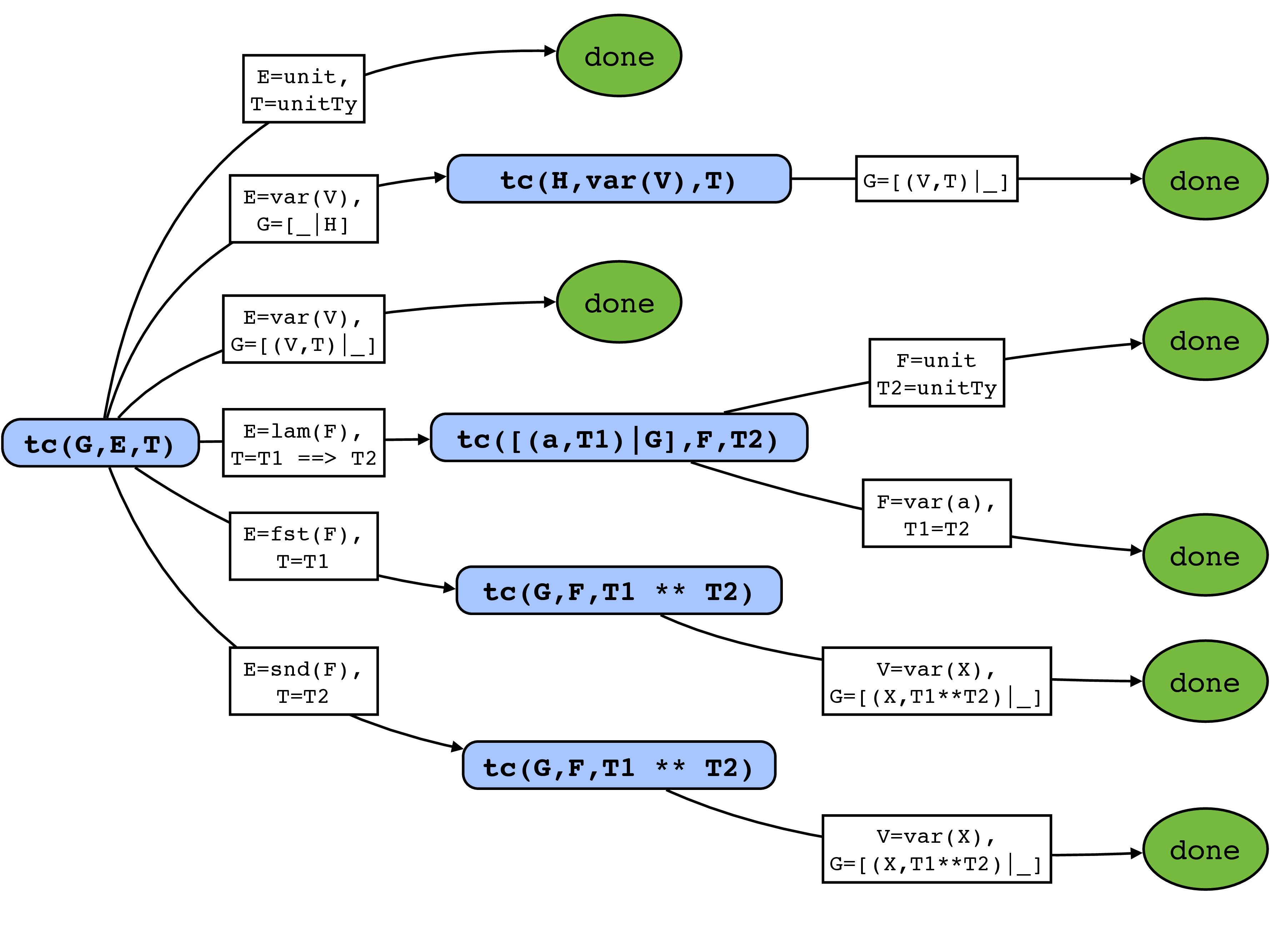}
\caption{``Finished'' derivations for \texttt{tc(G,E,T)} up to depth 3}
\labelFig{tree}
\end{figure}

 \if 0 TODO figure
\begin{figure}[tb]
\small
\begin{verbatim}
tc(G,E,T) --> mem((X,T),G)  
          --> E = var(X) 
              |  G = [(X,T)|G']

tc(G,E,T) --> mem((X,T),G)   | E  = var(X) 
          --> mem((X,T),G1)  | G  = [(Y,T')|G1],
          -->                | G1 = [(X,T)|G2]

tc(G,E,T) --> tc([(x,T1)|G],M,T2)    | x # G, E = lam(x\M),
                                       T = T1 ==> T2,
          --> mem((X,T2),[(x,T1)|G]) | M = var(X)
          -->                        | X = x, T1 = T2
\end{verbatim}
\caption{``Finished'' derivations for \texttt{tc(G,E,T)}}
\labelFig{tree}
\end{figure}
\fi

We note in passing that the dichotomy between the two approaches above
corresponds to the well-known problem that property-based systems such
as QuickCheck face when trying to test conjectures with hard-to
satisfy premises --- and this is especially acute when random testing
is used. The derivation-first approach is a very simple rendering of
the idea of \emph{smart generators}~\cite{Bulwahn12}, thanks to the
fact that we are already living in a \lp world --- we discuss this
further in Section~\ref{sec:related}.

% \smallskip
% \begin{metanote}
%   I have removed the next paragraph about the ad hoc bound in view of Matteo's
%   experiments that it does not make a difference. -am
% \end{metanote}
\if 0
We have implemented a variant of the \aprolog interpreter that limits the
number of resolution steps taken when trying to prove a hypothesis to
the bound $b$ provided as part of the \verb"#check" declaration, and
uses the \textit{ad hoc} limit $3*b + 10$ when solving $not(A)$.  The
reason for the larger bound on the conclusion is that often the
derivation of the conclusion combines the derivations of the
hypotheses (consider, for example, the derivation transformations
involved in proofs of substitution lemmas).  We only report a
counterexample when the negated goal fails in fewer than $3*b + 10$ steps,
not when it runs out of resources; the latter case could result in
false positives.
\fi
% Implementing the \generator predicates
% The $\gen\SB{\tau}$ predicates are currently implemented as a built-in
% generic functions in \aprolog.  On user request, the interpreter
% generates predicates $\gen_\delta$ for all user-defined datatypes
% $\delta$; these can be used within \verb|#check| directives.

% updated to reflect (?!) the implementation -am

The $\gen\SB{\tau}$ predicates are implemented as a built-in generic
function in \aprolog: given a \verb|#check| directive
$ \new \vec{\Aa}. \forall \vec{X}.~G \impp A$, the interpreter
generates predicates $\gen_\delta$ for the (user-defined) datatypes
$\delta$ over which the free variables of $A$ range.  Note that we do
not exhaustively instantiate base types such as name-types; instead,
we just use a fresh variable to represent an unknown name.  This 
appears to behave correctly, but we do not have a
proof of correctness.
% (nor a full semantics of negation-as-failure in \aprolog). 

%\todo{Explain why this is OK.}

%\todo{Justification of correctness of this approach}

The implementation of counterexample search using negation-as-failure
described in this section still has several disadvantages:
\begin{itemize}
\item Negation-as-failure is unsound for non-ground goals, so we must
  sooner or later blindly instantiate all free variables before
  solving the negated conclusion\footnote{As well known, this can be
    soundly weakened to checking for bindings of the free variables of
    the goal upon a successful derivation of the latter.}.  This may
  be expensive, as we have argued before, and prevents optimizations
  by goal reordering.  For an analogy, \NE is to \NF as symbolic
  evaluation is to standard (ground) testing in property-based
  testing, see Section~\ref{par:test}.

\item Proving soundness (and completeness) of counterexample search,
  particularly with respect to names, requires proving
  properties of negation-as-failure in \aprolog that have not yet been
  studied.
\item Nested negations are not well-behaved, so we cannot use negation
  (nor, of course, if-then-else) in ``pure'' programs or
  specifications we wish to check.
\end{itemize}

Notwithstanding years of research, \NF (and an unsound version of it,
by the way) is the negation operator offered by Prolog. However, we
are not interested in programming, but in disproving conjectures and
therefore relying on an operational interpretation of negation seems
sub-optimal.

% It is a fact that negation-as-failure has been almost
% universally accepted from a practical standpoint in the \lp world;
% however, if we are interested, as we are, in using a logic programming
% language as a \emph{logical} framework,
% where adequacy of the encoding of object logics is paramount, 
 We therefore
consider an alternative approach, which, almost paradoxically,
addresses the issue of negation in \lp by \emph{eliminating} it.

\section{Specification checking \via negation elimination}
\label{sec:neg-el}

% \todo{Some of the following information could also be summarized in
%   the introduction. -- DONE}

\emph{Negation elimination} (\NE)~\cite{Bar90,Momigliano00,Munoz-HernandezMM04cin} is a
source-to-source transformation aimed at replacing negated subgoals
with calls to equivalent positively defined predicates.  \NE by-passes
complex semantic and implementation issues arising for \NF since, in
the absence of local (existential) variables, it yields an ordinary ($\alpha$)Prolog
program, whose success set is included in the complement of the success
set of the original predicate that occurred negatively.  In other
terms, a predicate and its complement are mutually \emph{exclusive}.
Moreover, for terminating programs we also expect \emph{exhaustivity}:
that is, either the original predicate or its negation will succeed on
a given input --- of course, we cannot expect this for arbitrary
programs that may denote sets whose complement is not recursively
enumerable.  When local  variables are present, the derived program
will also feature a form of \emph{extensional} universal
quantification, as we detail in Section~\ref{ssec:cc}. 

% removed for brevity
% In the context of specification checking, negation elimination has
% several advantages \wrt\NF\@: it avoids the expensive term generation step
% needed to ground free variables, it is more transparently correct, and
% it may open up other opportunities for optimization.

We begin by summarizing how negation elimination works at a high level. Replacing occurrences of negated
predicates with positive ones that are operationally equivalent
entails two phases:
  \begin{itemize}
  \item \emph{Complementing (nominal) terms}.  One
    reason an atom can fail is when its arguments do not unify with any
    clause head in its definition. To exploit this observation, we
    pre-compute the complement of the term structure in each clause
    head by constructing a set of terms that differ in at least one
    position.  This is known as the \emph{(relative) complement}
    problem~\cite{Lassez87}, which we describe next in
    \refSec{fo-not}.

  \item \emph{Complementing (nominal) clauses}. 
%% James: Not sure if this is true, since IL and CL are not necessarily
%% Horn-equivalent in the presence of constraints...
%    For simple extensions of Horn logic such as \aprolog, this is of
%    no concern, as classical and intuitionistic logic coincide. 
    The idea of the clause complementation algorithm is to compute the
    complement of each head of a predicate definition using term
    complementation, while clause bodies are negated pushing negation
    inwards until atoms are reached and replaced by their complement
    and the negation of constraints is computed.
    The contributions of each of the original clauses are finally
    merged. % via unification.
    The whole procedure can be seen as a negation normal form procedure, which is
    consistent with the operational semantics of the language.  The
    clause complementation algorithm is described in Section~\ref{ssec:cc}.
  \end{itemize}

\subsection{Term complement}
\label{sec:fo-not}

% \begin{metanote}
%   Remove the  $\| t \|$ notation
% \end{metanote}

% \todo{Terminology/notation for $not(t)$ is confusing here --- in the
%   text it is used as a function, then in the figure it is defined as
%   ``deriving'' a set of terms.  Can we just define it as a set of (not
%   necessarily ground) terms or use different notation for the set of
%   non-matching ground terms and its representation as a set of
%   nonground terms?}

An open term $t$ in a given signature can be seen as the intensional
representation of the set of its ground instances.
%, notation $\| t \|$.
Accordingly, the \emph{complement} of $t$
%,$\Not(t)$, 
is the set of ground terms which are \emph{not} instances
of $t$. %, \ie are in the
       %set-theoretic complement of
       %$\| t \|$.  
% It is natural to generalize this to the notion of relative complement,
% a  representation of all the ground instances of a given
% (finite) set of terms which are not instances of another given one.
% , in
% symbols:
% \[ \| t_1 \mdots t_n \| - \| u_1 \mdots u_m \| \] where dots represent
% (set theoretic) union.
%  More 
% properly:
%  \[ \| t_1 \mdots t_n \| \eqdef \bigcup^{n}_{i=1} \|t_i\| \] 
% A relative complement problem $ t_1 \mdots t_n - u_1 \mdots u_m$ can
% be expressed by the \emph{equational problem} $ \exists \vx \forall
% \vy .~ \BAnd^{n}_{i=1} z_i = t_i \And \bigwedge^{m}_{i=1} z_i \not=
% u_{i} $, where $FV(t_1 \mdots t_n) = \vx$ disjoint from $FV(u_1 \mdots
% u_m) = \vy$ and the $z_i$'s are free variables, thus establishing an
% interesting connection with \emph{disunification}~\cite{Comon91}.

%Luckily, for our purposes, simple term complement is enough, provided
%that 
A complement operation satisfies the following desiderata: for fixed
$t$, and all ground terms $s$
\begin{enumerate}
\item Exclusivity: it is not the case that $s$ is both a ground
instance of $t$ and of its complement. %$\Not(t)$.
\item Exhaustivity: $s$ is a ground instance of $t$ or $s$ is a
ground instance of its complement. %$\Not(t)$.
\end{enumerate}
%That is, the \mnot\ operation ought to behave as the complement
%operation on sets of ground terms.

As it was initially observed in~\cite{Lassez87}, this cannot
be achieved unless we restrict to \emph{linear} terms, \viz such that
they have no repeated occurrences of the same logic variables.
%  In other words, intensional
% representations of terms are not closed under complementation.  One
% canonical example is as follows:
% \begin{example}
% \label{ex:fyy}
% Consider the signature $\{a:i, f: (i * i) \Imp i\}$: intuitively the
% complement of $f(y,y)$ should be: 
%
% \label{fxx}
% \begin{eqnarray*}
% \| z \| - \|f(y,y)\|& =& \{a\} \cup \{ f(x,z) \mid x \neq z \} 
% \end{eqnarray*}
% Instead, the $\mnot$ algorithm would incorrectly yield:
% \[
% \mnot(f(y,y)) \stackrel{?}{=} \{a\}
% \]
% \end{example}
%From now on we will assume to work only on linear terms. 
However, this restriction is immaterial for our intended application,
thanks to \emph{left-linearization}, a simple source to source
transformation, where we replace repeated occurrence of the same variable
in a clause head with fresh variables that are then constrained in
the body by  $\eq$.
% , whose definition
% is simply $eq(x,x)$; unification will then provide the other
% properties of equality;
% Continuing Example \ref{ex:fyy}, a clause such as $$\forall y\,.\,
% p(y,y)\If G$$ becomes
%  $$\forall z_1.\,\forall z_2\,.\, p(z_1,z_2)\If
% eq(z_1,z_2), G$$  

Complementing nominal terms, however, introduces new and more
significant issues, similarly to the \ho case. There, in fact, even
restricting to patterns, (intuitionistic) lambda calculi are not
closed under complementation, due the presence of \emph{partially
  applied} lambda terms.  Consider a \ho pattern \verb+(lam [x] E)+ in
Twelf's concrete syntax, where the logic variable \texttt{E}
\emph{does not} depend on \texttt{x}.  Its complement contains all the
functions that \emph{must} depend on \texttt{x}, but this is not
directly expressible with a finite set of patterns.  This problem 
is solved by developing a strict lambda calculus, where we can
directly express whether a function depends on its
argument~\cite{Momigliano03}. %This is not an issue for nominal terms
Although we do not consider logical variables at function types in
\aprolog, the presence of names, abstractions, and swappings leads to
a similar problem. Indeed, consider the complement of say
\verb|lam(x\var(x))|: it would contain terms of the form
\verb|lam(x\var(Y))| such that \verb+x # Y+.
%is reminiscent of Comon's \emph{constrained terms}~\cite{Comon91} and
% some non-determinism \wrt the rules for complementing abstractions,
This means that the complement of a term
(containing free or bound names) cannot again be represented by a
finite set of nominal terms.  A possible solution is to embrace the
(constraint) disunification route and this means dealing (at least)
with equivariant unification; this is not an attractive option since
equivariant unification has high computational complexity as shown in
\cite{cheney10jar}. As far as negation elimination is concerned, it
is simpler to further restrict $\new$-goal clauses to a fragment that
is term complement-closed: require terms in the heads of source
program clauses to be linear and also forbid occurrence of names
(including swapping and abstractions) in clause heads. These are
replaced by logic variables appropriately constrained in the clause
body by a \emph{concretion} to a fresh name. A concretion, written $t
\conc \Aa$, is the elimination form for abstraction.  Concretions need
not be taken as primitives, since they can be implemented by
translating $G[t \conc \Aa]$ to $\exists X.~t \eq \abs{\Aa}{X} \andd
G[X]$. However, we will \emph{not} expand their definition during
negation elimination --- this would introduce pointless existential
variables that would be turned into extensional universal quantifiers
as we explain in the next Section~\ref{ssec:cc}.

\ignore{

A similar technique can be used to provide more convenient
\emph{elimination forms} for pair and abstraction types.  The
projection functions $\pi_i : \tau_1 \times \tau_2 \to \tau_i$ can be
defined directly using the equation $\pi_i(X_1,X_2) = X_i$. %  The
% elimination form for abstraction, called \emph{concretion} and written
% $t \conc \Aa$, can be implemented by translating $G[t \conc \Aa]$ to
% $\exists X.~t \eq \abs{\Aa}{X} \andd G[X]$.  We use these elimination
% forms freely in the rest of the article, subject to the proviso that
% they must be expanded before any program transformations are applied.

Finally, concrete-syntax program clauses containing free names and
variables such as
\begin{center}
\verb+tc(G,lam(x\M),T ==> U) :- x # G, tc([(x,T)|G],M,U).+
\end{center}
are viewed as equivalent to closed formulas such as % \ednote{removed
  % the freshness constraint in the next array -am }
:
\begin{small}
  \[\begin{array}{ll}
  \forall G,F,T,U.\
  (\new \Ax.\exists M.~ F = \abs{\Ax}{M} \andd 
% \Ax \fresh G \andd 
tc([(\Ax,T)|G],M,U))
  \impp tc(G,lam(F),T \Longrightarrow U)
\end{array}\]
\end{small}
\noindent
Note that this transformation yields a proper definite clause in which
$\new$ is used only in the subgoal. 

\cite{urban05tlca}
}

For example, the $\new$-goal clause:
\begin{verbatim}
tc(G,lam(M),T ==> U) :- new x. exists Y. M = x\Y, tc([(x,T)|G],Y,U).
\end{verbatim}
can instead be written as follows:
\begin{verbatim}
tc(G,lam(M),T ==> U) :- new x. tc([(x,T)|G],M@x,U).
\end{verbatim}
avoiding an explicit existential quantifier in the body of the clause.
% Note also that the syntax of $D$-formulas in Section~\ref{sec:formal}
% already enforces this, as $\new$ quantification is not allowed over
% clause heads.

%\begin{metanote}
% Think about
%   what it means for counterexamples found by the checker that need to
%   be translated back before being presented to the user. In other
%   terms, make this transform (as well as linearity) invisible to the
%   user?
%\end{metanote}

% Thus, we can simply use a non-deterministic, type directed, version of
% the standard rules for \fo term complementation (judgment $ \mnot(t)
% \blip s \hastype \tau $ listed in \refFig{term-not}), where the $X$'s
% are new logic variables and $ f\hastype \tau\Imp \delta$.
Thus, we can simply use a  type directed functional version of
the standard rules for \fo term complementation, listed in
\refFig{term-not}, %defined as follows,
 %($ \mnot\SB\tau(t)$ listed in \refFig{term-not}), 
where 
% the $X$'s are new logic variables and 
$ f\hastype \tau\Imp
\delta$.
% \todo{Use ``underscore'' here to clarify that the choice of $X$
%   doesn't matter? --jrc -- DONE}
%\begin{small}
%\begin{figure}[ht]
% \[
% \begin{array}{c}
%   \mbox{no rules for } \unitTy,\nu, \abs{\nu}\tau 
%   \vsk
%    \ianc{g\in\Sigma, g \hastype \sigma\Imp 
%     \delta,f\ \neq g} 
%   {  \mnot (f(t)) \blip 
%     g(X) \hastype \delta} {\mnot\_f^1}
%   \vsk
%   \ianc{ \mnot(t) \blip s \hastype \tau  } 
%   {  \mnot (f(t)) \blip 
%     f(s)\hastype \delta} {\mnot\_f^2}\vsk
%  \ianc{ \mnot(t_1) \blip s_1 \hastype \tau_1 } 
%   {  \mnot (\pair{t_1,t_2}) \blip 
%     \pair{s_1,X_2}\hastype \tau_1\times\tau_2} {\mnot\_p^1}\vsk
%  \ianc{ \mnot(t_2) \blip s_2 \hastype \tau_2 } 
%   {  \mnot (\pair{t_1,t_2}) \blip 
%     \pair{X_1,s_2}\hastype \tau_1\times\tau_2} {\mnot\_p^2}
% \end{array}
% \]
\begin{figure}[tb]
  \begin{eqnarray*}
   \mnot\SB{\tau} &:&  \Term{\Sigma}{\Gamma}{\tau} \to
                      \mathcal{P}(\Term{\Sigma}{\Gamma}{\tau})\\
    \mnot\SB\tau(t) & = & \emptyset\qquad\qquad\qquad\mbox{when } \tau\in\{
    \unitTy,\nu, \abs{\nu}\tau \} \mbox{ or $t$ is a variable}\\
    \mnot\SB{\tau_1 \times \tau_2}(t_1,t_2) & = & 
\{ (s_1,\_) \mid s_1\in\mnot\SB{\tau_1}(t_1)\}\cup\{ (\_,s_2) \mid
s_s\in\mnot\SB{\tau_2}(t_2)\}\\
\mnot\SB\delta(f(t)) & = & \{
g(\_)\mid g\in\Sigma, g \hastype \sigma\Imp \delta, f\ \neq g\} \cup \{f(s)\mid s\in\mnot\SB\tau(t) \}
  \end{eqnarray*}
 \caption{Term complement}
   \labelFig{term-not}
 \end{figure}
%\end{small}

\if 0
\begin{example}
  \begin{metanote}
    Change. Use \verb|lam(M)| as example (coming from \texttt{tc/ntc})?
  \end{metanote}
\end{example}
\fi
% \beq
% y\oftp exp\vd\mnot(\lx x) &=& \{\lx y,\lx lam(\ly E\ x\ y),\\
% && \lx app\ (E_1\ x) \ (E_2\ 
% x)\}
% \eneq
% If we write $  M\in\| N\| \hastype \tau$ when $M$ is a ground
% instance of a term $N$ at type $\tau$, we can show that the $\mnot$
% algorithm behaves as the complement on sets of ground terms, \ie
% Finally, define $\Not(t)= 
% % \cN\hastype \tau$  iff $\cN=
% \{n\mid\mnot(t)\blip n\hastype \tau\}$. 

% \begin{metanote}
%   AM: Give an example of term complement?
% \end{metanote}

The correctness of the algorithm, analogously to previous
results~\cite{Bar90,Momigliano03}, can be stated in the following
constraint-conscious way, as required by the proof of the main
Theorem~\ref{thm:exclu}:
\begin{lemma}[Term Exclusivity] 
\label{le:excluT}
% Let $\mnot(t) \blip s \hastype \tau$. 
Let $\constr$ be consistent, $ s \in \mnot\SB\tau(t)$,
$FV(u)\subseteq\Gamma$ and $FV(s,t)\subseteq\vec X$.  It is not the
case that both $\sat{\Gamma}{\constr}{\exists \vec X {:} \vec \tau.~ u
  \eq t %\land \vec C
}$ and $\sat{\Gamma}{\constr}{\exists \vec X {:} \vec \tau.~ u \eq
  s % \land \vec D
}$.
  \end{lemma}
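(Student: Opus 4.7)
The plan is to prove the lemma by structural induction on $t$, following the case analysis of $\mnot\SB{\tau}$ from \refFig{term-not}. Suppose toward a contradiction that both satisfiability statements hold. Since $\constr$ is consistent, there exists some $\theta : \Gamma$ with $\theta \models \constr$, and then for suitable extensions $\theta_1, \theta_2$ of $\theta$ to $\vec{X}$ we must have $\theta_1(u) \eq \theta_1(t)$ and $\theta_2(u) \eq \theta_2(s)$. Because $FV(u) \subseteq \Gamma$ we have $\theta_1(u) = \theta(u) = \theta_2(u)$; write $w$ for this common ground term. So the crux reduces to showing that no ground $w$ can be simultaneously an instance of $t$ and of some $s \in \mnot\SB{\tau}(t)$.

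The base cases where $\mnot\SB{\tau}(t) = \emptyset$ (namely $\tau \in \{\unitTy, \nu, \abs{\nu}\tau'\}$, or $t$ a variable) are vacuous. For the pair case $t = \pair{t_1,t_2}$, every $s \in \mnot\SB{\tau_1 \times \tau_2}(\pair{t_1,t_2})$ has shape $\pair{s_1, Z}$ with $s_1 \in \mnot\SB{\tau_1}(t_1)$, or symmetrically $\pair{Z, s_2}$ with $s_2 \in \mnot\SB{\tau_2}(t_2)$, for a fresh $Z$. In the first subcase, the equality rules for pairs force $w = \pair{w_1,w_2}$ with $\theta_1(t_1) \eq w_1 \eq \theta_2(s_1)$; we then repackage this into two satisfiability claims about a fresh witness of type $\tau_1$ and invoke the induction hypothesis at $\tau_1$ on $t_1$ and $s_1$ to derive a contradiction. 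The second subcase is symmetric. For the constructor case $t = f(t')$, any $s \in \mnot\SB{\delta}(f(t'))$ is either $g(Z)$ with $g \neq f$, or $f(s')$ with $s' \in \mnot\SB{\tau'}(t')$. In the first sub-case, injectivity and distinctness of constructors in the equality rules rule out $w$ being both $f(\cdot)$ and $g(\cdot)$; in the second, we project the argument and apply the IH at $\tau'$.

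The main obstacle, and where care is needed, is the bookkeeping for the nominal setting when setting up the IH: the new satisfiability judgments must respect $\Gamma;\constr$, including all implicit freshness constraints carried by the $\#$-bindings in $\Gamma$, and the typing of the auxiliary witness variable used when projecting sub-terms. However, because the grammar of heads subject to complementation has been restricted (by left-linearization and by the prohibition of names, abstractions, and swappings), no equivariant-unification reasoning is required, and the equality rules of \refSec{formal} reduce the pair and constructor steps to the same shape arguments as in the classical first-order case of~\cite{Bar90}. The corresponding exhaustivity counterpart (needed later) will require additional machinery, but exclusivity goes through by this direct inductive contradiction argument.
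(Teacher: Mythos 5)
Your proof is correct and follows essentially the same route as the paper's: induction over the structure of the complemented term following the case analysis of $\mnot\SB{\tau}$, with the empty-complement cases vacuous, the distinct-constructor case ruled out by the equality rules, and the same-constructor and pair cases discharged by the induction hypothesis on the subterms. The only cosmetic difference is that you unfold satisfiability into explicit valuations and a common ground witness $w$ (using $FV(u)\subseteq\Gamma$ and consistency of $\constr$), where the paper instead appeals to an abstract substitutivity property of the constraint judgment; the underlying argument is identical.
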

  \begin{proof}
    See~\ref{app:exclu}.
  \end{proof}
 % We intend to address the general
% case of nominal term complement in future work< (\refSec{future}).
% \todo{Should we do this here? DECIDED AGAINST}
%% \begin{lemma}
%% \label{le:part}
%% For a fixed signature $\Sigma$, let $ s \hastype \tau$ be a
%% linear ``internal'' term.  Then: 
%% %
%% \ben
%% \item (Exclusivity)  It is not the case that $  t\in \| s
%%   \|\hastype \tau$ and $  t\in \| \mnot(s) \|\hastype \tau$.
%% \item (Exhaustivity) Either $ 
%%   t\in \| s \| \hastype \tau$ or $  t\in \| \mnot(s) \|\hastype
%%   \tau$.  
%% \enen
%% \end{lemma}

\subsection{Clause complementation \via generic operations}
\label{ssec:cc}

Clause complementation is usually described in terms of the
contraposition of the \emph{only-if} part of the completion of a
predicate~\cite{Bar90,Bruscoli94,Munoz-HernandezMM04cin}. We instead
present a judgmental, syntax-directed approach.  To complement atomic
constraints such as equality and freshness, we need
($\alpha$-)inequality and non-freshness; we implemented these using
type-directed code generation within the \aprolog interpreter.  We
write $\neqt_\delta$, $\nfr_{\nu,\delta}$, \etc as the names of the
generated clauses (\cf analogous notions in~\cite{fernandez05ppdp}).
Each of these clauses is defined as shown in \refFig{gen}, together
with mutually recursive auxiliary type-indexed functions
$\neqt\SB{\tau}, \nfr\SB{\nu,\tau}$, \etc which are used to construct
appropriate subgoals for each type.
% In particular, note that recursion through data types $\delta$ is
% broken by calling the \emph{predicates} $\neqt_\delta$ and
% $\nfr_{\nu,\delta}$.

\begin{figure}[tb]
\begin{eqnarray*}
\neqt\SB{\tau} &:& \Term{\Sigma}{\Gamma}{\tau} \times \Term{\Sigma}{\Gamma}{\tau} \to \Goal{\Sigma}{\Gamma}\\
\neqt\SB{\unitTy}(t,u) &=& \false\\
\neqt\SB{\tau_1 \times \tau_2}(t,u) &=& \neqt\SB{\tau_1}(\pi_1(t),\pi_1(u)) \orr \neqt\SB{\tau_2}(\pi_2(t),\pi_2(u))\\
\neqt\SB{\delta}(t,u) &=& \neqt_\delta(t,u)\\
\neqt\SB{\abs{\nu}{\tau}}(t,u) &=& \new \Aa{:}\nu.~\neqt\SB{\tau}(t \conc \Aa,u \conc \Aa)\\
\neqt\SB{\nu}(t,u) &=& t \fresh u\\
\neqt_\delta(t,u) &\ent& \bigvee\{\exists X,Y{:}\tau.~t \eq f(X) \andd
u \eq f(Y) \andd \neqt\SB{\tau}(X,Y) \\
&&\qquad\qquad\mid  f:\tau \to \delta \in \Sigma\}\\
&&\vee \bigvee\{\exists X{:}\tau,Y{:}\tau'.~ t\eq f(X) \andd u \eq g(Y) \\
&&\qquad\qquad\mid  f:\tau \to \delta,g:\tau' \to \delta \in \Sigma,f
\not= g\}
\end{eqnarray*}
\begin{eqnarray*}
\nfr\SB{\nu,\tau} &:& \Term{\Sigma}{\Gamma}{\nu} \times \Term{\Sigma}{\Gamma}{\tau} \to \Goal{\Sigma}{\Gamma}\\
\nfr\SB{\nu,\unitTy}(a,t) &=& \false\\
\nfr\SB{\nu,\tau_1 \times \tau_2}(a,t) &=&\nfr\SB{\nu,\tau_1}(a,\pi_1(t)) \orr \nfr\SB{\nu,\tau_2}(a,\pi_2(t))\\
\nfr\SB{\nu,\delta}(a,t) &=& \nfr_{\nu,\delta}(a,t)\\
\nfr\SB{\nu,\abs{\nu'}{\tau}}(a,t) &=& \new \Ab{:}\nu'.~\nfr\SB{\tau}(a,t \conc \Ab)\\
\nfr\SB{\nu,\nu}(a,b) &=& a \eq b\\
\nfr\SB{\nu,\nu'}(a,b) &=& \false \quad (\nu \neq \nu')\\
\nfr_{\nu,\delta}(a,t) &\ent& \bigvee\{\exists X{:}\tau.~t \eq f(X) \andd \nfr\SB{\nu,\tau}(a,X) \mid  f:\tau \to \delta \in \Sigma\}
\end{eqnarray*}
% \todo{Explain how concretion introduced above gets expanded out later
%   (or do so directly?)}
\caption{Inequality and non-freshness}\labelFig{gen}
\end{figure}

\begin{figure}[tb]
\begin{eqnarray*}
\mnotg(\true) &=& \false\\
\mnotg(\false) &=& \true\\
\mnotg(p(t)) &=& p^\nott(t)\\
\mnotg(t \eq_\tau u) &=& \neqt\SB{\tau}(t,u)\\
\mnotg(a \fresh_{\tau} u) &=& \nfr\SB{\nu,\tau}(a,u)\\
\mnotg(G \andd G') &=& \mnotg(G) \orr \mnotg(G')\\
\mnotg(G \orr G') &=& \mnotg(G) \andd \mnotg(G')\\
\mnotg(\exists X{:}\tau.~G) &=& \forall^*X{:}\tau.~\mnotg(G)\\
\mnotg(\new \Aa{:}\nu.~G) &=& \new \Aa{:}\nu.~\mnotg(G)
%\\
%\mnotg(\forall X{:}\tau.~G) &=& \forall X{:}\tau.~\mnotg(G)
\end{eqnarray*}
\caption{Negation of a goal}\labelFig{notgoal}
% \[
%  \begin{array}{c}
%  \infer[\forall^*\omega]{\upfgdn{\forall^{*} X{:}\tau.~G}}{
%    \bigwedge \{\upf{\Gamma,X{:}\tau}{\Delta}{\constr,C}{G} \mid
%    \satgn{\exists X{:}\tau.~C}\}}
% \end{array}
%  \]
% \caption{Omega rule for extensional universal quantification}\labelFig{ugen-omega}
\begin{eqnarray*}
  \mnotdi(\avx p(t) \ent G) &=& \bigwedge \{\avx p^\nott_i(u) \mid
  u \in\mnot\SB\tau(t)\}\andd\mbox{} \\ & &
\avx p^\nott_i(t) \ent \mnotg(G)
\end{eqnarray*}
\caption{Negation of a single clause} \labelFig{notoneclause}
\begin{eqnarray*}
%\mnotd(\forall(p(t_1) \ent G_1,\ldots,p(t_n)\ent G_n)) 
  \mnotd(\defp(p,\Delta)) &=& \Andd_{i=1}^n \mnotdi(\avx p(t_i)\ent
  G_i)\wedge\mbox{}%%%\\ &&  
\forall X.\ p^\nott(X) \ent \Andd_{i=1}^n p^\nott_i(X)
\\
&& \text{where } \Delta_p = \{p(t_1)\ent  G_1, \ldots, p(t_n)\ent
  G_n\} \\
&&\text{ is the set of all clauses in $\Delta$ with head $p$}.
\end{eqnarray*}
\caption{Negation of  $\defp(p,\Delta)$}  \labelFig{notclause}
%\end{figure*}
\end{figure}

%% Here we just show the simplest example: how to get odd numbers from
%% even.
%% \begin{verbatim}
%% ez: even(0).
%% es: even(s(s(X))) :- even(X).
%% \end{verbatim}
%% %% Now, $\mnot_D \ ez = \mathtt{neven} \ (\mnot\ (0)) = \mathtt{\{neven\
%% %% (s\ Y) \}}$.\\ $\mnot_D \ es =\mathtt{neven} \ (\mnot\ ( s (s\ X))) =
%% %% \mathtt{\{neven\ (0), neven\ (s\ (0)) \}$\\
%% %% $  \cup \{neven(s ( s\ X)) <- even\ X} \}$.
%% \dots
%%  By ``unifying'' the two sets, that is unifying each
%% member of the first with each member of the latter -- note that this
%% operation is associative and commutative, so order does not matter --
%% and renaming, we obtain the expected program.
%% \begin{verbatim}
%% oz: odd(s(0)).
%% os: odd(s(s(X))) :- odd(X).
%% \end{verbatim}

Complementing goals, as achieved \via the $\mnotg$ function
(\refFig{notgoal}), is quite intuitive: we just put goals in negation
normal form, respecting the operational semantics of failure. Note
that the self-duality of the $\new$-quantifier
(\cf~\cite{pitts03ic,gabbay02fac}) allows goal negation to be applied
recursively. The existential case is instead more delicate: a well
known difficulty in the theory of negation elimination is that in
general Horn programs are not closed under complementation, as first
observed in~\cite{MP88}; if a clause contains an existential variable,
\ie a variable that does not appear in the head of the clause, the
complemented clause will contain a {universally} quantified goal, call
it $\forall^* X {:} \tau.\ G$.  Moreover, this quantification cannot
be directly realized by the standard \emph{generic} search operation
familiar from uniform proofs~\cite{Miller91apal}.  In the latter case $\PIXA G$ succeeds
iff so does $G[a/X]$, for a new eigenvariable $a$, while the
$\forall^*$ quantification refers to \emph{every} term in the domain,
\viz $\forall^* X {:} \tau.\ G$ holds iff so does $ G[t/X]$ for
{every} (ground) term of type $\tau$. We call this \emph{extensional}
universal quantification.

We add to the rules in \refFig{uapf-aprolog}  the following $\omega$-rule for extensional universal
quantification in the sense of Gentzen and others:  
\begin{center}
  \[
  \begin{array}{c}
    \infer[\forall^*\omega]{\upfgdn{\forall^{*} X{:}\tau.~G}}{
    \bigwedge \{\upf{\Gamma,X{:}\tau}{\Delta}{\constr,C}{G} \mid
    \satgn{\exists X{:}\tau.~C}\}}
  \end{array}
  \]
\end{center}
This rule says that a universally quantified formula
$\forall^* X{:}\tau.G$ can be proved if
$\upf{\Gamma,X{:}\tau}{\Delta}{\constr,C}{G}$ is provable for every
constraint $C$ such that $\satgn{\exists X{:}\tau.~C}$ holds. 
%
%  to implement from the \lp standpoint, extensional
% quantification has been interpreted by case analysis~\cite{Brogi90}
% and \SLD-derivations have been extended with such a step.
%We follow the traditional notion of a (finite)
%\emph{covering} for a set $\cU$, \eg a set of terms $T$ such that
%$t\in \cU$ iff $t$ is a ground instance of a term $t'$ of $T$.
% \refFig{ugen} shows our take on the proof search semantics of the
% $\forall^*$-quantifier: at every stage, as dictated by the type of the
% quantified variable $X$, we can either break the recursion by viewing
% $\forall^*$ as \emph{generic} quantification (rule
% $\forall^*\forall$), or we can instantiate $X$ by performing a
% one-layer type-driven case distinction and further recur to expose the
% next layer by introducing new $\forall^*$ quantifiers.
Since this is hardly practical, the number of candidate constraints $C$ being
 infinite, %   we approximate it by  modifying the interpreter so as to
 % perform a form of case analysis, similarly
 % to~\cite{Munoz-HernandezMM04cin}:
 we operationalize this rule in our implementation, similarly
 to~\cite{Munoz-HernandezMM04cin}, by alternating between using the
 traditional $\forall$R rule and type-directed expansion of the
 quantified variable, as shown in \refFig{ugen}: at every stage, as
 dictated by the type of the quantified variable, we can either
 instantiate $X$ by performing a one-layer type-driven case
 distinction and further recur to expose the next layer by introducing
 new $\forall^*$ quantifiers, or we can break the recursion by viewing
 $\forall^*$ as {generic} quantification. The latter is available in
 the (first-order) Hereditary Harrop formul\ae\ extension of \aprolog.
 This procedure is sound but may not be complete w.r.t.\
 $\forall^*\omega$.

\begin{figure}[tb]
\[
\begin{array}{c}
\infer[\forall^*\forall]{\upfgdn{\forall^* X{:}\tau.~G}}{\upf{\Gamma,X{:}\tau}{\Delta}{\constr}{G}}\quad
\infer[\forall^*\unitTy]{\upfgdn{\forall^* X{:}\unitTy.~G}}{\upfgdn{G[\unit/X]}}
\medskip\\
\infer[\forall^*{\times}]{\upfgdn{\forall^* X{:}\tau_1\times\tau_2.~G}}{\upfgdn{\forall^*X_1{:}\tau_1.\forall^*X_2{:}\tau_2.~G[\pair{X_1,X_2}/X]}}
\medskip\\
\infer[\forall^*{\mathsf{abs}}]{\upfgdn{\forall^* X{:}\abs{\nu}{\tau}.~G}}{\upfgdn{\new \Aa{:}\nu.\forall^*Y{:}\tau.~G[\abs{\Aa}{Y}/X]}}
\medskip\\
\infer[\forall^*{\delta}]{\upfgdn{\forall^*X{:}\delta.~G}}{\upfgdn{\bigwedge\{\forall^*Y{:}\tau.~G[f(Y)/X]\mid  f:\tau \to \delta \in \Sigma\}}}
\end{array}
\]
%%commented algo-rules -a,
\caption{Proof search rules for extensional universal
  quantification}\labelFig{ugen}
\end{figure}

\smallskip
We now move to \emph{clause} complementation, which is carried out
definition-wise: %given is now unsurprising:
if $\forall(p(t) \If G)$ is the $i$-th clause in $
\defp(p,\Delta)$, $i\in 1\ldots n$, its complement must contain a
``factual'' part motivating failure due to clash with (some term in) the
head; the remainder $\mnotg (G)$ expresses failure in the body, if
any.  This is accomplished in \refFig{notoneclause} by the $\mnotdi$
function, where a set of negative facts is built \via \emph{term}
complementation $\mnot (t)$;
%, namely $\BAnd_{s\in\  \mnot(t)}\Pi( \neg q(s)\If\top)$; %, whose fresh free variables are universally
%closed; 
moreover the negative counterpart of the source clause is obtained
\via complementation of the body. Finally all the contributions from
each source clause in a definition are merged by conjoining the above
in the body of a clause for another new predicate symbol, say
$p^\nott(X)$, which calls all the $p^\nott_i$ (\refFig{notclause}).

We list in \refFig{not-tc} the complement of the typechecking
predicate from Section~\ref{sec:tutorial}, which we have simplified by
% This results from a final
% (not yet fully implemented) optimization pass consisting of
renaming and inlining the definitions of the $p^\nott_i$.\footnote{The
unsimplified definition consists of more than $40$ clauses.} 
As expected, local
variables in the application and projection cases yield the
corresponding $\forall^*$-quantified bodies.

\begin{figure}[tb]
\begin{small}
\begin{verbatim}
pred not_tc ([(id,ty)],exp,ty).
not_tc([],var(_),_).
not_tc([(_,_)| G],var(X),T) :- not_tc(G,var(X),T).
not_tc(G,app(M,N),U)        :- forall* T. (not_tc(G,M,arr(T,U)); 
                                           not_tc(G,N,T)).
not_tc(G,lam(M),T ==> U)    :- new x. not_tc([(x,T)|G],M@x,U).
not_tc(G,pair(M,N),T ** U)  :- not_tc(G,M,T) ; not_tc(G,N,U).
not_tc(G,fst(M),T)          :- forall* U. not_tc(G,M,T ** U).
not_tc(G,snd(M),U)          :- forall* T. not_tc(G,M,T ** U).
not_tc(_,lam(_),unitTy).
not_tc(_,lam(_),_ ** _).
not_tc(_,unit,_ ==> _).
not_tc(_,unit,_ ** _).
not_tc(_,pair(_,_),unitTy).
not_tc(_,pair(_,_),_ ==> _).
\end{verbatim}
\end{small}
\caption{Negation of typechecking predicate (with manual simplification)}\labelFig{not-tc}
\end{figure}

\smallskip

The most important property for our intended application is soundness,
which we state in terms of exclusivity of clause complementation.
Extend the signature $\Sigma_p$ as follows: for every $p$ add a new
symbol $p^\nott$ and for every clause $p_i\in(\defp(p,\Delta))$ add
new $p^\nott_i$.  Let $\Delta^{-} = \mnotd(\defp(p,\Delta))$ for all $p$
in $\Sigma_P$.

\begin{theorem}[Exclusivity] 
\label{thm:exclu}
 Let $\constr$ be consistent. It is not the case that $\upfgdn{G}$ and $
  \upf{\Gamma}{\Delta^{-}}{\constr} { \mnotg(G)}$. 
%%  \gna\sn (\cP^-\And\cD) \imm \neg Q$.
% \begin{enumerate}
% \item It is not the case that $\upfgdn{G}$ and $
%   \upf{\Gamma}{\Delta^{-}}{\constr} { \mnotg(G)}$. 
% \item It is not the case that $\apfgdn D p(t)$ 
% %$ \ga\spp (\cP\And\cD) \imm Q$
%  and $\apf \Gamma {\Delta^{-}} \constr {\mnotd(D)} {\neg p(t)}$ 
% %%  \gna\sn (\cP^-\And\cD) \imm \neg Q$.  
%\end{enumerate}
\end{theorem}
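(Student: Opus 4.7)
\medskip
\noindent\textbf{Proof proposal.} The plan is to proceed by structural induction on $G$, using the term-exclusivity lemma (Lemma~\ref{le:excluT}) together with an analogous statement for the non-freshness predicate, and an auxiliary induction on the structure of the definition $\defp(p,\Delta)$ to handle the atomic-predicate case. Throughout, the key invariant is that any two derivations $\upfgdn{G}$ and $\upf{\Gamma}{\Delta^-}{\constr}{\mnotg(G)}$ arise from a \emph{single} consistent constraint context $\constr$, so that constraints accumulated by one side are visible to (and cannot conflict with) the other.

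The logical cases are the easy ones. For $G = \true$ and $G = \false$ the claim is immediate from the definition of $\mnotg$ and the shape of the inference rules. For $G = G_1 \andd G_2$, a derivation of $\mnotg(G) = \mnotg(G_1) \orr \mnotg(G_2)$ must use $\orR_i$ to derive some $\mnotg(G_i)$ under a constraint extension $\constr,C$; since the original derivation of $G_1 \andd G_2$ yields derivations of both $G_i$ under the \emph{same} $\constr$, and both $\upR$-style rules only refine $\constr$ monotonically, the induction hypothesis on $G_i$ gives the contradiction. The $\orr$ case is dual. For $G = \new \Aa{:}\nu.~G'$, the $\newR$ rule introduces the same fresh name on both sides, so IH applies directly. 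For $G = \exists X{:}\tau.~G'$, the original uses $\exR$ to pick a witnessing constraint $C$ with $\sat{\Gamma}{\constr}{\exists X{:}\tau.~C}$, while $\mnotg(G) = \forall^* X{:}\tau.\mnotg(G')$ is proved by the $\omega$-rule in Figure~\ref{fig:uapf-aprolog}, which by hypothesis instantiates \emph{every} satisfiable $C$; take the one used on the positive side and invoke IH.

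The atomic cases rest on the complement correctness lemmas. For $G = t \eq_\tau u$, a derivation of $\mnotg(G) = \neqt\SB{\tau}(t,u)$ unfolds (by its definition in Figure~\ref{fig:gen}) to an existentially quantified disjunction exhibiting witnesses $f(X)$ and $g(Y)$ with either different head symbols or structurally different subterms; together with $\sat{\Gamma}{\constr}{t \eq u}$ from the positive side this would contradict Lemma~\ref{le:excluT} applied to an appropriate $s \in \mnot\SB{\tau}(t)$. An entirely parallel argument, by a routine induction on $\tau$, handles the freshness case via $\nfr\SB{\nu,\tau}$. For the predicate case $G = p(u)$, the positive derivation backchains on some clause $\forall\vec X{:}\vec\tau.~G_i \impp p(t_i) \in \defp(p,\Delta)$, yielding constraints $\vec C$ with $\sat{\Gamma}{\constr}{\exists \vec X.\vec C \andd t_i \eq u}$ and a sub-derivation of $G_i$. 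On the negative side, $p^\nott(u)$ forces each $p^\nott_i(u)$ by definition (Figure~\ref{fig:notclause}); by Figure~\ref{fig:notoneclause} each $p^\nott_i(u)$ either succeeds via a ``factual'' clause $p^\nott_i(s)$ for some $s \in \mnot\SB{\tau}(t_i)$ (contradicting Lemma~\ref{le:excluT} against the positive constraint $u \eq t_i$), or via the clause $p^\nott_i(t_i) \ent \mnotg(G_i)$, which produces a derivation of $\mnotg(G_i)$ under a refinement of $\constr$; the outer IH on $G_i$ then gives the contradiction.

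The hard part will be the $\exists$ / $\forall^*$ case, because the $\omega$-rule quantifies over all satisfiable constraint extensions, not just ground instantiations, so one has to argue carefully that the particular constraint $C$ chosen by $\exR$ on the positive side is among those to which the $\omega$-premise applies, and that the resulting two derivations share a compatible $\constr,C$ so that IH is applicable. A subtler issue is ensuring monotonicity of derivability under constraint extension (so that the inductive hypothesis, stated for a fixed $\constr$, transfers to the refined contexts $\constr,C$ and $\constr,\vec C$ used in the sub-derivations). This monotonicity will need to be stated and proved as a separate lemma, by a straightforward induction on derivations in Figure~\ref{fig:uapf-aprolog}, and then threaded through both the atomic-predicate and the existential cases; with it in hand, the inductive argument sketched above goes through, and the full details can be relegated to~\ref{app:exclu}.
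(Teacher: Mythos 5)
Your overall strategy coincides with the paper's: isolate Term Exclusivity (Lemma~\ref{le:excluT}) and a companion exclusivity result for $\neqt$ and $\nfr$ (the paper states this separately as Lemma~\ref{co:exlcuC}, proved by induction on $\tau$ exactly as you sketch), dualize the logical connectives, match the witnessing constraint of $\exR$ against one of the premises of the $\forall^*\omega$ rule, split the negated-clause case into a ``factual'' head-clash subcase and a body subcase, and appeal to weakening of derivability under constraint extension (which the paper imports from Lemma~4.14 of the \aprolog semantics paper rather than reproving).

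There is, however, one genuine gap: your induction measure. You propose structural induction on the goal $G$, supplemented by an ``auxiliary induction on the structure of the definition $\defp(p,\Delta)$,'' and in the atomic case you apply the ``outer IH'' to the clause body $G_i$. But $G_i$ is not a subformula of $p(u)$ --- it can be arbitrarily large and, for any recursive predicate (which is the typical case, e.g.\ \texttt{tc}), it mentions $p$ itself --- so structural induction on the formula does not license that IH application, and an induction over the finite set of clauses in $\defp(p,\Delta)$ does not help since the recursion lives in the bodies, not in the clause list. The paper instead performs induction on the structure of the \emph{positive derivation} $\SP :: \upfgdn{G}$: in the backchaining case the derivation of the instantiated body is a proper subderivation of $\SP$, so the IH applies after weakening both sides to the common constraint extension. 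Your argument is repaired simply by switching to this measure (or to the height of $\SP$); everything else you describe then goes through as in the paper.
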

\begin{proof}
  See%Appendix
~\ref{app:exclu}.
\end{proof}
% The proof, by mutual induction on the derivation of $\upfgdn{G}$ and
% $\apfgdn D p(t)$, follows the lines in~\cite{Momigliano00}.

\smallskip Completeness (exhaustivity) can be stated as follows: if a
goal $G$ finitely fails from $\Delta$, then its complement $\mnotg(G)$
should be provable from $\Delta^{-}$.  In a model checking context, this
 is a desirable, though not vital property.  % In fact, it is well
% known~\cite{FittingTCS} that the semantics of \NF has a very high
% degree of
% unsolvability.
%
% and completeness results have been proven \wrt
% a three-valued semantics, due to \citeN{Kunen87}
% .  Logic programs
% may define recursively enumerable relations, and it is only possible to
% define the complement of an r.e.\ relation if and only if it is
% recursive.  We therefore cannot expect true completeness results
% unless we restrict to recursive programs.  As a simple example, if
% $\Delta$ defines the r.e. predicate $halts$ which recognizes Turing
% machines that halt on their inputs, it is obvious that the predicate
% $\nott halts$ defined by $\mnotd(\Delta)$ cannot define the exact
% complement of $halts$.  Determining whether a logic program defines a
% recursive relation is an orthogonal issue, but see, \eg the
% termination analysis approach taken in the Twelf
% system.  Nevertheless, we do not believe
% completeness is necessary for our approach to be useful, even for
% systems with undecidable predicates such as first-order sequent
% calculi or undecidable typing or subtyping relations.
 Logic programs in fact may define recursively enumerable relations,
 and the complement of such a program will not capture the complement
 of the corresponding relation --- consider for a simple example, a
 $\Delta$ that defines the r.e.\ predicate $halts$ that recognizes
 Turing machines halting on their inputs; it is obvious that the
 predicate $\nott halts$ cannot define the exact complement of
 $halts$.  We therefore cannot expect true completeness results unless
 we restrict to recursive programs, and determining whether a logic
 program defines a recursive relation is an orthogonal well-studied
 issue, see, \eg the termination analysis approach taken in the Twelf
 system~\cite{Pientka05}.
%\todo{Cite appropriate source for Twelf? --jrc - DONE}
In any case, we do not believe completeness is
necessary for our approach to be useful, since we are mostly
interested in testing  systems with
undecidable predicates such as first-order sequent calculi or
undecidable typing/evaluation relations.

\section{Experimental evaluation}
\label{sec:experiments}
We implemented counterexample search in the \aprolog interpreter using
both (grounded) negation-as-failure and negation-elimination, as
described in the previous section.  In this section, we present
performance results comparing these approaches.  
We first measure the time needed by each approach to find
counterexamples (TFCE). Then 
we measure the amount of time it takes for a given approach to
exhaust its search space up to a given depth bound (TESS).

For  negation-elimination, we considered two variants, one
(called \NE) in which the $\forall^*$ quantifier is implemented fully
as a primitive in the interpreter, and a second in which $\forall^*$
is interpreted as ordinary intensional $\forall$.  The second
approach, which we call \NEMinus, is incomplete relative to the first;
some counterexamples found by \NE may be missed by \NEMinus.
Nevertheless, \NEMinus is potentially faster since it avoids the
overhead of run-time dispatch based on type information (and since it
searches a smaller number of counterexample derivations).

All test have been performed under Ubuntu 15.04 on an Intel Core i7 CPU
870, 2.93GHz with 8GB RAM.\@ We, somewhat arbitrarily, time-out the
computation when it exceeds 40 seconds.

\subsection{The $\lambda$-calculus with pairs}
\label{ssec:exp-lambda}

We first go back to the examples in Section~\ref{sec:tutorial}, using
both the ``buggy'' version we have presented and the debugged version
in~\ref{app:code}.

\paragraph*{Time to find counterexamples}

 For checks involving substitution, all
counterexamples were found by all approaches in less than 0.01
seconds. Table~\ref{tab:finding-time} shows the times needed for 
checks involving \emph{typechecking}, in seconds.  The first column shows the name of the
checked property and the others the time taken together with        the
search depth where the counterexample  has been found by each technique.
%One counterexample could not be found by \NF\ within five seconds.

\begin{table}[tb]
\caption{TFCE and relative depths for code
  with bugs
\label{tab:finding-time}}
\begin{minipage}{\textwidth}
\begin{tabular}{lccc}
\hline\hline
 & \NF & \NE & \NEMinus\\ \hline
\texttt{tc\_weak}		&$<$0.01, 3	&$<$0.01, 2
             &$<$0.01, 2\\
\texttt{tc\_subst}	        &$<$0.01, 3	&\ \ 0.17,  3	&\ \ 0.15,
                                                                   3\\
\texttt{tc\_pres}		&$<$0.01, 4      &$<$0.01, 4
             &$<$0.01, 4 \\
  \texttt{tc\_prog}		&$<$0.01, 4      &$<$0.01, 5
             &$<$0.01, 5\\
\texttt{tc\_sound}	        & \ \ 3.76, 5     &\ \ 2.79, 5 &\ \ 2.14,
                                                                   5\\ \hline\hline
\end{tabular}
%\vspace{-2\baselineskip}
 \end{minipage}
\end{table}

% In this benchmark, \NF is the clear winner, % . Still, these results are
% % encouraging, because they suggest that both 
% although both \NE and \NEMinus are
% competitive with \NF,
In this benchmark, the three approaches \NF,
 \NE, and \NEMinus are basically
equivalent, despite the fact that the latter two potentially cover
more of the search space within a given depth bound.  This is not
always the case, as some of the other case studies mentioned in
Section~\ref{further} showcase. In fact,
axiomatizing what holds to be true is intrinsically more economical
than stating what is false. This is one reason why techniques such as
\NF, which gives an operational rather than logical solution to the
frame problem, have been so empirically
successful. % Moreover, in at least one case (\verb|tc_sound|) the \NE
% approaches were able to find a counterexample that \NF was not able to find
% within a reasonable interactive time. 
These results also indicate that pragmatically speaking 
% in some cases % (at least for finding ``shallow'' bugs)
%, the relatively incomplete search strategy of \NEMinus can be used in
%place of \NE.  T
the faster \NEMinus approach can be used first, with
\NE as a backup if no counterexamples are found using \NEMinus.

When using the derivation-first approach, the counterexamples found by
\NF\ (and discussed in Section~\ref{sec:tutorial}) are in all cases but
one (\texttt{tc\_prog}) \emph{ground instances} of the ones found by
NE\@. In this benchmark there is not a significant difference in the
depth bound, but in general \NF tends to find the counterexample at a
smaller bound than \NE (and \NEMinus).
% again they are found, in all cases but one (\texttt{tc\_weak}), within
% a {smaller} or equal depth level.
% This is
% further evidence that \NE works ``harder'' then \NF:

\paragraph*{Time to exhaust a finite search space}

% We next measured the amount of time it takes for a given approach to
% exhaust its search space up to a given depth bound $n$. 
For each technique and test, we measured TESS for $n = 1, 2,\ldots$ up
to the point where we time-out.  The experimental results are shown in
Table~\ref{tab:search-time}.  For each test, we used the largest $n$
for which \emph{all} three approaches were successful within the
time-out. Note that we report the results according to the ``best''
ordering of subgoals that we have experimented
with. %\todo{Graph time vs. depth}

These results are mixed.  In some cases, particularly those involving
substitution, \NE and \NEMinus are clearly much more efficient (up to
10 times faster) than
the \NF\ approach.  In others, particularly key results such as
substitution and type soundness, \NE often takes significantly longer,
up to five times, with
\NEMinus usually doing better.  % For \verb|tc_sound| the situation is
% particularly bad, with \NE almost two orders of magnitude slower than
% \NF, with \NEMinus one order slower.
On the other hand, for the
\verb|tc_prog| checks, both \NE-based techniques are competitive.

However, it is important to note that the search spaces considered by
each of the approaches for a given depth bound are not equivalent.
Thus, it is not meaningful to compare the different approaches
directly based on the search bounds.  Indeed, it is not clear how we
should report the sizes of the search spaces, since even a simple
unifier $X = f(c,Z)$ represents an infinite (but clearly incomplete)
subset of the search space.  We can, however, get an idea of the
relationship between the search spaces based on the depths at which
counterexamples are found.

%% taking the ref's suggestion
%%

% One source of inefficiency for both \NE and \NEMinus appears to be due
% to the fact that the negated clauses produced by the syntactic
% algorithm we described in Section~\ref{sec:neg-el} introduce a high
% branching factor with many ``dead ends'' since the body of a negation
% is essentially a conjunction of disjunctions.  

The translation of negated clauses in \NE and \NEMinus
(Section~\ref{sec:neg-el}) is a conjunction of disjunctions.  This
causes our algorithm to do inefficient backtracking.  This can
probably be improved using standard optimization techniques which are
not implemented in the current \aprolog prototype. An alternative is
changing the clause complementation algorithm to obtain a more
``succinct'' negative program: some initial results are presented in~\cite{Pessina15}.

A second major source of inefficiency, which accounts for the
difference between \NEMinus and \NE, is the extensional quantifier; in
fact, \NEMinus outperforms \NE significantly for checks
\texttt{tc\_weak, tc\_subst, tc\_sound} involving extensional
quantifiers in the negation of \texttt{tc}.  The culprit is likely the
implementation of extensional quantification as a built-in proof
search operation, which dispatches on type information at run-time.
This is obviously inefficient and we believe it could be improved.
However, doing so appears to require significant alterations to the
implementation. % , such as support for higher-order programming.
% \todo{Clarify}

\begin{table}[tb]
\caption{Time to search up to bound $n$ for debugged code }
\label{tab:search-time}
\begin{minipage}{\textwidth}
\begin{tabular}{lcccc}
\hline\hline
 & $n$ &                \NF & \NE & \NEMinus\\ \hline
\texttt{sub\_fun}  &5& 1.38 & 0.25 & same as \NE\\
\texttt{sub\_id}   &7 & 9.85 & 0.82 & same as \NE\\
\texttt{sub\_fresh}&4 & 3.93 & 0.75 & same as \NE\\
\texttt{sub\_comm} &4 & 39.39 & 5.96 & same as \NE\\
\texttt{tc\_weak}  &5 & 2.14 & 6.58 & 3.33\\
\texttt{tc\_subst} &4 & 6.15 & 33.56 & 26.86\\
\texttt{tc\_pres}  &6 & 0.27 & 1.04 & same as \NE\\
\texttt{tc\_prog}  &8 & 6.84 &8.18 & same as \NE\\
\texttt{tc\_sound} &7 & 6.15 & 29.4 & 6.01\\\hline\hline
\end{tabular}
%    \vspace{-2\baselineskip}
 \end{minipage}
\end{table}

\paragraph*{Variations}

We performed also some limited experiments comparing the two
approaches based on negation-as-failure, and by changing the order of
subgoals (Table~\ref{tab:gen_fst}) \wrt TFCE and TESS\@.  Not
surprisingly, we found that placing the generator predicates at the
end of the list of hypotheses, and giving preference to most
constrained predicates (in terms of least number of clauses),
generators included, can make some difference, especially in terms of
TESS\@.  In fact, time-outs in this case are more frequent.
  However, type-driven search, that is, putting the type generator
  first, seems in this case the most successful strategy in terms of
  TFCE.

The most constrained goal first heuristic can be applied to \NE and \NEMinus
as well.  We will not report the experimental evidence, but point out
the in the \NE case we definitely want to give precedence to predicates that
do \emph{not} use extensional quantification. In both cases, by the
very fact that negated predicates are now positivized, they can be
re-ordered as appropriate. This
in contrast with \NF, where negated predicates must occur
after grounding. Finally, we remark that those orderings are not hard-coded but
stay in the hands of the user, as she writes her \texttt{\#check}
directives. This is important, as general heuristics cannot replace
the user understanding of the SUT\@.

\begin{table}[tb]
\caption{TFCE and TESS with \NF and different orderings on
  \texttt{tc\_prog} and \texttt{tc\_sound}
\label{tab:gen_fst}}
\begin{minipage}{\textwidth}
\begin{tabular}{lcc}
\hline\hline
 \texttt{check} & TFCE& TESS \\ \hline
\texttt{tc([],E,T),gen\_exp(E) => progress(E)} & $<$0.01, 4 & 6.84, 8\\
\texttt{gen\_exp(E),tc([],E,T) => progress(E)} & $<$0.01, 4 & 31.2, 8\\ \hline
\texttt{tc([],E,T),steps(E,E'),gen\_ty(T),gen\_exp(E') =>
  tc([],E',T)} & 3.74, 5 & 6.07, 7\\
\texttt{tc([],E,T),steps(E,E'),gen\_exp(E'),gen\_ty(T)  =>
  tc([],E',T)} & 3.98, 5 & 6.17, 7\\
\texttt{steps(E,E'),tc([],E,T),gen\_ty(T),gen\_exp(E') =>
  tc([],E',T)} & 5.62, 5 & 7.38, 7\\
\texttt{gen\_ty(T),tc([],E,T),gen\_exp(E'),steps(E,E') =>
  tc([],E',T)} & 1.11, 5 & t.o., 7 \\
\texttt{gen\_ty(T),tc([],E,T),steps(E,E'),gen\_exp(E') =>
  tc([],E',T)} & 0.36, 5 & 18.9, 7 \\
\texttt{gen\_ty(T),gen\_exp(E'),tc([],E,T),steps(E,E') =>
  tc([],E',T)} & 9.82, 5 & t.o., 7 \\
\texttt{gen\_exp(E'),gen\_ty(T),tc([],E,T),steps(E,E') =>
  tc([],E',T)} & t.o. & t.o.,    7
\\ \hline\hline
\end{tabular}
%\vspace{-2\baselineskip}
 \end{minipage}
\end{table}

%\paragraph*{VOLP}
\subsection{Security type systems}
\label{ssec:volpano}

For another test, we selected a variant of a case study mentioned in
\cite{BlanchetteITP10}: an encoding of the security type system
of~\citeN{Volpano1996}, whereby the basic imperative language
\emph{IMP} is endowed with a type system that prevents information
flow from private to public variables.  Given a fixed assignment
\emph{sec} of security levels (naturals) to variables, then lifted to
arithmetic and Boolean expressions, the typing judgment $l \vdash c$
reads as ``command $c$ does not contain any information flow to
variables lower then $l$ and only safe flows to variables $\geq l$. We
inserted two mutations in the typing rule, one (\texttt{bug1})
suggested by~\citeN{NipkowK14}, which forgets an assumption in the
sequence rule; the other (\texttt{bug2}), inverting the first
disequality in the assignment rule --- the latter slipped in during
encoding.  We show in \refFig{volp} the typing rules, where the
over-strike and the box signal the inserted mutations.

% Blanchette and Nipkow reports that by inserting a mutation in the
% typing rule from~\cite{NipkowK14} for command sequencing, they get a
% counterexample to the crucial \emph{non-interference} property.

\begin{figure}[tb]
  \centering
\[
  \begin{array}{c}
    % \ianc{}{l \vdash SKIP}{} \quad
    \ibnc{\secl a\leq \secl x}{l \leq \secl x}{l \vdash x := a}{}
    \qquad
    \ibnc{\fbox{$\secl x\leq \secl a$}}{l \leq \secl x}{l \vdash x := a}{\mathtt{bug2}}
    \vsk
    \ibnc{l \vdash c_1}{\xcancel{l \vdash c_2}}{l \vdash c_1 ; c_2}{\mathtt{bug1}}
    \qquad
    \ianc{\max\ (\secl b) \ l \vdash c}{l  \vdash WHILE \ b\ DO \ c}{}\vsk
    \ianc{}{l \vdash SKIP}{}  \qquad
    \ibnc{\max\ (\secl b) \ l \vdash c_1}{\max\ (\secl b) \ l \vdash c_2}{l
    \vdash IF \ b\ THEN \ c_1 \ ELSE\ c_2}{}
  \end{array}
  \]
\label{fig:volp}
  \caption{Bugged rules for the Volpano et al.~type system}
\end{figure}

The properties that are influenced by those mutations relate states that agree on the value of
each variable \emph{below} a certain security level, denoted as
$\siml{\sigma_1}{\leq}{\sigma_2}$
(resp. $\siml{\sigma_1}{<}{\sigma_2}$) iff $\forall x.\ \secl x \leq l
\rightarrow \sigma_1(x) = \sigma_2(x)$ (resp. $<$). Given a standard
big-step evaluation semantics for IMP~\cite{Winskel}, relating an initial state
$\sigma$ and a command $c$ to a final state $\tau$ ($\langle c,\sigma\rangle\downarrow
  \tau$):
\begin{description}
\item[Confinement] If $\langle c,\sigma\rangle\downarrow \tau$ and $l\vdash
  c$ then $\siml{\sigma} {<} {\tau}$;
\item[Non-interference] If $\langle c,\sigma\rangle\downarrow
  \sigma'$,  $\langle c,\tau\rangle\downarrow \tau'$, 
  $\siml{\sigma} {\leq} {\tau}$ and  $0\vdash
  c$ then $\siml{\sigma'} {\leq} {\tau'}$;
\end{description}

Our encoding is fully relational, where, for example, states and
security assignments are reified in association lists. We cannot rely
on built-in types such as integers and booleans, which  $\alpha$Check 
does not handle yet, but we have to employ hand-written (inefficient)
datatypes for unary natural numbers and booleans. Finally, this case study does not
exercise binders intensely, as nominal techniques have a role in
representing program variables as names and using freshness to
guarantee well-formedness of states and of variable security settings.

\begin{table}[t]
\caption{TFCE on Volpano benchmark}
\label{tab:volp}

\begin{tabular}{llccc}
\hline\hline
% & &                     &   TFCE &  & & TESS &   \\
& &                    \NF &   \NE & \NEMinus \\
\hline
\texttt{bug1} & Confinement     &     0.03, 5      & 0.76, 7 & t.o.\\
& Non-interference &     10.32, 8   & 8.13, 8   & t.o. \\
\texttt{bug2} & Non-interference &     3.91, 8   & 3.61, 8   & t.o. \\
                               \hline\hline
\end{tabular}
%\vspace{-2\baselineskip}
\end{table}

\begin{table}[b]
\caption{TESS on Volpano benchmark}
\label{tab:volp-tess}

\begin{tabular}{lccc}
\hline\hline
%& &                     &   TFCE &  & & TESS &   \\
&                      \emph{n} & \NF &   \NE\\
\hline
Confinement     &      8& 9.74 & 4.31     \\
Non-interference &     8 & 13.14 & 6.94\\
                               \hline\hline
\end{tabular}
%\vspace{-2\baselineskip}
\end{table}

We sum up the results in Table~\ref{tab:volp} and~\ref{tab:volp-tess}.
% where, in the latter we do
% in the
% interest of conciseness, we consider  in the same
% row both TFCE and TESS up to a common $n$ for the debugged
% typing rules. 
A first thing to note is that NE
is doing fairly well \wrt \NF catching the non-interference counterexamples,
notwithstanding having essentially to rely on extensional
quantification:  \NEMinus in fact shows its incompleteness here, failing
to find any counterexample --- this is why we do not even bother to
measure its TESS-behavior. \NE's TESS behavior is also quite pleasing
and more so asymptotically, as we show in~\refFig{log_ni}.

For \texttt{bug1} \NE finds this counterexample to confinement: $c$ is
$(SKIP\ ; x := 0)$, $\secl x = 0$, $l > 0$, $\sigma$ maps $x$ to  a
non-zero level and $\tau$ to $0$. This would not hold were the typing
rule to check the second premise. A not too dissimilar counterexample
falsifies non-interference: $c$ is $(SKIP\ ; x := y)$,
$\secl x= 0,\sec y > 0$, $l=0$ and $\sigma$ maps $y$ to $n > 1$ and
$x$ unconstrained (i.e.\ to a logic variable), while $\tau$ maps $y$ to
$> 0$ and keeps $x$ unconstrained. \NF finds ground instances of the
above, for example in the first case $l=4$. We omit the details of the
counterexample to \texttt{bug2}.
% This takes around 10 seconds.  If we
% are willing to double the time and insert generators for
% $\sigma,\tau$, we get the same cex but with $x$ assigned to
% $z$.

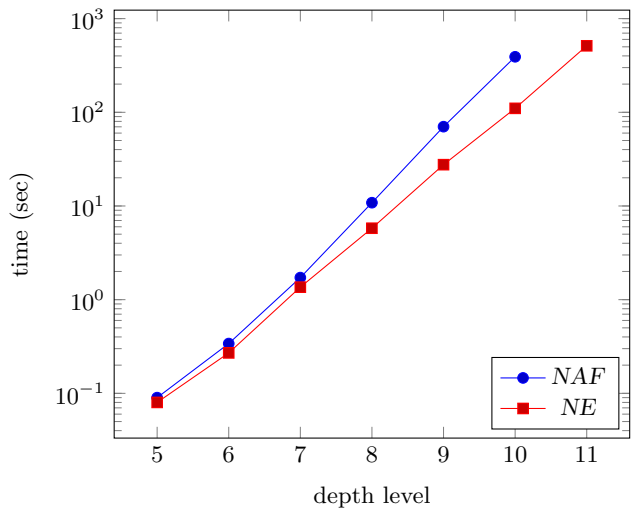
\begin{figure}[tb]
\centering
\begin{tikzpicture}
\begin{axis}[
    ymode=log,
    xlabel={depth level},
    ylabel={time (sec)},
    legend entries={\NF,\NE},
    legend style={at={(0.98,0.02)},anchor=south east}
]
% NF
\addplot table{
5 0.09
6 0.34
7 1.72
8 10.84
9 70.25
10 391.34
};
% NE
\addplot table{
5 0.08
6 0.27
7 1.36
8 5.79
9 27.60
10	110.228
11	512.668
};
\end{axis}
\end{tikzpicture}
\caption{Loglinear-plot comparing \NF with \NE in TESS on non-interference}
\label{fig:log_ni}
\end{figure}
%\FloatBarrier
%%% Local Variables:
%%% mode: latex
%%% TeX-master: "tpaper"
%%% End:

\subsection{Further experience}
\label{further}
  %\begin{changebar}
In addition to the 
examples discussed above, we have used the checker in
several more substantial examples.  In this section we briefly
summarize some additional experimental results and experiences with
larger examples.

First we discuss three case studies in which we defined object
languages and specified some of their desired properties from extant
research papers:
\begin{itemize}
\item LF equivalence algorithms and their structural
  properties~\cite{harper05tocl}, which were formally verified in
  Nominal Isabelle by \citeN{urban11tocl}, with three mutations
  inserted.
\item $\lambda^{zap}$, a ``faulty lambda calculus''~\cite{1159809}
\item The example based on ``Causal commutative arrows and their
  optimization''~\cite{liu09icfp}, also used as a case study for PLT
  Redex by \citeN{Klein12}.
\end{itemize}
Table~\ref{tab:additional-experiments} summarizes TFCE and TESS
measurements for these examples on representative tests using \NF, \NE
and \NEMinus.  

\begin{table}[tb]
  \caption{TFCE and TESS for additional experiments}
\begin{tabular}{llcccc}
\hline\hline
&&       &               \NF &   \NE & \NEMinus\\
\hline
LFEquiv& lem3.2(1) & [TFCE]& 0.1, 7 & t.o. &same as \NE\\
& lem3.4(1)& [TFCE]& 0.1, 7 & 0.1, 7 &same as \NE\\
& lem3.4(2)& [TFCE]& 0.1, 7 & t.o. & same as \NE\\
& lem3.5(2)& [TFCE]& 0.1, 7 & t.o. &same as \NE\\
Zap & fstep\_det &[TFCE]& 0.1, 3 & 0, 2 & same as \NE\\
&2fault& [TFCE]& 0, 3 & 0, 3 & same as \NE\\
CCA & exists\_norm& [TESS]& 0.3, 6 & 36,6 & 0.1, 6\\
& red\_equiv& [TESS]& 0.5, 4 & 0.6, 4 & same as \NE\\
                               \hline\hline
\end{tabular}
  \label{tab:additional-experiments}
\end{table}

We have also performed some additional case studies, for which we do
not report experimental results --- %%% ADDED !! -am 2107
  some results about the last case study can be
  found in~\cite{Pessina15}, together with some additional comparison
  to other tools such as Isabelle's Nitpick and QuickCheck.
\begin{itemize}
 % \item The $F_\leq$ language described in the POPLMark Challenge
 %   (implemented in \aprolog by Matthew
 %   Fairbairn~\cite{fairbairn05poplmark}).
\item A (type-unsafe) mini-ML language with polymorphism and
  references.
\item
    The exercises in the \emph{Types.v} and \emph{StlcProp.v} chapters
    of \emph{Software Foundations}~\cite{SF}, which ask whether
    properties such as type preservation hold under variations of the
    given calculi.
\item A $\lambda$-calculus with lists, from the PLT-Redex benchmarks
  suite~\cite{RedexManual}.
\end{itemize}
% \todo{Flesh these out; evaluate them using NE as well as NF.  How long
% would it take NE to find polymorphism/reference bug?}
 % % \end{changebar}

We did not find previously unknown errors in these systems, nor did we
expect to; however,  $\alpha$Check gives us some confidence that there are
no obvious typos or transcription errors in \emph{our implementations}
of the systems. 
 In some cases, we were able to confirm known, desired
properties of the systems \via counterexample search.  For example, in
$\lambda^{zap}$, the type soundness theorem applies as long as at most
one fault occurs during execution; we confirmed that two faults can
lead to unsoundness.  Similarly, it is well-known that the naive
combination of ML-style references and let-bound polymorphism is
unsound; we are able to confirm this by guiding the counterexample
search, but the smallest counterexample (that we know of) cannot be
found automatically in interactive time. %  \todo{Try running overnight?
  % Tried. Hopeless, we need to get to level 30 -am}
Further, while re-encoding some of the benchmarks proposed in the
relevant literature, %  especially from the PLT-Redex benchmark
% suite~\cite{RedexManual},
we have been successful in catching almost
all the inserted mutations~\cite{Pessina15}.

Our subjective experiences with the implementations have been
positive.  Writing specifications for programs requires little added
effort and also seems helpful for documentation purposes.
% \todo{Mention Ott? -- added a sentence in this sense in next section
%   while talking about Ott -am}

From these experiences, several observations can be made:
\begin{enumerate}
\item Checking properties of published, well-understood systems does
  confirm that  $\alpha$Check avoids false positives, but does not
  necessarily show that it is helpful during the development
  of a system.  Our personal experience strongly points in this
  direction, but further study would be needed to establish this,
  perhaps \via usability studies.
\item It is not advisable to just check the main properties such as
  type soundness, since the system may be flawed in such a way that
  soundness holds trivially, but other properties such as inversion or
  substitution fail.  In fact, just checking \texttt{tc\_sound} on our
  buggy $\lambda$-calculus will miss $80\%$ of the bugs. Moreover, of
  the bugs found, not only they are found at deeper levels and hence
  more likely to be timed out, but they are more difficult to
  interpret, as, e.g.\ an issue with reduction must be located to a
  bug in the substitution function.  Instead, it is generally
  worthwhile to enumerate all of the desired properties of the system
  (including auxiliary properties that might arise during a proof).
  This could be especially helpful when one wishes to make a change to
  the system, since the checks can serve as regression tests.
\item The ordering of subgoals often has a significant effect on
  performance and we have informally adopted the ``most constrained
  goal first'' heuristic.  Many alternative search strategies and
  optimizations (e.g.\ random search, coroutining, tabling), could be
  considered to improve performance.
\end{enumerate}

\section{Related work}
\label{sec:related}

\subsection{Nominal abstract syntax}

%\begin{changebar} 
  Our work builds on the nominal approach to abstract syntax initiated
  by \citeN{gabbay02fac}, which has led to a great deal of research on
  unification, rewriting, algebraic and logical foundations of
  languages with name-binding.  Since the conference version of this
  paper was published, there has been considerable work on nominal
  techniques, particuarly regarding unification and rewriting of
  nominal terms.  We do not have space to provide a comprehensive
  survey of this work; in this section we place our work in
  context, and point to other work that complements or could be
  combined with our approach.

  \paragraph{Nominal terms, rewriting, and unification}

  There has been great progress on algorithms for nominal unification
  and other algorithms and theory for nominal terms.  For example,
  \aprolog uses the naive, asymptotically exponential algorithm for
  nominal unification presented by \citeN{urban04tcs}, but subsequent
  work has led to more efficient
  algorithms~\cite{calves08tcs,levy10rta}.  Implementing such
  techniques in \aprolog may lead to faster specification checking.
  It has also been shown that nominal terms and unification are
  closely related to higher-order patterns and higher-order pattern
  unification~\cite{cheney05unif,levy12tocl}.  This suggests that one
  could perform nominal term complementation by mapping nominal terms
  to higher-order patterns, and using existing techniques for
  higher-order pattern complement~\cite{Momigliano03}; however, there
  would be little benefit to doing so, because the latter problem
  requires further extensions to the type system to deal with binding, whereas
  our approach avoids these complications by complementing first-order
  terms only and using the predicates $neq$ and $nfr$ to deal with
  names and binding.

  In \aprolog, functions such as substitution can be defined, but they
  are implemented by translation to relations (``flattening'').  In
  $\alpha$ML~\cite{lakin09esop}, functional and logic programming styles
  are combined, using a variant of nominal abstract syntax and
  unification that avoids the use of constant names.  Rewriting
  techniques~\cite{fernandez05ppdp},
  particularly \emph{nominal narrowing}~\cite{ayalarincon16fscd},
  could be incorporated into \aprolog and might improve the
  performance of specification checking in the presence of function
  definitions.

\paragraph{Nominal logic and logic programming}
Nominal logic was initially defined as a Hilbert-style
first-order theory axiomatizing names and name-binding
by~\citeN{pitts03ic}.  As with ``first-order'' or ``higher-order''
logic, however, we regard ``nominal logic'' as a name for a family of
systems, not just the influential initial proposal by Pitts.  As a
foundation for logic programming, Pitts' system had two drawbacks: it
did not allow for constant names, and its Hilbert-style presentation
made it difficult to develop proof-theoretic semantics following
\citeN{Miller91apal}.  Name constants are required to use the nominal
unification algorithm, and \citeN{cheney06jsl} showed how to
incorporate name constants into nominal logic and established
completeness and Herbrand theorems relevant to logic programming. To
address the second problem, \citeN{gabbay07jal} proposed natural
deduction system Fresh Logic (FL) and \citeN{gabbay04lics} proposed a
related sequent calculus $FL^\Rightarrow$.  The system used as a basis
for \aprolog by \citeN{cheney08toplas} is the $NL^\Rightarrow$ system
of \citeN{cheney16jlc}, which avoids some of the technical
complications of earlier systems and is proved conservative with
respect to Pitts' original axiomatization.

\paragraph{Nominal automata and model-checking}

Intriguing connections between nominal techniques and automata theory
have also come to light~\cite{bojanczyk14dagstuhl,pitts16siglog}.  In
particular, \citeN{gadducci06hosc} have established interesting
connections between nominal sets and \emph{history-dependent
  automata}~\cite{montanari05sfm}, which can be used to model-check
processes in calculi such as CCS or the pi-calculus.  Although we are
not aware of any work on automata that could be used to model-check
properties of relations over nominal terms, it may be fruitful to
investigate the relationship between our work and other directions
that draw upon the classical automata-theoretic approaches to model
checking.
% A deeper exploration of these connections is
% left for future work.

%% % \end{changebar}

\subsection{Testing, model checking, and mechanized metatheory}
As stated earlier, our approach draws inspiration from the success of
finite state model-checking systems. %~\cite{clarke00}.  
Particularly
relevant is the idea of \emph{symbolic} model
checking%~\cite{mcmillan93}
, in which data structures such as Boolean
decision diagrams represent large numbers of similar states; in our
approach, answer substitutions and constraints with free variables
play a similar role.

\paragraph*{Testing}
\label{par:test}
Another major inspiration comes from \emph{property-based testing} in
functional programming languages, as first realized by QuickCheck for
Haskell~\cite{claessen00icfp}.  QuickCheck provides type class
libraries for generator functions to construct \emph{random} test data
for user-defined types, as well as to monitor and customize data
distribution, and a logical specification language, basically
coinciding with Horn clauses, to describe the properties the program
should satisfy.  The QuickCheck approach has been widely successful
--- so much that there are now versions for many other programming
languages, including imperative ones. A major feature/drawback of
QuickCheck is that the user has to \emph{program} possibly fairly
sophisticated test generators to obtain a suitable distribution of
values. Further, random testing is notoriously inefficient in checking
\emph{conditional} properties.  Both issues are tricky, linked as they
are to the well known problem of the quality of test coverage. There
are at least two versions of QuickCheck for Prolog, see
\url{https://github.com/mndrix/quickcheck}
and~\cite{PrologCheck}. Both essentially implement the \NF approach
and struggle with types. On the other hand, they are quite efficient
being built on top, respectively, of SWI-Prolog and Yap.

An alternative to QuickCheck is SmallCheck~\cite{SmallCheck}, which, although
conceived independently from our approach, shares with us the idea of
\emph{exhaustive} testing of properties for {all} finitely many values
up to some depth. It enriches QuickCheck's specification language with
existential quantification and, in \emph{Lazy} SmallCheck, with \emph{parallel}
conjunction, which abstracts over the order of atoms in
conditions. Lazy SmallCheck can also generate and evaluate partially-defined
inputs, by using a form of {\em needed narrowing}. % Both
% features may be able, in some circumstances, to exhaust the search
% space at a given level with fewer test.
In conjunction with an implementation of nominal abstract syntax (such
as FreshLib~\cite{cheney05icfp} or Binders
Unbound~\cite{weirich11icfp}), Quick/SmallCheck could be used to
implement metatheory model-checking, although this would build several
levels of indirectness that may make counter-example search rather
problematic.  Compared to us, QuickCheck is a widely used library for
general purpose programming, while we have so far put
little effort into making our counter-example search more efficient.
However, by the very fact that we use (nominal) logic programming, our
specification language tends to be more
expressive. % , using well known coding techniques, see
% e.g.\ the \texttt{progress} predicate of
% Section~\ref{sec:tutorial}.
Further, the idea of negation elimination
goes well beyond Lazy SmallCheck's partially defined inputs, as it
allows us to test open conditions without further ado. Finally, so
far, we have used as test generator the built-in $\gen\SB{\tau}$
function without feeling the need to provide an API to write
\emph{custom} generators; this may also be due to the fact that we do
not generate tests at function types, which are not available in
\aprolog.

The success of QuickCheck has lead many theorem proving systems to
adopt random testing, among them PVS~\cite{Owre06randomtesting},
Agda~\cite{QAGDA} and very recently Coq with the \emph{QuickChick}
tool~\cite{QChick}. The system where proofs and disproofs are  best
integrated is arguably Isabelle/HOL~\cite{BlanchetteBN11}, which
offers a combination of random, exhaustive and symbolic
testing~\cite{Bulwahn12}. Random testing has been present in the
system for a decade; it is executed directly \via Isabelle/HOL's code
generation and has been recently enriched with a notion of
\emph{smart} test generators to improve its success rate w.r.t.\
conditional properties. This is achieved by turning the functional
code into logic programs and inferring through mode analysis their
data-flow behavior. Interestingly, generators for inductive types are
automatically inferred and user input is required only for HOL-style
type definitions. Exhaustive and symbolic testing follow the
SmallCheck approach, where narrowing is simulated with a refinement
algorithm that has several similarities with our extensional
quantifier. We note that exhaustive checking is the default setting
for Isabelle/HOL\@. Notwithstanding all these improvements, QuickCheck
requires  all code and specs to be \emph{executable} in the
underlying functional language, while % this be can handled in a
% logical framework such Isabelle/HOL,
many of the specifications that
we are interested in are best seen as \emph{partial} and \emph{not
  terminating}. For the latter, a valuable alternative is
\emph{Nitpick} in~\cite{BlanchetteITP10}, a higher-order model finder
in the \emph{Alloy} lineage supporting (co)inductive definitions. It
works translating a significant fragment of HOL into first-order
relational logic and then invoking Alloy's SAT-based model
enumerator. The tool has been evaluated by means of mutation testing
of the metatheory of type-inference in MiniML, the POPLMark challenge,
and type safety proofs for multiple inheritance in C++. Nitpick in
these reported experiments finds out roughly a third of the mutants, but it
also signals a certain number of potential false positives without any
easy way to tell which is which.  It would be natural to couple
Isabelle/HOL's QuickCheck and/or Nitpick's capabilities with
\emph{Nominal} Isabelle~\cite{urban12lmcs}, but this would require
strengthening the latter's support for computation with names, permutations and
abstract syntax modulo $\alpha$-conversion.
% is currently impossible due to
% the non effectiveness of the underlying theory of permutations. % , again, we speculate
% %\ednote{Asked Urban} 
% there will be too much indirection
% to make this feasible.  

\paragraph*{Environments for programming language descriptions.} 
The main players are \emph{PLT-Redex}~\cite{PLTbook} and the \emph{K}
framework~\cite{RosuK}. In both, several large-scale language
descriptions have been specified.  We concentrate on the
former as \emph{K}, while providing many tools needed to execute and
analyze programs written in an object language, is not geared
towards metatheory model checking, nor does it support binding
syntax. PLT-Redex is an executable DSL for mechanizing semantic models
built on top of \emph{DrRacket}. It supports the formalization of the syntax
and the semantics of an object language, with special support for
small-step semantics with evaluation 
contexts.  It provides visualization tools for animating
those models as well as automatic type-setting facilities. The most
notable feature for our purpose is Redex's support for random testing
'a la QuickCheck, whose usefulness has been demonstrated in several
impressive case studies~\cite{RedexManual,Klein12,RacketVM}, some of which we have
started replicating with our tool~\cite{Pessina15}. The main drawback is again the lack of
support for binders: variables are just another non-terminal and they
are handled in an ad hoc way. A generic substitution (meta)function is
provided but it has to be tweaked to respect binding occurrences. The
tool provides naive test generators stemming from grammar
definitions, but they tend to offer very little coverage, especially
when dealing with typed languages and %  . The user is therefore lead to write
% her own generators, a task which tends to be demanding. Finally, as
% there is no support for logic programming, encoding
non-algorithmic
relations. % requires a fair amount of indirection.
 However, in a very recent paper~\cite{FetscherCPHF15} the authors
 build a form of constraint \lp on top of
PLT-Redex to obtain \textit{random} typing derivations; the motivation
here is overcoming the problem that well-typed terms are rather sparse
in the space of pre-terms and as such random generation of them tends
to be wasteful. Hence they construct partial type derivations by
flipping a coin when several typing rules can be selected. Clearly,
our setup enjoys an \emph{exhaustive} version of 
 this notion of generation for free and as we comment further in the
 Conclusion, it would not be hard to incorporate the random angle.

\emph{Ott}~\cite{sewell10jfp} is a highly engineered tool for
``working semanticists'', allowing them to write programming language
definitions in a style very close to paper-and-pen specifications; the
system then performs some sanity checks on those specs, compiles them
into \LaTeX, and, more interestingly, into proof assistant code,
currently supporting Coq, Isabelle/HOL and HOL\@. % , and then possibly
% validated by the aforementioned tools.  
Ott's metalanguage
is endowed with a rich theory of binders, but  the current
implementation favors the ``concrete'' (non $\alpha$-quotiented)
representation, while providing support for the nameless
representation for a single binder. Since Ott tends to be used
mostly as a documentation system, it would make sense to pair it with
a lightweight validation tool such as ours, so as to catch (shallow)
bugs early in the design phase of some piece of PL theory. In fact,
most mainstream systems for static and dynamic semantics appear easy to translate into \aprolog
clauses, we claim more naturally and of course more adequately \wrt any
concrete syntax for binders. In this sense, a plug-in for Ott to
produce \aprolog code as well would be a valuable future work to
pursue.

% \smallskip[TODO] New papers to review that cite us - DONE

\smallskip
 
Other more specific approaches include %~\cite{testing-Java-with-fj},
% where FJ programs are generated and type-checked to serve as test for
% validating the type-checker of \texttt{javac}: the coverage seems
% minimal and too many aspects of Java are not covered by FJ.
% %\ednote{Not  sure this is worth mentioning -am}. 
% In
\cite{RobersonHDB08}, where the authors extend their
previous work on using a software model checker for data structure
properties to the realm of ASTs and type soundness. The idea is to
exhaustively generate all possible program states, that is, well typed
expressions in an object PL, execute one step and check that types are
preserved and execution does not get stuck. The crucial contribution
is in the taming of the search space, whereby ASTs that roughly
exercise the same SOS rules are pruned away. This yields a dramatic
reduction of the generated states.  SOS and typing rules must be
encoded in Java; thus no support for binders etc.\ is provided. More
importantly, the system is wired to check \emph{only} progress and
preservation properties and a user would need to re-program it to
test any other property.  The authors mention experimental results
about mutation testing of an extension of Featherweight Java with
imperative features and ownership types, but no additional description
is available, preventing us from trying to replicate the experience.

%\smallskip

%% the presence of local variables seems
%% to require a (conservative) extension of the LF type-theory, whose
%% meta-theory may turn out to be problematic.

% \begin{metanote}
%   Commented out the bit on algorithmic debugging, now less
%   relevant -am
% \end{metanote}

% %% James: Rewrite to try to compactify
% In addition, there is a body of related work on \emph{declarative
%   debugging} in logic programming languages.  Generally, in
% declarative debugging~\cite{naish97declarative}, the programmer
% identifies a bug (e.g.\ an undesired answer that is produced or a
% desired answer that is not produced) and the debugger analyzes the
% proof search process (using feedback from an oracle, such as the
% programmer) to help identify the responsible clause(s).  More
% recently,~\citeN{puebla99combined} have considered
% automated debugging using assertions which state desired properties of
% constraint logic programs.  In our setting, declarative debugging
% could be helpful in explaining why a counterexample is produced.

\paragraph*{Negation and \lp}

There is an extremely  large literature on negation as failure,
constructive/intensional negation, and disunification; we restrict
attention only to closely related work.

Negation elimination (a.k.a.~\emph{intensional} negation) has a long
history in \lp dating back the late 80's~\cite{Bar90} and later
extended to constraint logic programming languages~\cite{Bruscoli94},
although no concrete implementation has been reported until
Mu{\~n}oz-Hern{\'a}ndez's thesis and subsequent
papers~\cite{Moreno-NavarroM00,Munoz-HernandezMM04cin}.  In all these
papers, negative predicates are schematically synthesized by applying
several non-deterministic (classical) manipulations to the
\emph{completion}, whose correctness is formulated in terms of Kunen's
three-valued semantics.  Our approach, instead, is
based on a judgmental and syntax-directed translation, which is
straightforward and directly implementable% , turning programs with negation into
% negation-less nominal logic programs possibly involving $\forall^*$
.
Our presentation of negation elimination can also be applied to
ordinary typed first-order logic programming; it is closely related
to~\cite{Momigliano00}, where the target language is a fragment of
\lprolog, namely (monomorphic) third-order hereditary Harrop formulae,
although the main focus (and challenge) there is complementing hypothetical clauses,
an issue that does not occur in \aprolog.

 A related
approach is \emph{constructive negation}, in particular as formulated
by~\citeN{stuckey95ic}, in which negated existential subgoals are
handled \via a combination of case analysis and disunification.

% \begin{metanote}
%   rewrite next two paragraphs to make them shorter and better
%   connected to Bedwyr -am
% \end{metanote}
Proof search in the presence of an \emph{extensional} universal
quantifier has been studied in several settings; our approach is
inspired by $\omega$-rules such as the one in the proof-theory of
arithmetic.  A principle of ``proof by case analysis'' was first
proposed in~\cite{Bar90} and then refined
in~\cite{Munoz-HernandezMM04cin}.  The related proof-theory of success
and failure of existential goals has been investigated
in~\cite{Harland93} in the context of uniform proofs. 
% When logic programs are seen as inductive relations, the case-analysis
% reasoning performed by $\forall^*$ is reminiscent of an iterated use
% of case analysis on an inductive data-type (\cf \texttt{case\_tac} in
% systems such as Isabelle), or better as an approximation of the
% $\omega$-rule such as the one in the proof-theory of arithmetic. This
% is also connected to \emph{definitional
%   reflection}~\cite{SchroederHeister93lics}, although in our
% application we reason by cases ``on the right'' of the sequent (as an
% introduction rule), as opposed to meta-logics such as $\cal G$
%~\cite{GacekMillerNadathur:JAR12}, where case analysis is identified with
% inversion, \ie a left rule.´

\paragraph*{Model checking and \lp}

%\sloppypar 
The Logic-Programming-Based Model Checking project at Stony
Brook % (\texttt{http://www.cs.sunysb.edu/\til lmc/})
implements the model checker XMC for value-passing CCS and a fragment
of the mu-calculus on top of the XSB tabled logic programming
system~\cite{XMC}, which extends \SLD\ resolution with tabled
resolution. As the latter terminates on programs having finite models
and avoids redundant sub-computations, it can be used as a fixed-point
engine for implementing local model checkers. Similarly, in the
paradigm of Answer Set Programming~\cite{Niemela06} a program is
devised such that the solutions of the problem can be retrieved
constructing a collection of models of the program. To achieve this,
the language is essentially function-free disjunctive \lp, although
its expressivity has been consistently expanded in the ensuing
years. These two paradigms do not readily provide support for the
binding syntax that is essential for formalizing and checking
meta-theoretic properties. On the other hand, optimizations such as
tabling could certainly be useful, for example to improve $\forall^*$
performance.

The Bedwyr system~\cite{BaeldeGMNT07}
%% URLs not archival
%(\texttt{http://slimmer.gforge.inria.fr/bedwyr/}) 
instead is based on proof-search in a fragment of the $\cal G$  logic
of~\citeN{GacekMillerNadathur:JAR12},
%of a generalization of logic programming based on \emph{definitions}
which allows a form of model checking directly on syntactic expressions
possibly containing binding.  This is supported by term-level
$\lambda$-binders, a fresh name $\nabla$-quantifier, higher-order
pattern unification and tabling. % to informally account for
% (co)induction
The relationship of (a fragment of) this framework with nominal logic
has been investigated
elsewhere~\cite{gabbay04lics,schoepp06lfmtp,Gacek10}. As
a model checker, Bedwyr views the proof of a statement
$\forall x.\ p(x) \Imp G(x)$ as the attempted verification that $G(t)$
holds for all the $t$ s.t.~$p(t)$ (the ``model'' that is
enumerated). Since Bedwyr uses depth-first search, checking properties
for infinite domains can be approximated by writing logic programs
encoding generators for a {finite} portion of that model. Recent work
about ``augmented focusing systems''~\cite{HeathM15} could make this
automatic.  Loop checking implemented with a limited form of tabling
is added to handle (co)inductive specifications, whereby a loop over
an inductive (resp.~coinductive) predicate is interpreted as failure
(resp.~success). However, this interpretation is not yet supported by
any metatheory.  Bedwyr captures finite failure by seeing
$\GGa\vd\neg A$ as $\GGa, A\vd \false$ and solved as above. However,
this treatment seems to be sound only \wrt the Horn+$\nabla$ fragment
of the logic, hence checks involving hypothetical judgments as typical
of \ho abstract syntax need to be expressed moving to an explicit
``2-levels'' approach~\cite{GacekMillerNadathur:JAR12}, and this may
be too indirect to be effective. Nevertheless, nothing prevents the
user to write (binding) specifications and checks in the Horn+$\nabla$
fragment, similarly to what we do in \aprolog, although no experiment
in this sense has yet been carried out.
% In conclusion,
% notwithstanding the similarities, Bedwyr seems for the moment
% ill-fitted for {metatheory} model checking.

Analyses for checking modes, coverage, termination, and other (logic) program
properties can be used to verify program properties, 
playing an important role in the Twelf
system~\cite{Schurmann09}.  This approach is also possible
(and seems likely to be helpful) in \aprolog, but such analyses have
not yet been adapted to the setting of nominal logic programming.
Conversely, it may also be possible to implement counterexample search
in Twelf \via negation elimination along the lines
of~\cite{Momigliano00}.

\if 0
\begin{metanote}
  Discuss the rules for complementing names/abstractions here?
\end{metanote}
\fi

\section{Conclusions and future work}
\label{sec:concl}
A great deal of modern research in programming languages involves
proving meta-theoretic properties of formal systems, such as type
soundness.  Although the problem of specifying such systems and
verifying their properties has received a lot of attention recently,
verification tools still require substantial effort to learn and use
successfully.  We have presented a complementary approach that we call
\emph{metatheory model-checking} and a tool,  $\alpha$Check, which address the dual problem of
identifying flaws in specified systems (that is, counterexamples to
desired properties).  We introduced several possible implementation
strategies based on different approaches to negation in nominal logic
programming including \emph{negation-as-failure} and \emph{negation
  elimination}.  We have detailed how to accommodate negation
elimination in nominal logic programs and discussed experimental
results that show that both techniques have encouraging performance.
We plan to address several obvious performance issues in \NE in future
work.
% At present, the implementation based on negation-elimination is more
% fragile than the \NF\ version; we plan to improve the robustness of
% both implementations and include them in the next release of \aprolog.
From a pragmatical standpoint in fact, the implementation of universal
quantification currently involves analyzing type information in the
run-time system. This appears to be one source of inefficiency in
predicates such as \verb|not_tc| that involve local variables. We are
looking into ways to pre-compile this information, in order to avoid
this expensive run-time type analysis.

In this article, we have restricted attention to a particularly
well-behaved fragment of nominal logic programs in which
$\new$-quantification and names may only be used in goal formulas.
This suffices for many examples, but some phenomena (such as
name-generation) cannot be modeled naturally in this sub-language.  We
would like to investigate the general theory of elimination of
negation in nominal logic, in particular complementing clause heads
containing free names.  This may also be useful for
extending Twelf-like static analysis to \aprolog; in fact
\emph{coverage} analysis can be stated as a relative complement
problem.

% We also are interested in comparing our approach with related logic
% programming-based model-checking or verification techniques such as
% Bedwyr, XMC %~\cite{XMC},
%  and
% ProVerif.% ~\cite{AbadiBlanchetJACM}.
% \todo{Instead of claiming we want to do this, we should do it or not
%   mention it - maybe
%   Run your Research examples also.\\ Agreed -am}

Property-based testing in systems such as PLT-Redex and Isabelle/HOL
is, in a sense, rediscovering logic
programming~\cite{bulwahn2011smart,FetscherCPHF15}. The notion of
\emph{random typing derivation} in the latter paper, in particular, seems
just a special case of having random rather then exhaustive
backchaining in a \lp interpreter. Whether this is effective in
catching deeper bugs is an empirical issue, but we are certainly well
placed to explore this idea.

One pressing question is the relationship between the different forms
of negation: \NF, \NE and \NEMinus.  We have used \NF pragmatically
without worrying too much about its correctness, and the semantics of
negation-as-failure have yet to be formalized for \aprolog; we have
stronger evidence for the (partial) correctness of \NE, but we do not
know, for example, whether \NE (or \NEMinus) is complete relative to
\NF on ground goals or vice versa. Soundness and completeness have
been investigated in the context of pure Prolog~\cite{Bar90}, but in a
way that is hard to generalize to nominal \lp. A better
(proof-theoretic) way could be to relate \NE to the completion by
viewing logic programs as fixed
points~\cite{Schroeder-Heister93}. This view could also open the road
to handle specifications that are \emph{coinductive} in nature, as in
concurrent calculi~\cite{TiuM10} or studies about program
equivalence~\cite{MomiglianoAC02}.  Our main contribution is showing
empirically that both \NF and \NE/\NEMinus can be \emph{useful} as a
basis for mechanized model-checking, and the lack of answers to these
questions does not detract from this contribution, but we think it
would be worthwhile to study them in more detail.

Another direction for future work is to investigate automatic support
for identifying the culprit when a check fails. One might naively
expect this to be straightforward, for example using a similar
approach to \emph{declarative debugging}~\cite{naish97declarative};
however, in the presence of negation (whether \NF or \NE), it is not
at all clear how to concisely explain the reason why a goal succeeds
or fails.  Indeed, the reason for the failure could be the absence of
a needed rule, or an error in a rule that means it can never be used.

In conclusion, we have presented two approaches to mechanized
metatheory model-checking in \aprolog, one based on
negation-as-failure and the other based on negation elimination.  They
have complementary strengths: negation-as-failure is conceptually
simple and appears efficient in practice, but currently lacks a solid
theoretical foundation, while negation elimination has been proved
correct but may be slower on some examples.  Our experiments also
suggest that further optimizations would be valuable, but these two
techniques are already useful for debugging language specifications
formalized using \aprolog.

The sources for \aprolog and  $\alpha$Check, including all the examples
mentioned here, can be found at
\url{http://github.com/aprolog-lang}.

% In conclusion, we believe that the current implementations based on
% negation-as-failure and unoptimized negation elimination are already
% useful for debugging languages formalized using \aprolog.

\paragraph*{Acknowledgments}
\label{ack}
%\begin{spell}

We wish to thank Matteo Pessina for his contribution to the usability
of the implementation of the checker, and Frank Pfenning for
still-ongoing discussion about negation elimination.
% \end{spell}
% .

 \bibliographystyle{acmtrans}
 \bibliography{tplp}

\appendix
\section{Proof of exclusivity}
\label{app:exclu}
%\section{Proofs}
\label{app:app}

%%%%%%%%%%%%%%%%%%%%%%%
  We  list some properties of (constraint) satisfiability that we
  are going to use in the following. % \ednote{and/or we refer to lemmas
    % 3.3--3.9 in TOPLAS}. 
  First we assume that the constraint satisfaction judgment is closed
  under the rules for $\eq$ and $\fresh$. Then we quote some from~\cite{cheney08toplas}: 
 \begin{enumerate}[label=(\roman*),ref=(\roman*),itemsep=0pt,parsep=0pt]
 % \item\label{p1} If $\sat{\Gamma}{\constr} C$ and $\sat{\Gamma}{\constr}{\exists
 %     X{:}\tau.~D}$, then  $\sat{(\Gamma,X{:}\tau)}{\constr, D} C$.
% \item\label{p2} If $\sat{\Gamma}{\constr} C$ and $\sat{\Gamma}{\constr}{\new\Aa
%     {:}\nu.~D}$ for some $\Aa\not\in \mathrm{supp}(\constr)$, then  $\sat{(\Gamma\#\Aa{:}\nu)}{\constr, D} C$.
% \item $\sat\Gamma\constr{\pair{t_1,t_2}\eq \pair{s_1,s_2}}$ iff
%   $\sat\Gamma\constr{t_1\eq s_1}$ and   $\sat\Gamma\constr{t_1\eq s_1}$.
% \item\label{p3} If $\sat{\Gamma}{\constr}{\exists
%      X{:}\tau.~D}$ and  $\sat{(\Gamma,X{:}\tau)}{\constr, D} C$,
%    then  $\sat{\Gamma}{\constr}{\exists
%      X{:}\tau.~C\land D}$.\ednote{This needed elsewhere to show that
%      backchaining is derivable}
\item\label{3.4} If $\sat{\Gamma}{\constr}{\exists X{:}\tau.C}$ and  $\sat{\Gamma, X{:}\tau}{\constr,C}{D}$, then
  $\sat{\Gamma}{\constr}{\exists X{:}\tau.D}$ (Lemma 3.4).
 \item\label{p5}  If $\sat{\Gamma}{\constr} C[t]$ and
   $\sat{\Gamma}{\constr} t\eq u$, then  $\sat{\Gamma}{\constr} C[u]$.
  \end{enumerate}
  We will also appeal to weakening properties of the proof search
  semantics \wrt $\Gamma$ and $\constr$, see Lemma 4.14
  in~\cite{cheney08toplas} for the detailed statement. In particular: 
  \begin{description}
  \item\label{weakC} If $\upfgdn{G}$ and
    $\sat{\Gamma}{\constr'}{\constr}$, then
    ${\upf{\Gamma}{\Delta}{\constr'}{G} }$.
  \end{description}
% Note that we
%   can weaken $\constr$ with any ``strengthening'' $\constr'$, that is such that
%   $\sat{\Gamma}{\constr'}{\constr}$.
  We will also rely on the following substitution lemma:
  \begin{description}
  \item\label{subs}    If ${\upf{\Gamma,X{:}\tau}{\Delta}{\constr, X \eq t}{G} }$ and
    $\satgn{\exists X{:}\tau. X \eq t}$ then
    ${\upf{\Gamma}{\Delta}{\constr}{G[t/X]} }$.
  \end{description}

  \begin{proof}[Proof of Lemma~\ref{le:excluT} (Term Exclusivity)] 
    By induction on $\tau$ where  $ s\in\mnot\SB\tau(t)$. %, where $\mnot(t) \blip s \hastype \tau$.
    Assume that both $\sat{\Gamma}{\constr}{\exists \vec X {:} \vec
      \tau.~u \eq t %\land \vec C
    }$ and $\sat{\Gamma}{\constr}{\exists \vec X {:} \vec \tau.~u \eq
      s % \land \vec D
    }$ hold.  The cases where $\tau$ is $ \unitTy$ or $\nu$ or $
    \abs{\nu}\tau'$ and the case where
    $t$ is a variable are trivial. For $\tau = \delta$ there are two
    subcases, where $t$ is $f(t')$.
    \begin{enumerate}
    \item Case 1: 
%by rule
                                %$\mnot\_f^1$  
    $s$ has the form $ g(\_)%\in\mnot\SB\delta(f(t))
    $
    for $f\neq g$. But
    $\sat{\Gamma}{\constr}{\exists \vec X {:} \vec \tau.~u \eq f(t')
    }$
    and
    $\sat{\Gamma}{\constr}{\exists \vec X {:} \vec \tau.~u \eq g(\_)
    }$
    cannot be, since by constraint satisfaction and
    property~\ref{p5} this would yield $\vec\theta\models f(t')\eq g(\_)$ for an appropriate $\vec\theta$.
%  But, immediately, we cannot have
    % $\sat{\Gamma}{\constr}{\exists \vec X {:} \vec \tau.~u \eq f(t) }$
    % and $\sat{\Gamma}{\constr}{\exists \vec X {:} \vec \tau,\exists Y
    %   {:} \sigma.~u \eq g(Y) }$.
\item Case 2:    Otherwise, %rule $\mnot\_f^2$ yields
    $s$ has the form $f(s')%\in\mnot\SB\delta (f(t))
    $ for $ s'\in\mnot\SB\tau(t')$ and the result follows again  by constraint satisfaction,
    property~\ref{p5} and IH\@. 
\end{enumerate}
The
    case $\tau = \tau_1\times\tau_2$ follows similarly to the latter subcase.
  \end{proof}

\begin{lemma}[Constraints Exclusivity] 
\label{co:exlcuC}
 Let $\constr$ be consistent.
  \begin{enumerate}
  \item It is not the case that both $\sat{\Gamma}{\constr}{a \fresh
      t}$ and $\upfgdn{\nfr\SB{\nu,\tau}(a,t)}$.
  \item It is not the case that both $\sat{\Gamma}{\constr}{t\eq u}$
    and $\upfgdn{\neqt\SB{\tau}(t,u)}$.
  \end{enumerate}
%It is not the case that $\satgn{A}$ and $  \satgn { \mnotg(A)}$. 
\end{lemma}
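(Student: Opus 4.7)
The plan is to prove parts (1) and (2) simultaneously by induction on the type $\tau$, closely following the structure of the definitions of $\nfr\SB{\nu,\tau}$ and $\neqt\SB{\tau}$ in Figure~\ref{fig:gen}. In each case we assume for contradiction that both the positive constraint and the derivability of its negative counterpart hold, and we drive the two assumptions down to an immediate clash at ground terms (using consistency of $\constr$) or reduce to a smaller instance of the induction hypothesis or to Lemma~\ref{le:excluT}.

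For part~(1) the base cases are immediate: when $\tau=\unitTy$ or $\tau=\nu'$ with $\nu'\neq\nu$, the body of $\nfr$ is $\false$, which is never derivable. When $\tau=\nu$, the body is $a\eq b$, and combining $\sat{\Gamma}{\constr}{a\fresh b}$ with $\sat{\Gamma}{\constr}{a\eq b}$ (via property~\ref{p5}) contradicts consistency of $\constr$, since ground freshness and equality on names are incompatible. For $\tau=\tau_1\times\tau_2$, derivability of the disjunction in the goal via rules $\orR_i$ forces one disjunct $\nfr\SB{\nu,\tau_i}(a,\pi_i(t))$ to be derivable; on the other hand $a\fresh t$ entails $a\fresh\pi_i(t)$ by inversion on the freshness rule for pairs, so the IH at $\tau_i$ applies. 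The abstraction case $\tau=\abs{\nu'}{\tau'}$ is handled by the $\newR$ rule: the goal is derivable under a fresh $\Ab$, while $a\fresh t$ survives weakening of $\Gamma$ by $\Ab$, so after rewriting $t\conc\Ab$ with an existential the IH at $\tau'$ applies. The data-type case $\tau=\delta$ is the most delicate: by the $\mathit{back}$ rule, derivability of $\nfr_{\nu,\delta}(a,t)$ exposes some $f\in\Sigma$ and witness $t\eq f(X)$ together with $\nfr\SB{\nu,\tau'}(a,X)$ under the augmented constraint store; substitution~\ref{subs} and property~\ref{p5} transport $a\fresh t$ to $a\fresh f(X)$, which by inversion on the freshness rule for constructors yields $a\fresh X$, so we conclude by the IH at $\tau'$.

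Part~(2) follows the same pattern. The cases $\tau=\unitTy$ and $\tau=\nu$ are immediate ($\false$ is not derivable, and the remaining case is vacuous). For $\tau=\tau_1\times\tau_2$ we invert the derivation of the disjunction in $\neqt\SB{\tau_1\times\tau_2}$ and use inversion on equality of pairs to contradict the IH at $\tau_i$. For $\tau=\abs{\nu}{\tau'}$, the goal is $\new\Aa.\neqt\SB{\tau'}(t\conc\Aa,u\conc\Aa)$; under the fresh $\Aa$, equality of abstractions entails equality of the concretions, so the IH at $\tau'$ applies. The case $\tau=\delta$ splits by backchaining into two kinds of disjuncts: if both arguments are headed by the same $f$, we extract $t\eq f(X)$, $u\eq f(Y)$ and $\neqt\SB{\tau'}(X,Y)$, observe that $t\eq u$ combined with these forces $X\eq Y$, and apply the IH at $\tau'$; if the two arguments are headed by distinct $f$ and $g$, then $t\eq f(X)$, $u\eq g(Y)$ and $t\eq u$ together would force $f(X)\eq g(Y)$, contradicting consistency of $\constr$ (this is also an instance of Lemma~\ref{le:excluT}).

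The main obstacle I anticipate is the data-type case under $\mathit{back}$, where one must carefully track how the constraint store grows when witnessing the existentials introduced by the defining clauses, and appeal to weakening and substitution in the right order to transport the positive constraint across this unfolding before applying the IH. The abstraction case is conceptually similar but adds the bookkeeping of a $\new$-bound name that must be kept fresh for $a$, $t$, and $u$; here the self-duality of $\new$ and the standard argument that the truth of a $\new$-quantified formula does not depend on the choice of the fresh name should suffice to align the two derivations.
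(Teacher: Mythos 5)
Your proposal is correct and follows essentially the same route as the paper's proof: induction on $\tau$ with case analysis driven by the definitions of $\nfr\SB{\nu,\tau}$ and $\neqt\SB{\tau}$, inversion on the proof-search rules, transport of the positive constraint via the substitutivity property, and appeal to the induction hypothesis (with the heterogeneous-constructor case of $\neqt_\delta$ discharged by the impossibility of $f(X)\eq g(Y)$ for distinct head symbols). The only slip is calling the $\tau=\nu$ case of part (2) vacuous --- there $\neqt\SB{\nu}(t,u)$ unfolds to $t\fresh u$, which must be confronted with $t\eq u$ exactly as in part (1) --- but this does not affect the argument.
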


\begin{proof}
We proceed by induction on $\tau$.
\begin{enumerate}
\item Assume  both $\sat{\Gamma}{\constr}{a \fresh
      t}$ and $\upfgdn{\nfr\SB{\nu,\tau}(a,t)}$   %nfr
  \begin{itemize}

    \itc $\tau = \unitTy$ or $\tau = \nu'$, with $\nu \neq \nu'$. Then
    $\nfr\SB{\nu,\tau}(a,t) =~\perp$ and it is not the case that
    $\upfgdn{{~\perp}}$.

    \itc $\tau = \nu$. By definition of $\nfr$ we have $\upfgdn{a \eq
      t }$. By inversion $\sat{\Gamma}{\constr}{a\eq t}$ and the
    result follows as the latter is not consistent with
    $\sat{\Gamma}{\constr}{a \fresh t}$.

    \itc $\tau = \abs{\nu'}{\tau'}$. 

\btab
$\upfgdn{\new \Ab{:}\nu'.~\nfr\SB{\tau'}(a,t \conc \Ab)}$\bd of
$\nfr$\\ 
$ \sat{\Gamma}{\constr}{\new \Ab{:}\nu'.~C}$ and $\upf{\Gamma\#\Ab{:}\nu'}{\Delta}{\constr, C}{\nfr\SB{\tau'}(a,t \conc \Ab)}$ \bv
$\upf{(\Gamma\#\Ab{:}\nu')}{\Delta}{\constr, C}{\exists X{:}\tau'. t \eq \abs{\Ab} X \andd
  \nfr\SB{\tau'}(a,X)}$\\
\` By removing the concretion\\
$ \sat{(\Gamma\#\Ab{:}\nu')}{\constr,C}{\exists X{:}\tau.~D}$ and
$\sat{(\Gamma\#\Ab{:}\nu',X{:}\tau)}{\constr, C, D}{ t \eq \abs{\Ab} X }$ and\\
$\upf{(\Gamma\#\Ab{:}\nu',X{:}\tau)}{\Delta}{\constr, C, D}{\nfr\SB{\tau'}(a,X)}$ \bv
% $\sat{(\Gamma\#\Ab{:}\nu',X{:}\tau)}{\constr,C, D} a\fresh t$\` By property~\ref{p1} and~\ref{p2} applied to assumption\\
$\sat{(\Gamma\#\Ab{:}\nu',X{:}\tau)}{\constr,C, D} a\fresh \abs{\Ab} X$\` By property~\ref{p5}\\
$\sat{(\Gamma\#\Ab{:}\nu',X{:}\tau)}{\constr,C, D} a\fresh X$ \` By  $\#$-rule\\
impossible\bi
\etab
The above proof covers the case that $a\fresh \abs{\Ab} X$ is derived
by showing that $a \fresh X$.  The other case, where $a \eq
\Ab$, is impossible since $\Ab \fresh a$ holds.

    \itc $\tau = \tau_1 \times \tau_2$. By definition $
    \upfgdn{\nfr\SB{\nu,\tau_1}(a,\pi_1(t)) \orr
      \nfr\SB{\nu,\tau_2}(a,\pi_2(t))}$:
    \begin{itemize}
\item[\subcase]
\btab
$\upfgdn{\nfr\SB{\nu,\tau_1}(a,\pi_1(t))}$\bv
$\sat{\Gamma}{\constr}{t \eq \langle\pi_1(t),\pi_2(t)\rangle }$ \` By
assumption\\
$\sat{\Gamma}{\constr}{a \fresh \pi_1(t) }$  \` By property~\ref{p5}
and $\#$-rule\\
impossible\bi
\etab
\item[\subcase] $\upfgdn{\nfr\SB{\nu,\tau_1}(a,\pi_2(t))}$. Symmetrical.
    \end{itemize}
   \itc $\tau = \delta$.
\btab
$\upfgdn{\bigvee\{\exists X{:}\tau.~t \eq f(X) \andd \nfr\SB{\nu,\tau}(a,X) \mid  f:\tau \to \delta \in \Sigma\}}$\bd \\
$ \sat{\Gamma}{\constr}{\exists X{:}\tau.~D}$ and
$\sat{\Gamma,X{:}\tau}{\constr, D}{ t \eq f(X) }$ and\\
$\upf{\Gamma,X{:}\tau}{\Delta}{\constr, D}{\nfr\SB{\tau}(a,X)}$ \bv
$\sat{\Gamma,X{:}\tau}{\constr, D} a\fresh X$\` By
property~\ref{p5} and $\#$-rule\\
impossible\bi
\etab
  \end{itemize}
\mbox{}\\  %neq
\item Assume both $\sat{\Gamma}{\constr}{t\eq u}$
    and $\upfgdn{\neqt\SB{\tau}(t,u)}  $; the proof is very
    similar to part (1) and we only show a couple of cases.
  \begin{itemize}
    \itc $\tau = \nu$. By definition of $\neqt$, $\upfgdn{t \fresh
      u }$. By inversion $\sat{\Gamma}{\constr}{t\fresh u}$ and the
    result follows as the latter is not consistent with
    $\sat{\Gamma}{\constr}{t \eq u}$.
% delta

   \itc $\tau = \delta$.
   \begin{itemize}
   \item[Subcase] $\upfgdn{ \bigvee\{\exists X,Y{:}\tau.~t \eq f(X)
       \andd u \eq f(Y) \andd \neqt\SB{\tau}(X,Y) \mid f:\tau \to
       \delta \in \Sigma\}}$: similar to the analogous case in (1).
   \item[Subcase] $\upfgdn{ \bigvee\{\exists X{:}\tau,Y{:}\tau'.~t
       \eq f(X) \andd u \eq g(Y) \mid f:\tau \to \delta,g:\tau' \to
       \delta \in \Sigma,f \not= g\}}$: %fix $f,g$
   \end{itemize}
\btab
$ \sat{\Gamma}{\constr}{\exists X{:}\tau,Y{:}\tau'.~D}$ and\\
$\sat{\Gamma,X{:}\tau, Y{:}\tau'}{\constr, D}{ t \eq f(X) }$ and
$\sat{\Gamma,X{:}\tau, Y{:}\tau'}{\constr, D}{ u \eq g(Y) }$\bv
%$\upf{\Gamma,X{:}\tau}{\Delta}{\constr, D}{\nfr\SB{\tau}(a,X)}$ \bv
$\sat{\Gamma}{\constr} \exists X{:}\tau, Y{:}\tau'. f(X)\eq g(Y)$\` By
Lemma~\ref{3.4}, impossible.
\etab

  \end{itemize}
\end{enumerate}
This completes the proof.
\end{proof}

% We use $\ga\sn\mnotg (G)$ for $\GG\vd\mnotg(G) = G'$ and $\gna\sn G'$; also,
% $\ga\sn\mnotd (D)\imm\neg Q$ for $\mnotd(D) = D'$ and $\gna\sn
% D'\imm\neg Q$.
% We use $\Delta$ to denote the conjunction of a positive program and
% its negation.  \ednote{Think about this -am}

% \begin{theorem}[Exclusivity] 
% %\label{thm:exclu}
%   Let $\constr$ be consistent. It is not the case that
%   there are $\SP :: \upfgdn G$ and $\SM :: \upf{\Gamma}{\Delta^{-}}{\constr}{\mnotg(G)}$.
% %it is not the case that both $\ga\spp G$ and $ \upfgdn (G)$.
% \end{theorem}

\begin{proof}[Proof of Theorem~\ref{thm:exclu} (Exclusivity)]
Assume that there are derivations $\SP :: \upfgdn G$ and $\SM ::
\upf{\Gamma}{\Delta^{-}}{\constr}{\mnotg(G)}$. We proceed 
by  induction on the structure of $\SP$.
\bit  %bit of one
\itc $\top$ and $\perp$: immediate.
\itc 
\begin{small}
$$\SP = \ianc
{\sat{\Gamma}{\constr}{\exists \vec{X}{:}\vec{\tau}.~\vec C \wedge
    t\eq u } \quad \mathcal {S'}_{+} ::
  \upf{\Gamma,\vec{X}{:}\vec{\tau}}{\Delta}{\constr, \vec C}{G} \quad
  (\forall \vec{X}{:}\vec{\tau}.~G \impp
  p(t))\in\Delta}{\upfgdn{p(u)}}{\emph{back}}
$$
\end{small}

and $\SM$ ends in $\upfgdn \mnotg(p(u))$, where $\forall \vec{X}{:}\vec{\tau}.~G \impp
      p(t)$ is the $i$-th clause in $\Delta$:
% \itc
% $$\SP=\ianc{\upfgdn \defp(q,\PP\And\cD)\imm Q}{\upfgdn Q}{ \vd  {\rm atm}}$$
% and $\SM$ ends in $\upfgdn (Q)$:
 \btab
$\upfgdn p^\nott(u)$\bd of $\mnotg$\\
$\sat{\Gamma}{\constr}{\exists {X}{:}{\tau}.~C \wedge
X  \eq u} $ and $\upf{\Gamma,{X}{:}{\tau}}{\Delta}{\constr,C}{\Andd_i p^\nott_i(X)}$\\
\` By inversion on the \emph{back} rule using $\defp(p^\nott,\Delta^{-})$\\
 $\upf{\Gamma,{X}{:}{\tau}}{\Delta}{\constr,C}{p^\nott_i(X)}$\` By
 inversion on the $i$-clause
 \etab
%% either not or ih
      \begin{itemize} 
      \item[\subcase] \btab $\sat{\Gamma,  X {:}
           \tau.}{\constr, C}{\exists \vec Y {:}
          \vec \tau.~\vec D\land  X \eq s}$ \` By \emph{back} on
        $\forall\vec Y {:} \vec \tau.~ p^\nott_i(s)$ s.t.~$ s \in\mnot\SB \tau(u)$ \\
impossible\bl{le:excluT}
\etab
\end{itemize}

\begin{itemize} 
\item[\subcase] \btab $\sat{\Gamma, X {:} \tau}{\constr, C}{\exists
    \vec Y {:} \vec \tau.~\vec D \land X \eq \vec t }$ and
  $ \mathcal {S'}_{-}::\upf{\Gamma,{X}{:}{\tau},\vec{Y}{:}\vec{\tau}}{\Delta}{\constr,
    C , \vec D}{\mnotg(G)}$ \\
  \` By \emph{back} on
  $\forall\vec Y {:} \vec \tau.~p^\nott_i(t) \ent \mnotg(G)$  \\
  $\upf{\Gamma,{\vec X}{:}{\vec \tau}
    {X}{:}{\tau},\vec{Y}{:}\vec{\tau}}{\Delta}{\constr,
    \vec C, C, \vec D}{G}$ \` By weakening $ \mathcal {S'}_{+}$ and $
  \mathcal {S'}_{-}$\\ %(??)\\
  impossible\` By inductive hypothesis (IH) \etab
      \end{itemize}

% $\upfgdn  \defp(q,\PP\And\cD)\imm Q$\bs
% %$\upfgdn D\imm Q$ for $D\in  \defp(q,\PP\And\cD)$\bs
% $\gna\sn\neg Q$\br $\mnotg\atm$\\
% %$\gna\sn \defp(\neg q,\cP\cD)\imm \neg Q$\bv
% $\ga\sn \defp(\neg q,\PP\And\cD)\imm \neg Q$ \bv
% \bit
% \item[\subcase]
% \btab
% $\ga\sn \mnotd(\defp(q,\cP))$\` By definition\\
% $\perp$\bi $2$,% $i\in\{1,3\}$ according to $D$ static or dynamic
% \etab
% %\item[\subcase]
% % \btab
% % $\ga\sn \mnota(\defp(q,\cD))$\` By definition\\
% % $\perp$\bi $2$,% $i\in\{2,4\}$ according to $D$ static or dynamic 
% % \etab
% \enit

% \begin{metanote}
%   Here should start to make sense --am
% \end{metanote}
% Con
\itc The constraints case (\emph{Con}) follows from Lemma~\ref{co:exlcuC}.
%And
\itc
$$\SP=\ibnc{\upfgdn G_1}{\upfgdn G_{2}}{\upfgdn G_{1}\And G_{2}}
{ \And R}$$
and $\SM$ ends in $\upfgdnn \mnotg(G_1\And G_2)$
\btab
 $\upfgdn G_1$\bsd
 $\upfgdn G_2$\bsd
 $\upfgdnn\mnotg(G_1)\Or\mnotg(G_2)$\br $\mnotg\And$
\etab
\bit
\item[\subcase]
\btab
$\upfgdnn\mnotg(G_1)$\bv
impossible\bi
\etab
\item[\subcase]
\btab
$\upfgdnn\mnotg(G_2)$\bv
impossible\bi
\etab
\enit
%Or 1
\itc
$$\SP=\ianc{\upfgdn G_1}{\upfgdn G_{1}\Or G_{2}}
{ \Or R_1}$$
and $\SM$ ends in $\upfgdnn\mnotg (G_1\Or G_2)$
\btab
 $\upfgdn G_1$\bsd
$\upfgdnn\mnotg(G_1)\And\mnotg(G_2)$\br $\mnotg\Or$\\
 $\upfgdnn\mnotg (G_1)$\bv
impossible\bi
\etab
%Or 2
\itc
$\SP$ ends in $\Or R_2$ Symmetrical.
% \btab
%  $\upfgdn   G_2$\bv
% $\gna\sn\mnotg(G_1)\And\mnotg(G_2)$\bv
%  $\upfgdn (G_1)$\bv
%  $\upfgdn (G_2)$\bv
% impossible\bi
% \etab
% \itc %pi
% $$\SP=\ianc{\upf{\Gamma,Y{:}\tau}{\Delta}{\constr}{[Y/X]G}}{\upfgdn {\PI X {\tau}
%   G}}{\forall R}$$
% and $\SM$ ends in $\upfgdn \mnotg({\PI X {\tau}
%   G})$
% \btab
% %$(\GG,y\oftp A);\cD\sp[y/x]G'$\bs
% $\upf{\Gamma,Y{:}\tau}{\Delta}{\constr}{\mnotg([Y/X]G)}$\br $\mnotg\Pi$\\
% %$(\GG,y\oftp A);\cD\sp\mnotg([y/x]G')$\br $\mnotg\Pi$\\
% %$(\GG,y\oftp A);\cD\sn[y/x]\mnotg(G')$\bv
% impossible\bi
% \etab
% \itc %\Imp
% $$\SP= \ianc{\GG;(\cD\And D') \sp G'}{\ga\sp D' \Imp G'}{{\prv} \Imp}$$
% and $\SM$ ends in $\gna\sp\mnotg(D'\Imp G')$:
% \btab
% $\GG;(\cD\And D') \sp G'$\bs
% $\gna\sn D'\Imp \mnotg(G')$\br $\mnotg\Imp$\\
% $\GG;(\cD\And D') \sp \mnotg(G')$\bv
% impossible\bi
% \etab
\itc
$$
 \SP=\infer[\exR]{\upfgdn{\exists X{:}\tau.~G}}
    { \sat{\Gamma}{\constr}{\exists X{:}\tau.~C} &
      \upf{\Gamma,X{:}\tau}{\Delta}{\constr, C}{G} } 
$$
and $\SM$ ends in $\upfgdnn\mnotg(\exa G)$:
 \btab
 $\upfgdnn{\forall^{*} X{:}\tau.~\mnotg(G)}$\br\\
$  \bigwedge \{\upf{\Gamma,X{:}\tau}{\Delta^{-}}{\constr, C'}{\mnotg(G)} \mid
  \satgn{\exists X{:}\tau.~C' \}}$ \bv
$\upf{\Gamma,X{:}\tau}{\Delta}{\constr, C}{\mnotg(G)}$\`Taking $C' =
C$ since $\sat{\Gamma}{\constr}{\exists X{:}\tau.~C}$\\
impossible\bi
% $\upf{\Gamma,X{:}\tau}{\Delta}{\constr, C}{\mnotg(G)}$\` By inversion on each of the
%   possibly infinite conjuncts\\
% impossible\` By IH choosing $C'$ to be $C$
 \etab
\itc %New 
$$
\SP= \infer[\newR]{\upfgdn{\new\Aa{:}\nu.~G}}
    {\sat{\Gamma}{\constr}{\new \Aa{:}\nu.~C} &
      \upf{\Gamma\#\Aa{:}\nu}{\Delta}{\constr, C}{G} } $$
and $\SM$ ends in $\upfgdnn{\mnotg(\new\Aa{:}\nu.~G)}$
\enit
\btab
 $\upfgdnn{\new\Aa{:}\nu.~\mnotg(G)}$ \br \\
$\sat{\Gamma}{\constr}{\new \Aa{:}\nu.~C'}$ and
  $\upf{\Gamma\#\Aa{:}\nu}{\Delta^{-}}{\constr, C'}{\mnotg(G)} $ \bv
$\upf{\Gamma\#\Aa{:}\nu}{\Delta}{\constr, C, C'}{G} $ and 
  $\upf{\Gamma\#\Aa{:}\nu}{\Delta^{-}}{\constr, C, C'}{\mnotg(G)} $ \` By weakening\\
impossible \bi
\etab
%\enit% main bit
This exhausts all cases and completes the proof.
\end{proof}

%%% Local Variables: 
%%% mode: latex
%%% TeX-master: "tpaper"
%%% End: 

%
\section{The tutorial code, debugged}
\label{app:code}

We list here the debugged implementation of the $\lambda$-calculus
with pairs used in the Tutorial section.

%\begin{small}
  \begin{verbatim}
id : name_type.    
tm : type.    
ty : type.

var  : id -> tm.           
unit : tm.
app  : (tm,tm) -> tm.      
lam  : id\tm -> tm.
pair : (tm,tm) -> tm.
fst  : tm -> tm.           
snd  : tm -> tm.

func sub(tm,id,tm)    = tm.
sub(var(X),X,N)       = N.
sub(var(X),Y,N)       = var(X) :- X # Y.
sub(app(M1,M2),Y,N)   = app(sub(M1,Y,N),sub(M2,Y,N)).
sub(lam(x\M),Y,N)     = lam(x\sub(M,Y,N)) :- x # (Y,N).
sub(unit,Y,N)         = unit.
sub(pair(M1,M2),Y,N)  = pair(sub(M1,Y,N),sub(M2,Y,N)).
sub(fst(M),Y,N)       = fst(sub(M,Y,N)).
sub(snd(M),Y,N)       = snd(sub(M,Y,N))

pred value(tm).
value(lam(_)).
value(unit).
value(pair(V1,V2)) :- value(V2),value(V2).

pred step(tm,tm).
step(app(lam(x\M),N),sub(M,x,N))  :- value(N).
step(app(M,N),app(M',N))          :- step(M,M').
step(app(V,N),app(V,N'))          :- value(V), step(N,N').
step(pair(M,N),pair(M',N))        :- step(M,M').
step(pair(V,N),pair(V,N'))        :- value(V), step(N,N').
step(fst(M),fst(M'))              :- step(M,M').
step(fst(pair(V1,V2)),V1)         :- value(V1), value(V2).
step(snd(M),snd(M'))              :- step(M,M').
step(snd(pair(V1,V2)),V2)         :- value(V1), value(V2).

pred progress(tm).
progress(V) :- value(V).
progress(M) :- step(M,_).

pred steps(exp,exp).
steps(M,M).
steps(M,P) :- step(M,N), steps(N,P).

unitTy : ty.                    
==>    : ty -> ty -> ty.         
infixr ==> 5.
**     : ty -> ty -> ty.         
infixl ** 6.

type ctx = [(id,ty)].

pred wf_ctx(ctx).
wf_ctx([]).
wf_ctx([(X,T)|G]) :- X # G, wf_ctx(G).

pred tc(ctx,tm,ty).
tc([(V,T)|G],var(V), T).
tc([_| G],var(V),T)      :- tc(G,var(V),T).
tc(G,lam(x\E),T1 ==> T2) :- x # G, tc ([(x,T1)|G], E, T2).
tc(G,app(M,N),T)         :- tc(G,M,T0 ==> T), 
                            tc(G,N,T0).
tc(G,pair(M,N),T1 ** T2) :- tc(G,M,T1), tc(G,N,T2).
tc(G,fst(M),T1)          :- tc(G,M,T1 ** T2).
tc(G,snd(M),T2)          :- tc(G,M,T1 ** T2).
tc(G,unit,unitTy).
\end{verbatim}
% \end{small}

%%% Local Variables:
%%% mode: latex
%%% TeX-master: "tplp"
%%% End:
%\input{code}
% \input{omega.tex}
% \input{exclu.tex}

\end{document}